\newcommand{\nat}{\mathbb N}
\newcommand{\real}{\mathbb R}
\newcommand{\F}{\mathcal{F}}
\newcommand{\M}{\mathcal{M}}
\newcommand{\R}{\mathcal{R}}
\newcommand{\Ri}{\mathcal{R}^\infty}
\newcommand{\D}{\mathcal{D}}
\newcommand{\Q}{\mathcal{Q}}
\newcommand{\A}{\mathcal{A}}
\newcommand{\T}{\mathbb{T}}
\newcommand{\X}{\mathcal{X}}
\newcommand{\ma}{\M(P)}
\newcommand{\me}{\M^e(P)}
\newcommand{\mal}{\M_{\text{loc}}(P)}
\newcommand{\mel}{\M_{\text{loc}}^e(P)}
\newcommand{\map}{\M(\bar P)}
\newcommand{\mep}{\M^e(\bar P)}
\newcommand{\LT}{L^{\infty}}
\newcommand{\Lt}{L^{\infty}_t}
\newcommand{\LTs}{L^{\infty}}
\newcommand{\Lts}{L^{\infty}_t}
\newcommand{\LTp}{\bar{L}^\infty}
\newcommand{\Ltp}{\bar{L}^\infty_t}
\newcommand{\LTf}{L^{\infty}(\bar\Omega, \bar\F,\bar{P})}
\newcommand{\Ltf}{L^{\infty}(\bar\Omega, \bar\F_t,\bar{P})}
\newcommand{\lk}{\left\{\,}
\newcommand{\rk}{\right\}}
\newcommand{\mk}{\;\big|\;}
\newcommand{\lp}{\left[}
\newcommand{\rp}{\right]}
\DeclareMathOperator*{\es}{ess\,sup}
\DeclareMathOperator*{\qes}{Q\text{-}ess\,sup}
\DeclareMathOperator*{\qesp}{\bar{Q}\text{-}ess\,sup}
\newcommand{\pk}{,\ldots ,}
\newcommand{\pke}{,\ldots }
\newcommand{\zte}{_{t\in\T}}
\newcommand{\ztu}{_{t\in\T\cap\nat_0}}
\newcommand{\zt}{_{t\in\T\cap\nat_0}}
\newcommand{\rt}{\rho_t}
\newcommand{\rtf}{$(\rt)\ztu$ }
\newcommand{\qf}{Q\mbox{-a.s.}}
\newcommand{\ctse}{continuous from above}
\newtheorem{theorem}{Theorem}
\newtheorem{assumption}[theorem]{Assumption}
\newtheorem{corollary}[theorem]{Corollary}
\newtheorem{definition}[theorem]{Definition}
\newtheorem{lemma}[theorem]{Lemma}
\newtheorem{proposition}[theorem]{Proposition}
\newtheorem{remark}[theorem]{Remark}
\newtheorem{example}[theorem]{Example}
\title{Risk assessment for uncertain cash flows:\\ Model ambiguity, discounting ambiguity, and the role of bubbles}
\author{Beatrice Acciaio\thanks{Department of Economy, Finance and Statistics, University of Perugia, Via A. Pascoli 20, 06123 Perugia, Italy. Email: beatrice.acciaio@stat.unipg.it. Financial support from the European Science Foundation (ESF) ``Advanced Mathematical Methods for Finance" (AMaMeF) under the exchange grant 2281, and hospitality of Vienna University of Technology are gratefully acknowledged.} \and Hans F{\"o}llmer\thanks{Humboldt-Universit\"{a}t zu Berlin, Institut f\"{u}r Mathematik, Unter den Linden 6, 10099 Berlin, Germany. Email: foellmer@math.hu-berlin.de.} \and Irina Penner\thanks{Humboldt-Universit\"{a}t zu Berlin, Institut f\"{u}r Mathematik, Unter den Linden 6, 10099 Berlin, Germany. Email: penner@math.hu-berlin.de. Supported by the DFG Research Center \textsc{Matheon} ``Mathematics for key technologies''. Financial support from the European Science Foundation (ESF) ``Advanced Mathematical Methods for Finance" (AMaMeF) under the short visit grant 2854 is gratefully acknowledged.}}
\begin{document}
\maketitle
\begin{abstract}
We study the risk assessment of uncertain cash flows in terms of dynamic convex risk measures for processes as introduced in Cheridito, Delbaen, and Kupper \cite{cdk6}.  These risk measures take into account not only the amounts but also the timing of a cash flow. We discuss their robust representation in terms of suitably penalized probability measures on the optional $\sigma$-field. This yields an explicit analysis both of model and discounting ambiguity. We focus on supermartingale criteria for different notions of time consistency. In particular we show how ``bubbles'' may appear in the dynamic penalization, and how they cause a breakdown of asymptotic safety of the risk assessment procedure. 
\end{abstract}

\section{Introduction}
The classical assessment of an uncertain cash flow takes the sum of the discounted future payments and computes its expectation with respect to a given probability measure. Both the probabilistic model and the discounting factors are assumed to be known. In reality, however, one is usually confronted both with model uncertainty and with uncertainty about the time value of money. The purpose of this paper is to deal with this problem by using concepts and methods from the theory of convex risk measures.

In a situation where financial positions are described by random variables on some probability space, a convex risk measure can usually be represented as the worst expected loss over a class of suitably penalized probabilistic models; see Artzner, Delbaen, Eber, and Heath~\cite{adeh97,adeh99}, Delbaen~\cite{d0, d2} for the coherent case, and F\"ollmer and Schied~\cite{fs2,fs4}, Frittelli and Rosazza Gianin~\cite{fr2} for the general convex case. This can be seen as a robust method which deals explicitly with the problem of model uncertainty. In the dynamical setting of a filtered probability space, the risk assessment at a given time should depend on the available information. This is specified by a dynamic risk measure, i.e., by a sequence $(\rt)$ of conditional convex risk measures adapted to the filtration. On the level of random variables, and under an additional requirement of time consistency, the structure of such dynamic risk measures is now well understood; cf., e.g, \cite{adehk7,rse5,dt5,d6,Weber,ks5,bn6,fp6,ck6,tu8, ipen7, dpr10, ap9}, and references therein.

There is also a growing literature on dynamic risk measures applied to cash flows that are described as adapted stochastic processes on the given filtered probability space; cf., e.g., \cite{adehk7, rie4, cdk4, cdk5, cdk6, ck6, fs6, jr8}.
In this context, not only the amount of a payment matters, but also its timing. In particular, the risk is reduced by having positive payments earlier and negative ones later. This is expressed by the property of cash subadditivity, which was introduced by El Karoui and Ravanelli~\cite{er08} in the context of risk measures for random variables in order to account for discounting ambiguity. Convex risk measures for processes have that property, and so they provide a natural framework to capture both model uncertainty and uncertainty about the time value of money.

In this paper we study dynamic convex risk measures for bounded adapted processes, as introduced in \cite{cdk6}. Any such process can be viewed as a bounded measurable function on the product space $\bar{\Omega}=\Omega\times\T$ endowed with the optional $\sigma$-field. It is thus natural to use results from the theory of risk measures for random variables and to apply them on product space. This idea already appears in \cite{adehk7} in a static setting. Here we use it for dynamic risk measures, and we take a more probabilistic approach. This involves a careful study of absolutely continuous probability measures $\bar Q$ on the optional $\sigma$-field. In particular, we derive a decomposition $\bar Q=Q\otimes D$, where $Q$ is a locally absolutely continuous probability measure on the original space, and $D$ is a predictable discounting process. The probabilistic approach has two advantages. In the first place, it allows us to make explicit the joint role of model uncertainty, as expressed by the measures $Q$, and of discounting uncertainty, as described by the discounting processes $D$, in the robust representation of conditional risk measures. Moreover, it is crucial for our analysis of the supermartingale aspects of time consistency.

A key issue in the dynamical framework is time consistency of the risk assessment; see \cite{adehk7,d6,dt5,ks5,cdk6,bn6,fp6,ck6,dpr10}, and references therein. We characterize time consistency by supermartingale properties of the discounted penalty and risk processes, in analogy to various results for random variables from \cite{adehk7,d6, bn6, fp6, ipen7, bn8}. These characterizations allow us to apply martingale arguments to prove maximal inequalities and convergence results for the risk assessment procedure. In particular, we show that the appearance of a martingale component in the Riesz  decomposition of the discounted penalty process  amounts to a breakdown of asymptotic safety. Such a martingale can be seen as a ``bubble", which appears on the top of the ``fundamental" penalization and thus causes an excessive neglect of the model under consideration.

The paper is organized as follows. In Section~\ref{cond} we clarify the probabilistic structure of conditional convex risk measures for processes. To this end, we introduce the appropriate product space in Subsection~\ref{prodspace} and state a decomposition theorem for measures on the optional $\sigma$-field; its proof is given in Appendix~\ref{proof}. In Subsection~\ref{correspsec} risk measures for processes are identified with risk measures for random variables on the product space. This allows us to obtain a robust representation of risk measure for processes in Subsection~\ref{robsec} which involves both model ambiguity and discounting ambiguity.
Section~\ref{tc} characterizes time consistency of dynamic risk measures, with special emphasis on the corresponding supermartingale properties. We first focus on the strong notion of time consistency. In Subsection~\ref{stc} we state several equivalent criteria. They are used in Subsection~\ref{rieszsec} to derive the Doob and the Riesz decomposition of the penalty processes. In Subsection~\ref{asympsec} we discuss asymptotic properties such as asymptotic safety and asymptotic precision, and we relate them to the appearance of ``bubbles" in the Riesz decomposition. Subsection~\ref{maxsec} states a maximal inequality for the excess of the capital requirement over the penalized expected loss computed for a specific model.
The coherent case is discussed in Subsection~\ref{cohsec}, and some weaker notions of time consistency are introduced and characterized in Subsection~\ref{weaksec}. In Section~\ref{cashadsec} we discuss cash subadditivity of risk measures for processes, and we characterize their calibration with respect to some num{\'e}raire. If a time consistent dynamic risk measure is calibrated to a term structure specified by the prices of zero coupon bonds, then discounting ambiguity is completely resolved, and we are only left with model ambiguity. In Section~\ref{examplesec} our analysis is illustrated by some examples, including entropic risk measures and variants of Average Value at Risk for processes.

\section{Preliminaries}\label{prem}
In this paper we consider a discrete-time setting with time horizon $T\in\nat\cup\{\infty\}$. We denote by $\T$ the set of time points, i.e., $\T:=\{0\pk T\}$ if $T<\infty$, and in case $T=\infty$ we distinguish between the two  cases
$\T:=\nat_0$ and $\T:=\nat_0\cup\{\infty\}$.
We use the notation $\T_t:=\{s\in\T\,|\,s\ge t\}$ for $t\in\T$.

We fix a filtered probability space 
$(\Omega, \F, (\F_t)\zt, P)$, with $\F_0=\{\emptyset, \Omega\}$,  and $\displaystyle{\F_\infty:=\sigma(\cup_{t\in\nat_0}\F_t)}$ for $T=\infty$. For $t\in\T$, we use the notation 
\[
L^\infty_t:=L^{\infty}(\Omega, \F_t,P),\qquad L_{t,+}^{\infty}:=\{X\in L^{\infty}_t\mk X\geq 0\},
\]
and  $L^{\infty}:=L^{\infty}(\Omega, \F_T,P)$. All equalities and inequalities between random variables and between sets are understood to hold $P$-almost surely, unless stated otherwise. 

We denote by $\ma$ (resp.\ by $\mal$) the set of all probability measures $Q$ on $(\Omega, \F)$ which are absolutely continuous with respect to $P$ (resp.\ locally absolutely continuous with respect to $P$ in the sense that $Q\ll P$ on $\F_t$ for each $t\in\T\cap\nat_0$), and by $\me$ (resp.\ by $\mel$) the set of all probability measures on $(\Omega, \F)$ which are equivalent (resp.\ locally equivalent) to $P$. Note  that $\ma$ coincides with $\mal$ if $T<\infty$.

Let $\mathcal{R}^{\infty}$ denote the space of adapted stochastic processes $X=(X_t)\zte$ on $(\Omega, \F, (\F_t)\zte, P)$ such that
\[
\|X\|_{\infty}:=\inf \left\{x\in\real \mk \sup\zte|X_t|\leq x\right\}<\infty.
\]
For $T=\infty$ we also consider the subspace
\[
 \X^\infty:=\lk X\in\R^\infty\mk \exists\,X_\infty=\lim_{t\to\infty}X_t\;P\text{-a.s.}\rk.
\]
For $0\leq t\leq s\leq T$, we define the projection $\pi_{t,s}:\mathcal{R}^{\infty}\to \mathcal{R}^{\infty}$ as 
\[
\pi_{t,s}(X)_r=1_{\{t\leq r\}}X_{r\wedge s},\quad r\in\T,
\]
and use the notation $\mathcal{R}_{t,s}^{\infty}:=\pi_{t,s}(\mathcal{R}^{\infty})$ and $\mathcal{R}_t^{\infty}:=\pi_{t,T}(\mathcal{R}^{\infty})$. The spaces $\X_{t,s}^\infty$ and $\X_t^\infty$ are defined accordingly.

We interpret a process $X\in\R^\infty$  as a \emph{cumulated cash flow}, as explained in Remark~\ref{rem:cashflow} and in Example~\ref{cashflow}, or as a value process, which might model the evolution of some financial value such as the market value of a firm's equity or of an investment portfolio. 
\begin{remark}\label{rem:cashflow}
 An adapted cash flow $C=(C_t)_{t\in\T\cap\nat_0}$ yielding an uncertain amount $C_t\in\Lt$ at time $t$ induces a cumulated cash flow $X=(X_t)_{t\in\T\cap\nat_0}$ with
\[
 X_t=\sum_{s=0}^tC_t.
\]
If $T<\infty$, or if $T=\infty$ and $\sum_{t\in\T\cap\nat_0}\|C_t\|_\infty<\infty$, the process $X$ belongs to $\Ri$, and even to $\X^\infty$, with $X_\infty:=\sum_{s=0}^\infty C_t$.  Conversely, each process $X\in\Ri$ induces an adapted cash flow
\[
 C_t:=\Delta X_t:=X_t-X_{t-1},\qquad t\in\T\cap\nat_0,
\]
where we use the convention $X_{-1}:=0$. 
\end{remark}

\begin{example}\label{cashflow}
Assume that there is a money market account $(B_t)_{t\in\T\cap\nat_0}$ of the form
\[
B_t=\prod_{s=1}^t(1+r_s) 
\]
with some adapted (or even predictable) process $(r_t)\zt$ of nonnegative short rates. For a given (undiscounted) adapted cash flow $(\tilde C_t)\zt\in\Ri$ consider the discounted cash flow $C=(C_t)_{t\in\T\cap\nat_0}$ defined by $C_t=B_t^{-1}\tilde C_t$. If $T<\infty$, or if $T=\infty$ and the short rates are bounded away from zero by some constant $\delta>0$, then the discounted cash flow $C$ belongs to $\Ri$, and for $T=\infty$ even to $\X^\infty$, since 
\[
 \sum_{t=0}^\infty\|C_t\|_\infty\le\frac{1}{\delta}\|\tilde C\|_\infty<\infty.
\]

\end{example}

\section{Conditional risk measures}\label{cond}
At each time the risk of a future cumulative cash flow will be assessed by a conditional risk measure based on the information available at that time. The following definition  was introduced in \cite{cdk6}.

\begin{definition}\label{defrmp}
A map $\rt\,:\,\mathcal{R}_t^{\infty}\,\rightarrow\,\Lt$ for $t\in\T\cap\nat_0$ is called a \emph{conditional convex risk measure (for processes)} if it satisfies the following properties for all $X,Y\in\mathcal{R}_t^{\infty}$
\begin{itemize}
\item
Conditional cash invariance: for all $m\in\Lt$,
\[\rho_t(X+m1_{\T_t})=\rho_t(X)-m;\]
\item
Monotonicity: $\rt(X)\ge\rt(Y)$ if $X\le Y$ componentwise;
\item
Conditional convexity: for all $\lambda\in\Lt$ with $0\le \lambda\le 1$,
\[
\rt(\lambda X+(1-\lambda)Y)\le\lambda\rt(X)+(1-\lambda)\rt(Y);
\]
\item
{Normalization}: $\rt(0)=0$.
\end{itemize}
A conditional convex risk measure is called a \emph{conditional coherent risk measure (for processes)} if it has in addition 
the following property:
\begin{itemize}
\item
{Conditional positive homogeneity}: for all $\lambda\in\Lt$ with $\lambda\ge0$,
\[
\rt(\lambda X)=\lambda\rt(X).
\] 
\end{itemize}
A sequence $(\rt)\ztu$ is called a \emph{dynamic convex risk measure (for processes)} if, for each $t$, $\rt\,:\,\mathcal{R}_t^{\infty}\,\rightarrow\,\Lt$ is a conditional convex risk measure (for processes).
\end{definition}

Definition~\ref{defrmp} is analogous to the definition of risk measures for random variables; cf.\ Definition \ref{defrm}. Note, however, that conditional cash invariance in the context of processes takes  into account the timing of the cash payment; the consequences will be discussed in more detail in Section~\ref{cashadsec}.

\subsection{Optional filtration and predictable discounting}\label{prodspace}
It was already noted in Artzner et al.\ \cite{adehk7} that static risk measures for processes can be viewed as risk measures for random variables on an appropriate product space. In this section we extend this idea to the dynamic setting, and we focus on the probabilistic structure of the resulting robust representation in terms of probability measures on the optional $\sigma$-field. 

Consider the product space $(\bar\Omega, \bar\F,\bar{P})$ defined by
\[
\bar\Omega=\Omega\times\T,\quad \bar\F=\sigma(\{A_t\times\{t\}\mk A_t\in\F_t,\, t\in\T),\quad\bar{P}=P\otimes\mu,
\]
where $\mu=(\mu_t)\zte$ is some adapted reference process such that $\sum_{t\in\T}\mu_t=1$ and $\mu_t>0$ $\forall t\in\T$, and where
\[
E_{P\otimes\mu}[X]:=E_P\lp \sum_{t\in\T}X_t\mu_t\rp
\]
for any bounded measurable function $X$ on $(\bar\Omega, \bar\F)$.

Note that $\bar{\F}$ coincides with the \emph{optional} $\sigma$-field generated by all adapted processes.  
Every adapted process can be identified with a random variable on $(\bar{\Omega},\bar{\F}, \bar P)$, and in particular we have
\[\mathcal{R}^{\infty}=\LTp:=\LTf.\]
We also introduce the \emph{optional filtration} $(\bar\F_t)\zte$ on $(\bar\Omega,\bar\F)$ given by
\[
\bar\F_t=\sigma\left(\{A_j\times\{j\}, A_t\times\T_t\mk  A_j\in\F_j,\,j<t,\, A_t\in\F_t\}\right),\quad t\in\T.
\]
A random variable $X=(X_s)_{s\in\T}$ on $(\bar{\Omega},\bar{\F}, \bar P)$ is $\bar\F_t$-measurable if and only if $X_s$ is $\F_s$-measurable for all $s=0\pk t$ and $X_s=X_t$ $\forall s>t$. In particular, 
\[
 \mathcal{R}^{\infty}_{0,t}=\Ltp:=\Ltf.
\] 
The set $\mathcal{R}^{\infty}_{0,0}$ of all constant processes will be identified with $\real$.

For $T=\infty$ we will use the Lebesgue decomposition of a measure $Q\in\mal$ with respect to $P$. Let $M=(M_t)_{t\in\nat_0}$ denote the density process of $Q$ with respect to $P$. The limit  $M_\infty:=\lim_{t\to\infty}M_t$ exists $P$-a.s., since $M$ is a nonnegative $P$-martingale. By \cite[Theorem VII.6.1]{shi96} $M_\infty$ exists also $Q$-a.s., and $Q$ admits the Lebesgue decomposition 
\begin{equation}\label{leb}
Q[A]=E_P[I_AM_\infty]+Q[A\cap\{M_\infty=\infty\}],\quad A\in\F_\infty 
\end{equation}
into the absolutely continuous and the singular part with respect to $P$ on $(\Omega,\F_\infty)$. 

For a measure $Q\in\mal$ we introduce the set $\Gamma(Q)$ of \emph{optional random measures} $\gamma=(\gamma_t)\zte$ on $\T$ which are normalized with respect to $Q$. More precisely, $\gamma\in\Gamma(Q)$ is a nonnegative adapted process, such that 
\[
 \sum_{t\in\T}\gamma_t=1\quad Q\text{-a.s.},
\]
with the additional property that  
\[
\gamma_\infty=0\quad Q\text{-a.s.\ on}\quad \{M_\infty=\infty\},\quad\text{if}\quad \T=\nat_0\cup\{\infty\}.
\]

We also consider the following set $\D(Q)$ of \emph{predictable discounting processes}: $D=(D_t)_{t\in\T}\in\D(Q)$ is a predictable non-increasing process with $D_0=1$, and $D_\infty=\lim_{t\to\infty}D_t$ $Q$-a.s.\ for $T=\infty$, where
\[
D_\infty=0\quad\qf\quad\text{for}\quad \T=\nat_0,
\]
and
\[
D_\infty=0\quad\qf\quad\text{on}\quad\{M_\infty=\infty\}\quad\text{for}\quad \T=\nat_0\cup\{\infty\}.
\]
For $T<\infty$ we define $D_{T+1}:=0$.

\begin{lemma}\label{intbypartslem}
For any probability measure $Q\in\mal$, the set $\Gamma(Q)$ can be identified with $\D(Q)$.
More precisely, to each $\gamma$ in $\Gamma(Q)$ we can associate a process $D\in\D(Q)$  given by
\begin{equation}\label{gammatodisc}
D_t:=1-\sum_{s=0}^{t-1}\gamma_s,\quad t\in\T\cap\nat_0,\quad\text{and}\quad D_\infty:=\gamma_\infty\quad\text{for}\quad \T=\nat_0\cup\{\infty\}.
\end{equation}
In particular we have
\begin{equation}\label{gammatodisc2}
D_t=\sum_{s\in\T_t}\gamma_s\quad \qf\quad\forall t\in\T.
\end{equation}

Conversely, every process $D\in\D(Q)$ defines an optional random measure $\gamma\in\Gamma(Q)$ via
\begin{equation}\label{disctogamma}
\gamma_t:=D_t-D_{t+1},\quad t\in\T\cap\nat_0,\quad\text{and}\quad \gamma_\infty:=D_\infty\quad\text{for}\quad \T=\nat_0\cup\{\infty\}.
\end{equation}

Moreover, for any pair $\gamma\in\Gamma(Q)$ and $D\in\D(Q)$ related to each other via \eqref{gammatodisc2} and \eqref{disctogamma}, the ``integration by parts'' formula
\begin{equation}\label{intbyparts}
\sum_{s\in\T_t}\gamma_sX_s=\sum_{s=t}^T D_s(X_s-X_{s-1})\quad \qf,\quad t\in\T,
\end{equation}
holds for any $X\in\R_t^{\infty}$ if $T<\infty$ or if $\T=\nat_0$, and for $X\in\X^\infty_t$ if $\T=\nat_0\cup\{\infty\}$.
\end{lemma}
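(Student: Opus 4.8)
The plan is to read the correspondence as the elementary bijection between a ``mass distribution'' $\gamma=(\gamma_s)_{s\in\T}$ along the time axis and its ``tail function'' $D_t=\sum_{s\in\T_t}\gamma_s$, and then to check that the measurability, monotonicity and degeneracy-at-the-horizon conditions in the definitions of $\Gamma(Q)$ and $\D(Q)$ match up under this bijection, and that the Abel summation underlying \eqref{intbyparts} converges $Q$-a.s. First I would verify that \eqref{gammatodisc} produces an element of $\D(Q)$. Given $\gamma\in\Gamma(Q)$, the variable $D_t=1-\sum_{s=0}^{t-1}\gamma_s$ is $\F_{t-1}$-measurable because $\gamma_s$ is $\F_s$-measurable and $\F_s\subseteq\F_{t-1}$ for $s\le t-1$, so $D$ is predictable; $D_0=1$ is the empty sum, and $D_t-D_{t+1}=\gamma_t\ge0$ together with $\gamma\ge0$ forces $D$ to be non-increasing and $[0,1]$-valued. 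Since $\sum_{s\in\T}\gamma_s=1$ $Q$-a.s., subtracting a partial sum yields $D_t=\sum_{s\in\T_t}\gamma_s$ $Q$-a.s., which is \eqref{gammatodisc2}; letting $t\to\infty$ identifies $\lim_tD_t$ with $0$ if $\T=\nat_0$ and with $\gamma_\infty$ if $\T=\nat_0\cup\{\infty\}$, so the value $D_\infty:=\gamma_\infty$ prescribed in \eqref{gammatodisc} agrees with $\lim_tD_t$, and the requirement in the definition of $\Gamma(Q)$ that $\gamma_\infty=0$ $Q$-a.s.\ on $\{M_\infty=\infty\}$ is exactly the one required of $D_\infty$. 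Hence $D\in\D(Q)$.

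For the converse I would start from $D\in\D(Q)$ and set $\gamma$ via \eqref{disctogamma}. Since $D$ is $[0,1]$-valued and non-increasing, $\gamma_t=D_t-D_{t+1}\ge0$ and $\gamma_\infty=D_\infty\ge0$; moreover $D_t$ is $\F_{t-1}$-measurable and $D_{t+1}$ is $\F_t$-measurable, so $\gamma$ is adapted, with $\gamma_\infty=D_\infty$ being $\F_\infty$-measurable. The telescoping identity $\sum_{s=0}^{N}\gamma_s=D_0-D_{N+1}=1-D_{N+1}$, evaluated at $N=T$ with $D_{T+1}=0$ if $T<\infty$, and in the limit $N\to\infty$ (using $D_{N+1}\to D_\infty$ together with $D_\infty=0$ if $\T=\nat_0$, or adding the term $\gamma_\infty=D_\infty$ if $\T=\nat_0\cup\{\infty\}$) if $T=\infty$, gives $\sum_{s\in\T}\gamma_s=1$ $Q$-a.s.; and $\gamma_\infty=D_\infty=0$ $Q$-a.s.\ on $\{M_\infty=\infty\}$. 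Thus $\gamma\in\Gamma(Q)$. That the two maps are mutually inverse is immediate: $D_t-D_{t+1}$ returns $\gamma_t$, while $1-\sum_{s=0}^{t-1}(D_s-D_{s+1})=1-(D_0-D_t)=D_t$ by telescoping, and the values at $\infty$ agree by construction.

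For the ``integration by parts'' formula I would use plain Abel summation. For $X\in\R_t^\infty$, so that $X_{t-1}=0$, and any finite $N$ with $t\le N\le T$,
\[
\sum_{s=t}^{N}\gamma_sX_s=\sum_{s=t}^{N}(D_s-D_{s+1})X_s=\sum_{s=t}^{N}D_s(X_s-X_{s-1})-D_{N+1}X_N .
\]
If $T<\infty$, taking $N=T$ and using $D_{T+1}=0$ gives \eqref{intbyparts} at once. If $T=\infty$, I let $N\to\infty$, working with a uniformly bounded representative of $X$: the left-hand side converges $Q$-a.s.\ by absolute convergence, as $\sum_{s\in\T}\gamma_s=1$; and $D_{N+1}X_N\to0$ $Q$-a.s.\ when $\T=\nat_0$ (since $D_{N+1}\to D_\infty=0$), while when $\T=\nat_0\cup\{\infty\}$ one gets $D_{N+1}X_N\to D_\infty X_\infty=\gamma_\infty X_\infty$ $Q$-a.s.\ on $\{M_\infty<\infty\}$ — here $X\in\X_t^\infty$ is used, so that $X_\infty=\lim_tX_t$ exists $P$-a.s.\ and hence $Q$-a.s.\ on $\{M_\infty<\infty\}$ by \eqref{leb} — and $D_{N+1}X_N\to0=\gamma_\infty X_\infty$ $Q$-a.s.\ on $\{M_\infty=\infty\}$, where $D_\infty=\gamma_\infty=0$. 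In each case the limiting term is precisely the $s=\infty$ summand on the left-hand side of \eqref{intbyparts}, so rearranging the displayed identity produces \eqref{intbyparts}.

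The only genuinely delicate point is the bookkeeping at the horizon $\infty$ in the case $\T=\nat_0\cup\{\infty\}$: one must keep the sets $\{M_\infty<\infty\}$ and $\{M_\infty=\infty\}$ apart, invoking that $M_t\to M_\infty$ $Q$-a.s.\ (recalled above from \cite{shi96}) so that these sets and the Lebesgue decomposition \eqref{leb} make sense, and then use that on $\{M_\infty=\infty\}$ both coefficients $\gamma_\infty$ and $D_\infty$ vanish $Q$-a.s., so that the terminal value $X_\infty$, which need not exist there, never actually contributes. Everything else is routine bookkeeping, and all identities hold only $Q$-a.s.\ because $\sum_{t\in\T}\gamma_t=1$ holds only $Q$-a.s.
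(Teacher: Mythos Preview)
Your proof is correct and follows essentially the same route as the paper: the correspondence $D_t=\sum_{s\in\T_t}\gamma_s$ is established by direct verification of the defining properties, and the integration-by-parts formula is obtained from the finite Abel identity $\sum_{s=t}^{N}\gamma_sX_s=\sum_{s=t}^{N}D_s(X_s-X_{s-1})-D_{N+1}X_N$ by passing to the limit. The paper declares the bijection ``obvious'' and records the Abel identity as an auxiliary formula \eqref{ibp} before taking limits; you simply spell out the details, and your more careful bookkeeping on $\{M_\infty=\infty\}$ versus $\{M_\infty<\infty\}$ makes explicit what the paper compresses into the phrase ``$D_t\searrow0$ $Q$-a.s.\ on the singular part.''
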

\begin{proof}
It is obvious that the process $D$ defined by \eqref{gammatodisc} belongs to $\D(Q)$ and satisfies \eqref{gammatodisc2}, and that $\gamma$ defined by \eqref{disctogamma} belongs to $\Gamma(Q)$. To prove \eqref{intbyparts}, note that
\begin{equation}\label{ibp}
 \sum_{s=0}^t\gamma_sX_s=\sum_{s=0}^tD_s(X_s-X_{s-1})-D_{t+1}X_t
\end{equation}
for all $t\in\T\cap\nat_0$.
Thus \eqref{intbyparts} is obvious for $T<\infty$, and it also holds if $\T=\nat_0$ for all $X\in\Ri_t$, since $X$ is bounded and $D_t\searrow0$ $Q$-a.s.. For $\T=\nat_0\cup\{\infty\}$ and for any $X\in\X_t^\infty$, the limit $D_\infty X_\infty=\lim_{t\to\infty}D_{t+1}X_t$ exists $Q$-a.s., since $D_t\searrow0$ $Q$-a.s. on the singular part of $Q$ with respect to $P$, and so \eqref{intbyparts} follows again from \eqref{ibp}.
\end{proof}

From now on we use the following assumption.

\begin{assumption}\label{assfil}
In the case $T=\infty$, we assume that for each $t\in\T\cap\nat_0$ the $\sigma$-field $\F_t$ is $\sigma$-isomorphic to the Borel $\sigma$-field on some complete separable metric space, and that $\cap_nA_n\neq\emptyset$ for any decreasing sequence $(A_n)_{n\in\nat_0}$ such that $A_n$ is an atom of $\F_{n}$.
\end{assumption}
We denote by $\map$ the set of all probability measures on $(\bar\Omega, \bar\F)$ which are absolutely continuous with respect to $\bar{P}$. The next theorem shows that each probability measure $\bar{Q}$ in $\map$ admits a decomposition $\bar{Q}(d\omega,dt)=Q(dw)\otimes \gamma(w,dt)$ for some probability measure $Q$ on $(\Omega,\F_T)$ and some optional random measure $\gamma$ on $\T$ such that $Q\in\mal$ and $\gamma\in\Gamma(Q)$.

\begin{theorem}\label{propq}
For any probability measure $\bar Q\in\map$ there exist a probability measure $Q\in\mal$ and an optional random measure $\gamma\in\Gamma(Q)$ (resp.\ a predictable discounting factor $D\in\D(Q)$) 
 such that
\begin{align}
E_{\bar{Q}}[X]&=E_Q\lp\sum_{t\in\T}\gamma_tX_t\rp\label{expqg}\\
&=E_Q\lp\sum_{t=0}^TD_t(X_t-X_{t-1})\rp, \label{expqd}
\end{align}
where \eqref{expqg} holds for all $X\in\Ri$, whereas \eqref{expqd} holds for all $X\in\Ri$ if $T<\infty$ or if $\T=\nat_0$,  and only for $X\in\X^\infty$ if $\T=\nat_0\cup\{\infty\}$.

Conversely, any $Q\in\mal$ and any $\gamma\in\Gamma(Q)$ (resp.\ any $D\in\D(Q)$) define a probability measure $\bar Q\in\map$ such that \eqref{expqg} and \eqref{expqd} hold.

We write
\[
\bar{Q}=Q\otimes\gamma= Q\otimes D
\]
to denote the decomposition of $Q$ in the sense of \eqref{expqg} and \eqref{expqd}.
\end{theorem}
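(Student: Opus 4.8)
The plan is to pass through a multiplicative (Doob) decomposition of a suitable nonnegative $P$-supermartingale built from the density of $\bar Q$, and then to promote the resulting martingale to a measure in $\mal$ with the help of Assumption~\ref{assfil}. For the forward direction, let $\bar Z=d\bar Q/d\bar P$, read as a nonnegative adapted process $(\bar Z_t)_{t\in\T}$ with $E_P[\sum_{t\in\T}\bar Z_t\mu_t]=1$. Put $R_t:=\sum_{s\in\T_t}\bar Z_s\mu_s$ and $h_t:=E_P[R_t\mid\F_t]$ for $t\in\T\cap\nat_0$. Since $R_t\ge R_{t+1}\ge0$, the process $(h_t)$ is a nonnegative $P$-supermartingale with $h_0=1$, and conditioning $R_t=\bar Z_t\mu_t+R_{t+1}$ on $\F_t$ gives the recursion $h_t=\bar Z_t\mu_t+E_P[h_{t+1}\mid\F_t]$. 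The first key step is the multiplicative decomposition $h_t=D_tM_t$ with $D=(D_t)$ predictable, non-increasing, $D_0=1$, and $M=(M_t)$ a nonnegative $P$-martingale, $M_0=1$, obtained from the recursion $D_t/D_{t-1}:=E_P[h_t\mid\F_{t-1}]/h_{t-1}$ (interpreted suitably on $\{h_{t-1}=0\}$). Substituting $h_t=D_tM_t$ into the recursion and using predictability of $D_{t+1}$ together with the martingale property of $M$ yields $(D_t-D_{t+1})M_t=\bar Z_t\mu_t$. One then defines $\gamma_t:=D_t-D_{t+1}$ for $t\in\T\cap\nat_0$, and $\gamma_\infty:=D_\infty:=\lim_tD_t$ when $\T=\nat_0\cup\{\infty\}$; this $\gamma$ is adapted and nonnegative.

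The second step, which I expect to be the main obstacle, is to promote the martingale $M$ to a probability measure $Q$ on $(\Omega,\F_T)$ (resp.\ $(\Omega,\F_\infty)$ if $T=\infty$) with $Q\ll P$ on each $\F_t$ and density process $M$, i.e.\ $Q\in\mal$. For $T<\infty$ this is routine; for $T=\infty$ it is precisely here that Assumption~\ref{assfil} enters, since a density process need not define a countably additive measure on $\F_\infty$ without a regularity condition on the filtration, and one has to check the hypotheses of the corresponding extension theorem. Granting this, $M$ is the density process of $Q$ and $M_\infty=\lim_tM_t$ is the Lebesgue density in~\eqref{leb}. It then remains to verify $\gamma\in\Gamma(Q)$, equivalently $D\in\D(Q)$; the only non-automatic points concern $D_\infty$. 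Here $E_Q[D_t]=E_P[D_tM_t]=E_P[h_t]=E_P[R_t]$, which tends to $0$ if $\T=\nat_0$ and to $E_P[\bar Z_\infty\mu_\infty]$ if $\T=\nat_0\cup\{\infty\}$, because $R_t$ is the tail of an $L^1(P)$-convergent series; together with $D_t\downarrow D_\infty$ (monotone convergence), the $P$-a.s.\ identity $D_\infty M_\infty=\lim_th_t$, uniform integrability of $(h_t)$, and the Lebesgue decomposition~\eqref{leb}, this forces $D_\infty=0$ $Q$-a.s.\ when $\T=\nat_0$, and $D_\infty=0$ $Q$-a.s.\ on $\{M_\infty=\infty\}$ when $\T=\nat_0\cup\{\infty\}$, which is the defining property of $\Gamma(Q)$. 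Finally, one proves~\eqref{expqg} and~\eqref{expqd}: first for processes $Y1_{\T_t}$ with $Y\in\Lt$, using $\gamma_tM_t=\bar Z_t\mu_t$ and $\sum_{s\in\T_t}\gamma_s=D_t$ $Q$-a.s.\ (Lemma~\ref{intbypartslem}); then for general $X\in\Ri$ (resp.\ $X\in\X^\infty$ if $\T=\nat_0\cup\{\infty\}$) by approximating $X$ with $\pi_{0,n}(X)$, applying dominated convergence under $\bar Q$ and under $Q\otimes\gamma$, and invoking the integration-by-parts identity~\eqref{intbyparts} to pass between the $\gamma$- and $D$-forms.

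For the converse, given $Q\in\mal$ and $\gamma\in\Gamma(Q)$, define $\bar Q(B):=E_Q[\sum_{t\in\T}\gamma_t1_{B_t}]$ for $B\in\bar\F$, where $B_t:=\{\omega:(\omega,t)\in B\}\in\F_t$ by definition of the optional $\sigma$-field. Countable additivity follows from monotone convergence, and $\bar Q(\bar\Omega)=E_Q[\sum_{t\in\T}\gamma_t]=1$ since $\gamma\in\Gamma(Q)$, so $\bar Q$ is a probability measure on $(\bar\Omega,\bar\F)$ satisfying~\eqref{expqg}, hence also~\eqref{expqd} by~\eqref{intbyparts}. For absolute continuity, suppose $\bar P(B)=0$; since $\mu_t>0$ this gives $P(B_t)=0$ for every $t$. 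For $t\in\T\cap\nat_0$ local absolute continuity yields $Q(B_t)=0$, so $\gamma_t1_{B_t}=0$ $Q$-a.s.; for $t=\infty$ one splits along $\{M_\infty<\infty\}$, on which $Q$ is absolutely continuous by~\eqref{leb}, and $\{M_\infty=\infty\}$, on which $\gamma_\infty=0$ $Q$-a.s.\ by the definition of $\Gamma(Q)$. Hence $\bar Q(B)=0$, i.e.\ $\bar Q\in\map$.

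The only genuinely delicate point is thus the construction of $Q\in\mal$ from the martingale $M$ in the infinite-horizon case, which is exactly what Assumption~\ref{assfil} is designed for; the remaining work is the careful null-set bookkeeping in the multiplicative decomposition and in the treatment of the singular part of $Q$ via~\eqref{leb}, all of which is routine.
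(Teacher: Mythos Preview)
Your proposal is correct and follows essentially the same route as the paper: both build the supermartingale $h_t=E_P[\sum_{s\in\T_t}\bar Z_s\mu_s\mid\F_t]$ (the paper calls it $U_t$), factor it multiplicatively as $M_tD_t$ via the It\^o--Watanabe decomposition (Proposition~\ref{lemiw}), and invoke Assumption~\ref{assfil} to extend $M$ to a measure $Q\in\mal$. The only organizational differences are that the paper treats the case $\T=\nat_0\cup\{\infty\}$ first and embeds the other cases into it, and that the paper verifies \eqref{expqg} by a direct computation for nonnegative $X$ and then plugs in $X=1$ to obtain $\gamma_\infty=0$ $Q$-a.s.\ on $\{M_\infty=\infty\}$, whereas you first establish the $\Gamma(Q)$ condition via the $L^1$ limit $E_Q[D_t]=E_P[R_t]$ and then prove \eqref{expqg} by approximation with $\pi_{0,n}(X)$; both orderings work, and your uniform integrability of $(h_t)$ is justified by the domination $h_t\le E_P[R_0\mid\F_t]$.
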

The proof is postponed to Appendix~\ref{proof}. 

\begin{remark}
A continuous time analogue to Theorem~\ref{propq} appears independently in Kardaras~\cite[Theorem 2.1]{kard10}. While we make use of the It\^o-Watanabe decomposition (in discrete time, cf. Proposition~\ref{lemiw}) and of a measure theoretic extension, \cite[Theorem 2.1]{kard10} gives a direct construction of a discounting process and a local martingale, without relating the latter to a probability measure $Q$ in the general case.
\end{remark}

\subsection{Conditional risk measures viewed on the optional filtration}\label{correspsec}
In the previous section we have identified processes in $\R^{\infty}$ with random variables in $\LTp$. This induces a one-to-one correspondence between conditional risk measures for processes and conditional risk measures for random variables on the optional $\sigma$-field:
\begin{proposition}\label{corresp}
Any conditional convex risk measure for processes $\rt\,:\,\mathcal{R}_t^{\infty}\,\rightarrow\,\Lt$ for $t\in\T\cap\nat_0$ defines a conditional convex risk measure on random variables $\bar\rho_t\,:\,\LTp\,\rightarrow\,\Ltp$ via 
\begin{equation}\label{rrp}
\bar\rho_t(X)=-X_01_{\{0\}}-\ldots-X_{t-1}1_{\{t-1\}}+\rt(X)1_{\T_t},\quad X\in \R^\infty,
\end{equation}
where we use the notation  
\[
\rt(X):=\rt\circ\pi_{t,T}(X)\quad\text{for}\quad  X\in\mathcal{R}^{\infty}.
\]
Conversely, any conditional convex risk measure on random variables $\bar\rho_t\,:\,\LTp\,\rightarrow\,\Ltp$ is of the form \eqref{rrp} with some  conditional convex risk measure on processes $\rt\,:\,\mathcal{R}_t^{\infty}\,\rightarrow\,\Lt$.
\end{proposition}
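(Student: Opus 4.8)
The plan is to prove the two directions separately, relying throughout on the concrete description of $\bar\F_t$-measurable bounded functions recalled above: a process $X$ lies in $\Ltp=\R^\infty_{0,t}$ iff $X_s$ is $\F_s$-measurable for $s\le t$ and $X_s=X_t$ for $s\ge t$. It is convenient to record first that the following processes belong to $\Ltp$: the spike $c\,1_{\{j\}}$ for any $j<t$ and any $\F_j$-measurable bounded $c$; the process $c\,1_{\T_t}$ for any $c\in\Lt$; and, for any $X\in\R^\infty$, the ``past part'' $X-\pi_{t,T}(X)=\sum_{j<t}X_j1_{\{j\}}$. We shall also use the identities $\pi_{t,T}(X+m)=\pi_{t,T}(X)+m_t1_{\T_t}$ and $\pi_{t,T}(\lambda X)=\lambda_t\,\pi_{t,T}(X)$, valid for $m,\lambda\in\Ltp$ and $X\in\R^\infty$, which hold because $m_s=m_t$ and $\lambda_s=\lambda_t$ for all $s\ge t$.

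For the direction ``processes $\Rightarrow$ random variables'', given $\rt\colon\R_t^\infty\to\Lt$ I would define $\bar\rho_t$ by \eqref{rrp} (reading $\rt(X)$ as $\rt\circ\pi_{t,T}(X)$). Well-definedness as a map $\LTp\to\Ltp$ is immediate from the description of $\Ltp$, once one notes that $\rt(X)$ is bounded, which follows by applying monotonicity and conditional cash invariance of $\rt$ to $-\|X\|_\infty1_{\T_t}\le\pi_{t,T}(X)\le\|X\|_\infty1_{\T_t}$. Normalization of $\bar\rho_t$ is clear. For the remaining three axioms one checks the required (in)equalities componentwise: on the indices $j<t$ they hold trivially, in fact with equality, since there $\bar\rho_t(\cdot)_j=-(\cdot)_j$ depends linearly on the argument; on $\T_t$ the value is $\rt\circ\pi_{t,T}$ of the argument, and using the two identities above, conditional cash invariance, monotonicity and conditional convexity of $\bar\rho_t$ reduce to the corresponding axioms of $\rt$ with the $\F_t$-measurable data $m_t$ and $\lambda_t$ (since $m_s=m_t$ and $\lambda_s=\lambda_t$ for $s\ge t$, the right-hand sides also take the expected form on $\T_t$).

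For the converse, given a conditional convex risk measure $\bar\rho_t\colon\LTp\to\Ltp$, the starting point is the identity
\[
\bar\rho_t(X)=\bar\rho_t\bigl(\pi_{t,T}(X)\bigr)-\sum_{j<t}X_j1_{\{j\}},\qquad X\in\R^\infty,
\]
which is just conditional cash invariance of $\bar\rho_t$ applied with $m=-\sum_{j<t}X_j1_{\{j\}}\in\Ltp$, using $X+m=\pi_{t,T}(X)$. Next I would show that $\bar\rho_t(Y)_j=0$ for every $Y\in\R_t^\infty$ and $j<t$: sandwiching $-\|Y\|_\infty1_{\T_t}\le Y\le\|Y\|_\infty1_{\T_t}$, using monotonicity, and observing that cash invariance and normalization give $\bar\rho_t(\pm\|Y\|_\infty1_{\T_t})=\mp\|Y\|_\infty1_{\T_t}$, whose $j$-th components vanish, one obtains $0\le\bar\rho_t(Y)_j\le0$. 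Together with $\bar\F_t$-measurability of $\bar\rho_t(Y)$, which forces $\bar\rho_t(Y)_s=\bar\rho_t(Y)_t$ for $s\ge t$, this yields $\bar\rho_t(Y)=\bar\rho_t(Y)_t\,1_{\T_t}$ for $Y\in\R_t^\infty$. Setting $\rt(Y):=\bar\rho_t(Y)_t\in\Lt$ for $Y\in\R_t^\infty$ then turns the displayed identity into exactly \eqref{rrp}. It remains to verify that $\rt$ satisfies the axioms of Definition~\ref{defrmp}; each one is obtained by evaluating at time $t$ the corresponding axiom of $\bar\rho_t$: for monotonicity and normalization directly; for conditional cash invariance applied to $Y+m1_{\T_t}$ with $m\in\Lt$ (note $m1_{\T_t}\in\Ltp$ and $Y+m1_{\T_t}\in\R_t^\infty$); and for conditional convexity applied to $\bar\lambda Y+(1-\bar\lambda)Z$ with $\bar\lambda:=\lambda1_{\T_t}\in\Ltp$, $0\le\lambda\le1$, $\lambda\in\Lt$, using that on processes supported on $\T_t$ multiplication by $\bar\lambda$ agrees with scalar multiplication by $\lambda$, so that both sides of the inequality, evaluated at time $t$, take the form required in Definition~\ref{defrmp}.

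The bookkeeping is lengthy but routine; the one genuinely substantive step is the vanishing $\bar\rho_t(Y)_j=0$ for $j<t$ and $Y\in\R_t^\infty$ --- the fact that a conditional risk measure on the optional $\sigma$-field automatically ``ignores'' the past coordinates of a cash flow supported on $\T_t$. This is precisely what makes the decomposition \eqref{rrp} possible, and it is the point at which monotonicity, conditional cash invariance and normalization must be used jointly.
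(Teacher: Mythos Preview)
Your proof is correct and essentially complete. The one point where it differs from the paper's argument is the ``substantive step'' in the converse direction: to show that $\bar\rho_t(Y)$ is supported on $\T_t$ whenever $Y\in\R_t^\infty$, the paper invokes the \emph{local property} of conditional convex risk measures (cf.\ \cite[Proposition~2]{dt5}), namely that $\bar\rho_t(X)=I_{A}\bar\rho_t(I_AX)+I_{A^c}\bar\rho_t(I_{A^c}X)$ for $A\in\bar\F_t$, applied with $A=\Omega\times\{0,\ldots,t-1\}$. This yields the decomposition \eqref{rrp} in one line. You instead derive the vanishing of $\bar\rho_t(Y)_j$ for $j<t$ from first principles via the sandwich $-\|Y\|_\infty 1_{\T_t}\le Y\le\|Y\|_\infty 1_{\T_t}$, monotonicity, cash invariance and normalization. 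Your route is more self-contained (no external citation needed) and makes explicit exactly which axioms are used; the paper's route is more concise because the local property packages the same combination of axioms as a ready-made tool. The two arguments are close cousins, since the local property itself is proved by a similar monotonicity-and-normalization trick.
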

\begin{proof}
Clearly, $\bar{\rt}$ defined via \eqref{rrp} is a conditional convex risk measure in the sense of Definition~\ref{defrm}. To see, e.g., conditional cash invariance, let $m\in \Ltp$, i.e. $m=(m_0\pk m_{t-1},m_t,m_t\pke)$ with $m_i\in L^\infty_i$ for $i=0\pk t$. Then
\[
\bar\rho_t(X+m)=(-X_0-m_0\pk -X_{t-1}-m_{t-1},\rt(X+m),\rt(X+m)\pke)=\bar\rho_t(X)-m
\]
by conditional cash invariance of $\rt$. \\ To prove the converse implication, let $\bar\rho_t\,:\,\LTp\,\rightarrow\,\Ltp$ be a conditional convex risk measure for random variables. 
Since $A_t:=\Omega\times\{0\pk t-1\}\in\bar\F_t$, the local property (cf., e.g., \cite[Proposition 2]{dt5}), conditional cash invariance and normalization of $\bar\rho_t$ imply
\begin{equation*}
\bar\rt(X)=I_{A_t}\bar\rt(I_{A_t}X)+I_{A_t^c}\bar\rt(I_{A_t^c}X)=-X_0I_{\{0\}}-\cdots-X_{t-1}I_{\{t-1\}}+\bar\rt(XI_{\T_t})I_{\T_t}.
\end{equation*}
Finally, it is easy to see that $\rt\,:\,\mathcal{R}_t^{\infty}\,\rightarrow\,\Lt$ defined by $\rt(X):=(\bar\rho_t(X))_t$ is a conditional convex risk measure for processes in the sense of Definition~\ref{defrmp}.
\end{proof}

Let $\rt\,:\,\mathcal{R}_t^{\infty}\,\rightarrow\,\Lt$ be a conditional convex risk measure for processes, and consider the corresponding acceptance set 
\begin{equation*}
\mathcal{A}_t=\{X\in\mathcal{R}_t^{\infty}\mk \rt(X)\leq 0\}.
\end{equation*}
Then the acceptance set of $\bar\rt$ related to $\rt$ via \eqref{rrp} is given by
\begin{eqnarray}\label{accsetp}
\bar{\A}_t&=&\lk X\in\LTp\mk \bar\rho_t(X)\le0\;\bar{P}\text{-a.s.}\rk \nonumber\\
&=&\lk X\in\R^\infty\mk X_s\geq 0\, \forall s=0\pk t-1,\; \rt(X)\leq 0\; P\text{-a.s.}\rk \nonumber\\
&=&\A_t+L_{0,+}^{\infty}\times\ldots\times L_{t-1,+}^{\infty}\times\{0\}\times\ldots.
\end{eqnarray}

For each $\bar{Q}\in\map$, the minimal penalty function of $\bar\rt$ is given by
\begin{equation*}
\bar\alpha_t(\bar{Q})=\qesp_{X\in\bar{\A}_t}E_{\bar{Q}}[-X\,|\,\bar\F_t\,].
\end{equation*}
Due to \eqref{accsetp} and Corollary~\ref{cor:condexp}, this takes the form
\begin{equation}\label{corpen}
\bar\alpha_t(\bar{Q})=\alpha_t(\bar{Q})1_{\T_t},
\end{equation}
where $\alpha_t(\bar{Q})$ denotes the minimal penalty function of $\rt$ and is given by  
\begin{align}\label{accsetpp}
\alpha_t(Q\otimes\gamma)=\alpha_t(Q\otimes D)&=\qes_{X\in\A_t}E_Q\lp-\sum_{s\in\T_t}\frac{\gamma_s}{D_t}X_s\mk\F_t\rp\\&= \qes_{X\in\R^\infty}\left(E_Q\lp-\sum_{s\in\T_t}\frac{\gamma_s}{D_t}X_s\mk\F_t\rp-\rt(X)\right)\nonumber.
\end{align}
Here $Q\otimes D=Q\otimes \gamma$ denotes the decomposition of the measure $\bar Q$ in the sense of Theorem~\ref{propq}.
Note that $\alpha_t(Q\otimes\gamma)$ is well defined $Q$-a.s.\ on $\{D_t>0\}$; cf.\ Corollary~\ref{cor:condexp}.

\subsection{Robust representations}\label{robsec}
In this section we derive a robust representation of a conditional convex risk measure for processes which expresses explicitly the combined role of model ambiguity and discounting ambiguity. Our proof will consist in combining the robust representation of risk measures for random variables as stated in \cite{dt5}, \cite{bn4}, \cite{bn8}, \cite{ks5}, \cite{fp6}, and \cite{ap9}, with our Decomposition Theorem~\ref{propq} for measures on the optional $\sigma$-field. 

The following continuity property was introduced in \cite[Definition 3.15]{cdk6}.

\begin{definition}
A conditional convex risk measure $\rt\,:\,\mathcal{R}_t^{\infty}\,\rightarrow\,\Lt$ for processes is called continuous from above if
\[
\rt(X^n)\nearrow\rt(X)\quad P\text{-a.s with}\;\,n\to\infty
\]
for any decreasing sequence $(X^n)_n\subseteq\R^{\infty}$ and $X\in\R^{\infty}$ such that $X^n_s\searrow X_s\;\,P$-a.s  for all $s\in\T_t$.
\end{definition}

\begin{theorem}\label{robdarpr}
A conditional convex risk measure for processes $\rt$ is continuous from above if and only if it admits the following robust representation:
\begin{equation}\label{newreprinf}
\rt(X)=\es_{Q\in\Q^{\text{loc}}_t}\es_{\gamma\in\Gamma_t(Q)}\left(E_Q\lp-\sum_{s\in\T_t}\gamma_sX_s\mk\F_t\rp-\alpha_t(Q\otimes\gamma)\right),   \qquad X\in\R_t^\infty
\end{equation}
where $\alpha_t$ is defined in \eqref{accsetpp},
\begin{equation*}
\Q^{\text{loc}}_t:=\lk Q\in\mal\mk Q=P\;\,\text{on}\;\,\F_t\rk,
\end{equation*}
and
\begin{equation*}
\Gamma_t(Q):=\lk \gamma\in\Gamma(Q)\mk \gamma_s=0\,\,\forall\;\, s<t\rk.
\end{equation*}

\end{theorem}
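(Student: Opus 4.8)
The plan is to transfer the statement to the product space $(\bar\Omega,\bar\F,(\bar\F_t)\zte,\bar P)$, apply the known robust representation of conditional convex risk measures for random variables that are continuous from above, and feed in the Decomposition Theorem~\ref{propq}. The first step is to observe, via Proposition~\ref{corresp} and the explicit form \eqref{rrp}, that $\rt$ is continuous from above in the sense above if and only if the associated risk measure $\bar\rho_t$ on $\LTp$ is continuous from above as a risk measure for random variables: for $s<t$ the map $X\mapsto-X_s1_{\{s\}}$ is trivially continuous from above, and the $\T_t$-component of $\bar\rho_t(X)$ is $\rt(X)1_{\T_t}$, so the required monotone convergence $\bar\rho_t(X^n)\nearrow\bar\rho_t(X)$ reduces to $\rt(X^n)\nearrow\rt(X)$.

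For the ``only if'' part I would apply the robust representation theorem for conditional convex risk measures for random variables that are continuous from above (\cite{dt5,bn4,bn8,ks5,fp6,ap9}; in the case $T=\infty$ under Assumption~\ref{assfil}), which yields
\[
\bar\rho_t(X)=\es_{\bar Q}\left(E_{\bar Q}[-X\,|\,\bar\F_t\,]-\bar\alpha_t(\bar Q)\right),\qquad X\in\LTp,
\]
with $\bar\alpha_t$ the minimal penalty function and the essential supremum taken over those $\bar Q\in\map$ that agree with $\bar P$ on $\bar\F_t$. Substituting the decomposition $\bar Q=Q\otimes\gamma$ from Theorem~\ref{propq} and passing to the $\T_t$-component, the conditional-expectation term has $\T_t$-component $D_t^{-1}E_Q[-\sum_{s\in\T_t}\gamma_sX_s\,|\,\F_t]$ on $\{D_t>0\}$ by the description of $\bar\F_t$-conditional expectations on the optional filtration (Corollary~\ref{cor:condexp}), and by \eqref{corpen}--\eqref{accsetpp} the penalty has $\T_t$-component $\alpha_t(Q\otimes\gamma)$. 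This exhibits $\rt(X)$ as an essential supremum of expressions $D_t^{-1}E_Q[-\sum_{s\in\T_t}\gamma_sX_s\,|\,\F_t]-\alpha_t(Q\otimes\gamma)$ over admissible decompositions $(Q,\gamma)$.

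It then remains to normalize the index set. Both terms depend on $Q$ only through $E_Q[\,\cdot\,|\,\F_t]$, which is unchanged if $Q$ is replaced by the measure agreeing with $P$ on $\F_t$ and having the same regular conditional distribution given $\F_t$; and on $\{D_t>0\}$ both terms are invariant under dividing $(\gamma_s)_{s\in\T_t}$ by the $\F_t$-measurable factor $D_t$ and do not depend on $(\gamma_s)_{s<t}$. Replacing $(Q,\gamma)$ accordingly one may assume $Q\in\Q^{\text{loc}}_t$ and $\gamma\in\Gamma_t(Q)$, for which $D_t\equiv1$, while $\{D_t=0\}$ contributes nothing since $\alpha_t(Q\otimes\gamma)=+\infty$ there; conversely, by Theorem~\ref{propq} every such pair arises from some $\bar Q\in\map$ in the above representation, so no values of the objective are lost. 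This produces exactly \eqref{newreprinf}. The ``if'' part is routine: assuming \eqref{newreprinf} and given $X^n\searrow X$ componentwise on $\T_t$, conditional monotone convergence gives $E_Q[-\sum_{s\in\T_t}\gamma_sX^n_s\,|\,\F_t]\nearrow E_Q[-\sum_{s\in\T_t}\gamma_sX_s\,|\,\F_t]$ for each fixed $(Q,\gamma)$, and combining this with the monotonicity of $\rt$ yields $\rt(X^n)\nearrow\rt(X)$.

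The main obstacle is the normalization step together with the bookkeeping of conditional expectations on the optional filtration: one must keep track of the factor $D_t^{-1}$, handle the exceptional set $\{D_t=0\}$, and---in the case $T=\infty$---respect the singular part of $Q$ with respect to $P$ as well as the distinction between $\T=\nat_0$ and $\T=\nat_0\cup\{\infty\}$ that is built into the definitions of $\Gamma(Q)$ and $\D(Q)$, in particular checking that the rescaled optional random measure still belongs to $\Gamma(Q)$ (e.g.\ that its value at $\infty$ vanishes $Q$-a.s.\ on $\{M_\infty=\infty\}$).
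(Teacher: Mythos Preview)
Your proposal is correct and follows essentially the same route as the paper: pass to the product space via Proposition~\ref{corresp}, apply the robust representation of $\bar\rho_t$ (Theorem~\ref{robdar}), and use Theorem~\ref{propq} together with Corollary~\ref{cor:condexp} and \eqref{corpen}. The only notable difference is in the normalization step: the paper invokes Lemma~\ref{rmkcon}, which shows directly that $\bar Q=Q\otimes\gamma\in\bar\Q_t$ forces $Q\in\Q_t^{\text{loc}}$ and $\gamma_s=\mu_s$ for $s<t$, hence $D_t=\sum_{s\in\T_t}\mu_s>0$ is a deterministic positive constant; this makes the identification of $\{(\gamma_s/D_t)_{s\in\T_t}\}$ with $\Gamma_t(Q)$ immediate and renders your case analysis on $\{D_t=0\}$ and the ad hoc replacement of $Q$ unnecessary.
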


\begin{proof}
It is easy to check that $\rt$ is continuous from above if and only if the conditional risk measure $\bar\rho_t$ defined in \eqref{rrp} is continuous from above. By Theorem~\ref{robdar}, continuity from above of $\bar\rt$ is equivalent to the robust representation
\begin{equation*}
\bar\rho_t(X)=\es_{\bar Q\in\bar\Q_t}\left(E_Q\lp-X\mk\bar\F_t\rp-\bar\alpha_t(\bar Q)\right),
\end{equation*}
where 
\begin{equation}\label{bqt}
\bar\Q_t:=\lk \bar{Q}\in\map\mk \bar{Q}=\bar{P}\;\,\text{on}\;\,{\bar\F_t}\rk.
\end{equation}
Using Corollary~\ref{cor:condexp}, this takes the form
\begin{equation}\label{reprp}
\bar\rho_t(X)=-X_01_{\{0\}}-\ldots-X_{t-1}1_{\{t-1\}}+\es_{Q\otimes\gamma\in\bar\Q_t}\left(E_Q\lp-\sum_{s\in\T_t}\frac{\gamma_s}{D_t}X_s\mk\F_t\rp-\alpha_t(Q\otimes\gamma)\right)1_{\T_t},
\end{equation}
where $D$ is related to $\gamma$ via \eqref{gammatodisc}. Lemma~\ref{rmkcon} implies that $Q\otimes\gamma\in\bar\Q_t$ if and only if $Q\in\Q_t^{\text{loc}}$, and $\gamma_s=\mu_s$ for $s=0\pk t-1$; in particular $D_t=\sum_{s\in\T_t}\mu_s>0$. For each $Q\in\Q_t^{\text{loc}}$ we can identify the set $\{(\frac{\gamma_s}{D_t})_{s\in\T_t}\mk Q\otimes\gamma\in\bar\Q_t\}$ with $\Gamma_t(Q)$, and so the  representation \eqref{newreprinf} follows from \eqref{reprp} due to \eqref{rrp}.
\end{proof}

Using the integration by parts formula \eqref{intbyparts} we can rewrite \eqref{newreprinf} as follows.

\begin{corollary}\label{robdardisc}
In terms of discounting factors, the representation \eqref{newreprinf} takes the following form for $X\in\Ri_t$ if $T<\infty$ or if $\T=\nat_0$, and for $X\in\X_t^\infty$ if $\T=\nat_0\cup\{\infty\}$:
\begin{equation}\label{newreprinfdisc}
\rt(X)=\es_{Q\in\Q^{\text{loc}}_t}\es_{D\in\D_t(Q)}\left(E_Q\lp-\sum_{s=t}^TD_s\Delta X_s\mk\F_t\rp-\alpha_t(Q\otimes D)\right), 
\end{equation}
where
\begin{equation*}
\D_t(Q)=\lk D\in\D(Q)\mk D_s=1\,\; \forall\;\, s\le t \rk.
\end{equation*}

\end{corollary}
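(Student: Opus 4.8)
The plan is to deduce Corollary~\ref{robdardisc} directly from Theorem~\ref{robdarpr} by applying the integration by parts formula~\eqref{intbyparts} of Lemma~\ref{intbypartslem} term by term inside the representation~\eqref{newreprinf}. First I would fix $Q\in\Q_t^{\text{loc}}$ and recall that the map $\gamma\mapsto D$ given by~\eqref{gammatodisc} identifies $\Gamma(Q)$ with $\D(Q)$; under this bijection the subset $\Gamma_t(Q)=\{\gamma\in\Gamma(Q)\mk \gamma_s=0\ \forall s<t\}$ corresponds precisely to $\D_t(Q)=\{D\in\D(Q)\mk D_s=1\ \forall s\le t\}$, since $D_s=1-\sum_{r=0}^{s-1}\gamma_r$ equals $1$ for all $s\le t$ exactly when $\gamma_r=0$ for $r<t$. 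Thus the two essential suprema over $\gamma\in\Gamma_t(Q)$ and over $D\in\D_t(Q)$ range over corresponding objects, and $\alpha_t(Q\otimes\gamma)=\alpha_t(Q\otimes D)$ by the identification already made in~\eqref{accsetpp}.

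Next I would rewrite the linear term. For $X\in\R_t^\infty$ (with $X\in\X_t^\infty$ in the case $\T=\nat_0\cup\{\infty\}$) formula~\eqref{intbyparts} of Lemma~\ref{intbypartslem} gives, $Q$-a.s.,
\[
\sum_{s\in\T_t}\gamma_sX_s=\sum_{s=t}^T D_s(X_s-X_{s-1})=\sum_{s=t}^T D_s\Delta X_s,
\]
where $\Delta X_s=X_s-X_{s-1}$; note that since $X\in\R_t^\infty$ we have $X_{t-1}=0$ by the convention implicit in $\pi_{t,T}$, so no boundary term survives. Applying $E_Q[\,\cdot\mk\F_t]$ and then negating, the inner expression in~\eqref{newreprinf},
\[
E_Q\lp-\sum_{s\in\T_t}\gamma_sX_s\mk\F_t\rp-\alpha_t(Q\otimes\gamma),
\]
coincides $Q$-a.s.\ with
\[
E_Q\lp-\sum_{s=t}^T D_s\Delta X_s\mk\F_t\rp-\alpha_t(Q\otimes D).
\]
Since $Q=P$ on $\F_t$ for $Q\in\Q_t^{\text{loc}}$, the $Q$-a.s.\ identity is in fact a $P$-a.s.\ identity of $\F_t$-measurable random variables, so it is compatible with taking the $P$-essential supremum.

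Finally I would take the essential supremum over $D\in\D_t(Q)$ and then over $Q\in\Q_t^{\text{loc}}$ on both sides, invoking the $\gamma\leftrightarrow D$ bijection to conclude that the right-hand side of~\eqref{newreprinf} equals the right-hand side of~\eqref{newreprinfdisc}, hence equals $\rt(X)$ on the stated domain. The only point requiring a little care — and the main (mild) obstacle — is the domain restriction: the integration by parts identity~\eqref{intbyparts} holds for all $X\in\R_t^\infty$ when $T<\infty$ or $\T=\nat_0$, but only for $X\in\X_t^\infty$ when $\T=\nat_0\cup\{\infty\}$, because the convergence of $D_{t+1}X_t$ as $t\to\infty$ relies on $X$ having a limit $X_\infty$ on the singular part of $Q$; this is exactly why Corollary~\ref{robdardisc} is stated with the same case distinction, and I would simply note that the representation~\eqref{newreprinf} already holds on all of $\R_t^\infty$ so that restricting to $\X_t^\infty$ in the last case is harmless.
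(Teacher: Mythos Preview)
Your proposal is correct and follows exactly the route the paper indicates: the corollary is obtained from Theorem~\ref{robdarpr} by invoking the integration by parts formula~\eqref{intbyparts}, together with the bijection $\Gamma_t(Q)\leftrightarrow\D_t(Q)$ from Lemma~\ref{intbypartslem}. Your handling of the domain restriction and of the $Q$-a.s.\ versus $P$-a.s.\ issue is accurate and slightly more explicit than what the paper spells out.
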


\begin{remark}\label{rem:cdk}
In \cite{cdk6} Cheridito, Delbaen, and Kupper consider the cases $T<\infty$ and $\T=\nat_0$. They work on the space $\R^\infty$ equipped with the dual space
\[
 \A^1:=\lk a=(a_t)_{t\in\T}\mk a \,\text{adapted},\, E_P\left[\sum_{t\in\T}|a_t-a_{t-1}|\right]<\infty\rk,
\]
where $a_{-1}:=0$. The robust representation of conditional convex risk measures in \cite{cdk6} is formulated in terms of the set
\[
 \D_{0,T}:=\lk a\in\A^1\mk a_t\ge a_{t-1}\;\text{for all}\;t\in\T,\, E_P\left[\sum_{t\in\T}(a_t-a_{t-1})\right]=1\rk;
\]
cf.\ \cite[Theorem 3.16]{cdk6}. Note that $\D_{0,T}$ can be identified with the set $\map$. Indeed, every $a\in\D_{0,T}$ defines a density $\bar{Z}$ of $\bar Q\in\map$ via
\[
 Z_t\mu_t=a_t-a_{t-1},\qquad t\in\T,
\]
and vice versa. By emphasizing $\map$ rather than $\D_{0,T}$ we take a more probabilistic approach. In particular, we exploit the decomposition $\bar Q=Q\otimes\gamma=Q\otimes D$ of probability measures in $\map$. This has two advantages. In the first place it allows us to make explicit the joint role of model uncertainty, as expressed by the measures $Q\in\mal$, and of discounting uncertainty, as described by the discounting processes $D\in\D(Q)$. Moreover, the probabilistic approach allows us to discuss the case $T=\infty$ in terms of a measure theoretic extension problem, and it will be crucial for our analysis of the supermartingale aspects of time consistency.

As a special case, our representation \eqref{newreprinfdisc} applied for $T=1$ at $t=0$ to the process $(0, X_T)$ with $X_T\in L^\infty$, yields the representation (4.5) in \cite[Corollary 4.4]{er08} in the static context of cash subadditive risk measures for random variables; cf.\ also Remark~\ref{redekr}.
\end{remark}

In the same way as in Theorem~\ref{robdarpr}, the robust representations \eqref{rd:fp}, \eqref{rd:ap} and \cite[Lemma 3.5]{fp6} for conditional convex risk measures for random variables translate into representations in our context which use a smaller set of measures:
\begin{corollary}\label{cororep}
A conditional convex risk measure on processes $\rt$ is continuous from above if and only if any of the following representations hold:
\begin{enumerate}
\item 
$\rt$ is of the form \eqref{newreprinf}, where the essential supremum is taken over the set
\[
\lk Q\otimes\gamma\mk Q\in\Q^{\text{loc}}_t,\; \gamma\in\Gamma_t(Q),\; E_Q\Big[ \big(\sum_{s\in\T_t}\mu_s\big)\alpha_t(Q\otimes\gamma)\Big]<\infty\rk.
\]
\item 
for all $\bar{Q}=Q\otimes D\in\map$ and $X\in\R_t^\infty$ we have
\[
\rt(X)=\qes_{R\otimes\xi\in\bar\Q_t(\bar{Q})}\left(\frac{1}{D_t}E_R\lp-\sum_{s\in\T_t}\xi_sX_s\mk\F_t\rp-\alpha_t(R\otimes\xi)\right)\quad \qf\;\;\textrm{on}\;\, \{D_t>0\}
\]
where
\begin{equation*}
\bar\Q_t(\bar{Q}):=\lk \bar{R}\in\map\mk \bar{R}=\bar{Q}|_{\bar\F_t}\rk.
\end{equation*}
\end{enumerate}
Moreover, if there exists a probability measure $\bar P^*\approx \bar P$ on $(\bar\Omega, \bar\F)$ such that $\alpha_t(\bar P^*)<\infty$, then continuity from above is also equivalent to a representation of the form \eqref{newreprinf} as an essential supremum over the set
\[
\{Q\otimes \gamma\mk Q\in\mel,\; \gamma\in\Gamma^e(Q)\}, 
\]
where
\[
\Gamma^e(Q):= \lk \gamma\in\Gamma(Q)\mk \gamma_t>0\;P\text{-a.s.\ for all}\;t\in\T \rk.
\]

\end{corollary}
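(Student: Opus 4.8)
The plan is to follow the pattern of the proof of Theorem~\ref{robdarpr}: each of the three representations is obtained by applying a known \emph{refined} robust representation for conditional convex risk measures on random variables to the induced risk measure $\bar\rho_t$ on the optional $\sigma$-field, and then translating the result back to processes by means of the Decomposition Theorem~\ref{propq}, Lemma~\ref{rmkcon}, Corollary~\ref{cor:condexp}, and the correspondence \eqref{rrp}. Throughout one uses that $\bar\alpha_t(\bar Q)=\alpha_t(\bar Q)\,1_{\T_t}$ by \eqref{corpen}, and that, by Lemma~\ref{rmkcon}, the constraint $\bar Q=Q\otimes\gamma\in\bar\Q_t$ is equivalent to $Q\in\Q^{\text{loc}}_t$ together with $\gamma_s=\mu_s$ for $s<t$, so that on $\bar\Q_t$ one has $\sum_{s\in\T_t}\gamma_s=D_t=\sum_{s\in\T_t}\mu_s>0$.

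For part~(1) I would start from the refinement \eqref{rd:fp} (see also \cite[Lemma 3.5]{fp6}) of the robust representation for random variables, which restricts the essential supremum to those $\bar Q\in\bar\Q_t$ whose penalty $\bar\alpha_t(\bar Q)$ is integrable against $\bar Q$. The only computation is to rewrite this integrability condition: using $\bar Q=Q\otimes\gamma$, the identity $\bar\alpha_t(\bar Q)=\alpha_t(Q\otimes\gamma)\,1_{\T_t}$ and $\sum_{s\in\T_t}\gamma_s=\sum_{s\in\T_t}\mu_s$ on $\bar\Q_t$, one gets
\[
E_{\bar Q}\big[\bar\alpha_t(\bar Q)\big]=E_Q\Big[\sum_{s\in\T_t}\gamma_s\,\alpha_t(Q\otimes\gamma)\Big]=E_Q\Big[\big(\textstyle\sum_{s\in\T_t}\mu_s\big)\,\alpha_t(Q\otimes\gamma)\Big],
\]
and the claim then follows from \eqref{reprp} exactly as in the proof of Theorem~\ref{robdarpr}.

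For part~(2) I would invoke the \emph{conditional} (``local'') robust representation \eqref{rd:ap} of \cite{ap9}, which asserts that for \emph{every} reference measure $\bar Q\in\map$ one has $\bar\rho_t(X)=\qesp_{\bar R\in\bar\Q_t(\bar Q)}\big(E_{\bar R}[-X\,|\,\bar\F_t]-\bar\alpha_t(\bar R)\big)$ $\bar Q$-a.s. Writing $\bar Q=Q\otimes D$ and $\bar R=R\otimes\xi$, I would apply Corollary~\ref{cor:condexp} to evaluate $E_{\bar R}[-X\,|\,\bar\F_t]$ and $\bar\alpha_t(\bar R)=\alpha_t(R\otimes\xi)\,1_{\T_t}$ on $\{D_t>0\}$; restricting the resulting $\bar Q$-a.s.\ identity on $\bar\Omega$ to the coordinate $\T_t$ then yields the asserted $Q$-a.s.\ identity on $\{D_t>0\}$. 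Here one has to keep track of the facts that $\bar R=\bar Q$ on $\bar\F_t$ forces $R=Q$ on $\F_t$ and $\xi_s=D_s-D_{s+1}$ for $s<t$, and that all objects are defined only $Q$-a.s.\ on $\{D_t>0\}$; this bookkeeping is the main point of care in this part.

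For part~(3) I would use the standard equivalent-measure version of the robust representation for conditional convex risk measures on random variables: if there exists a reference model equivalent to the base measure with finite penalty, then the essential supremum may be taken over equivalent measures only. Applied to $\bar\rho_t$, the hypothesis $\alpha_t(\bar P^\ast)<\infty$ is, by \eqref{corpen}, the same as $\bar\alpha_t(\bar P^\ast)<\infty$, and produces a representation over $\{\bar Q\in\bar\Q_t\mk\bar Q\approx\bar P\}$. It then remains to identify this class: since $\mu_t>0$ for all $t$, one checks that $\bar Q=Q\otimes\gamma\in\map$ is equivalent to $\bar P$ if and only if $Q$ is locally equivalent to $P$ and $\gamma_t>0$ $P$-a.s.\ for every $t$, i.e.\ $Q\in\mel$ and $\gamma\in\Gamma^e(Q)$; combined with Lemma~\ref{rmkcon} and \eqref{rrp} this gives the stated representation. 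I expect the main obstacle to lie precisely in this measure-theoretic identification in the case $\T=\nat_0\cup\{\infty\}$, where one must be careful about the interplay between the singular part $\{M_\infty=\infty\}$ of $Q$ and the constraint $\gamma_\infty=0$ imposed there in the definition of $\Gamma(Q)$; the remaining steps are routine translations of the type already carried out in Theorem~\ref{robdarpr}.
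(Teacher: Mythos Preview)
Your proposal is correct and follows precisely the route indicated by the paper, which states (in lieu of a proof) that the corollary is obtained ``in the same way as in Theorem~\ref{robdarpr}'' by translating the refined robust representations \eqref{rd:fp}, \eqref{rd:ap}, and \cite[Lemma~3.5]{fp6} for random variables to the product space via Proposition~\ref{corresp}, Theorem~\ref{propq}, Corollary~\ref{cor:condexp}, and Lemma~\ref{rmkcon}. Your identification of the integrability condition in part~(1), the $\bar Q$-a.s.\ conditional representation in part~(2), and the equivalent-measure refinement in part~(3) matches this scheme exactly, and your flagged concern about the case $\T=\nat_0\cup\{\infty\}$ is legitimate but harmless: the set $\{Q\otimes\gamma\mid Q\in\mel,\ \gamma\in\Gamma^e(Q)\}$ is sandwiched between $\mep$ and the full set in \eqref{newreprinf}, so once the supremum over $\mep$ already attains $\rt(X)$, the intermediate set does too.
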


\section{Supermartingale criteria for time consistency}\label{tc}
In this section we discuss different notions of time consistency and derive corresponding criteria in terms of supermartingales.
\subsection{Strong time consistency and its characterization}\label{stc}
A strong notion of time consistency for risk measures for processes was introduced and characterized in \cite{cdk6} and \cite{ck6}. Here we adopt the definition from \cite{cdk6}, 
cf.\ \cite[Definition 4.2, Proposition 4.4, Proposition 4.5]{cdk6}. 
\begin{definition}
A dynamic convex risk measure for processes $(\rt)_{t\in\T\cap\nat_0}$ on $\R^{\infty}$ is called \emph{(strongly) time consistent} if for all $t$ in $\T$ such that $t<T$ and for all $X,Y\in\R^{\infty}$
\begin{equation}\label{tcp}
X_t=Y_t\;\;\; \textrm{and}\;\;\; \rho_{t+1}(X)\leq \rho_{t+1}(Y)\quad\Longrightarrow\quad \rt(X)\leq \rt(Y).
\end{equation}
\end{definition}
Note that a dynamic risk measure for processes $(\rt)_{t\in\T\cap\nat_0}$ is time consistent if and only if the corresponding dynamic convex risk measure for random variables $(\bar\rho_t)_{t\in\T\cap\nat_0}$ on $\LTp$ defined by (\ref{rrp}) is time consistent, that is, if $\bar\rho_{t+1}(X)\leq \bar\rho_{t+1}(Y)$ implies $\bar\rho_t(X)\leq \bar\rho_t(Y)$ for all $X,Y\in\LTp$ and all $t\in\T$, $t<T$. Criteria for time consistency of risk measures for random variables were studied intensively in the literature, see, e.g., \cite{dt5}, \cite{ks5}, \cite{adehk7}, \cite{fp6}, \cite{bn6}, \cite{bn8}, \cite{ap9} and the references therein. Using Proposition~\ref{corresp} we can translate these criteria into our present framework.

By \cite[Proposition 4.2]{fp6} applied to $\bar\rho$, time consistency \eqref{tcp} of $\rho$ is equivalent to {\it recursiveness}, that is
\begin{align}\label{recurs}
\rt(X)&=\rt(X_t1_{\{t\}}-\rho_{t+1}(X)1_{\T_{t+1}})\\
&=-X_t+\rt(-\rho_{t+1}(X-X_t)1_{\T_{t+1}})\nonumber.
\end{align}

If we restrict the conditional convex risk measure $\bar{\rho}_t$ to the space $L^\infty(\bar{\Omega},\bar{\F}_{t+1},\bar{P})$, the acceptance set is given by
\begin{eqnarray*}
\bar{\A}_{t,t+1}&:=&\lk X\in L^\infty(\bar{\Omega},\bar{\F}_{t+1},\bar{P})\mk\bar{\rho}_t(X)\le0\;\,\bar{P}\text{-a.s.}\rk \nonumber\\
&=&\A_{t,t+1}+ L_{0,+}^{\infty}\times\ldots\times L_{t-1,+}^{\infty}\times\{0\}\times\ldots,
\end{eqnarray*}
where
\begin{equation*}
\mathcal{A}_{t,t+1}:=\{X\in\mathcal{R}_{t,t+1}^{\infty}\mk \rt(X)\leq 0\},\quad t\in\T,\quad t<T,
\end{equation*}
denotes the acceptance set of the risk measure for processes $\rt$ restricted to $\R^\infty_{t,t+1}$.
The corresponding one-step minimal penalty function for $\bar\rt$ takes the form
\begin{equation*}
\bar{\alpha}_{t,t+1}(\bar Q):=\qesp_{X\in\bar{\A}_{t,t+1}}E_{\bar Q}[-X\,|\,\bar\F_t\,]=\alpha_{t,t+1}(\bar{Q})1_{\{t,t+1\pke \}},\qquad \bar{Q}\in\map,
\end{equation*}
where the function $\alpha_{t,t+1}(\bar{Q})$ is given for $\bar{Q}=Q\otimes D=Q\otimes \gamma\in\map$ by
\begin{equation*}
\alpha_{t,t+1}(Q\otimes D)=\frac{1}{D_t}\qes_{X\in\A_{t,t+1}}E_Q\lp-\gamma_tX_t-D_{t+1}X_{t+1}\mk\F_t\rp,\quad t\in\T,\quad t<T,
\end{equation*}
due to Corollary~\ref{cor:condexp}. Note that the penalty functions $\alpha_t(Q\otimes D)$ and $\alpha_{t,t+1}(Q\otimes D)$ are only defined $Q$-a.s.\ on $\{D_t>0\}$. In the following we \emph{define} for $Q\otimes D\in\map$
\begin{equation*}
 \alpha_t(Q\otimes D):=\infty,\quad \alpha_{t,t+s}(Q\otimes D):=\infty\quad Q\text{-a.s. on}\;\{D_t=0\}
\end{equation*}
for all $t,s\ge 0$, and use henceforth the convention $0\cdot\infty:=0$.

The following result characterizes time consistency in terms of a splitting property of the acceptance sets and in terms of supermartingale properties of the penalty process and the dynamic risk measure. It translates \cite[Theorem 4.5]{fp6} and \cite[Theorem 17]{ap9} to our present framework. 
\begin{theorem}\label{eqcharp}
Let \rtf be a dynamic convex risk measure on $\R^\infty$ such that each $\rt$ is \ctse. Then the
following conditions are equivalent:
\begin{itemize}
\item[(i)] \rtf is time consistent;
\item[(ii)] $\A_t=\A_{t,t+1}+\A_{t+1}\;\;$\quad for all $t\in\T$,  $t<T$;
\item[(iii)] for all $t\in\T$,  $t<T$ and $\bar{Q}=Q\otimes D\in\map$
\begin{equation*}
D_t\alpha_t(Q\otimes D)=D_t\alpha_{t,t+1}(Q\otimes D)+E_Q[D_{t+1}\alpha_{t+1}(Q\otimes D)\mk\F_t]\quad \qf;
\end{equation*}
\item[(iv)] for all $X\in\Ri$, $t\in\T$,  $t<T$, and $\bar{Q}=Q\otimes D\in\map$
\begin{equation*}
E_Q[D_{t+1}(X_t+\rho_{t+1}(X)+\alpha_{t+1}(Q\otimes D))\mk \F_t]\le
D_t(X_t+\rho_t(X)+\alpha_t(Q\otimes D))\quad \qf.
\end{equation*}
\end{itemize}
Moreover, if there exists a probability measure $\bar P^*\approx \bar P$ on $(\bar\Omega, \bar\F)$ such that $\alpha_0(\bar P^*)<\infty$, condition (iv) stated only for the measures 
\begin{align}\label{qstar}
\bar\Q^*&:=\lk \bar Q\in\mep\mk \alpha_0(\bar Q)<\infty\rk\\
 & = \left\{Q\otimes \gamma\mk Q\in\mel,\; \gamma\in\Gamma^e(Q),\;\alpha_0(Q\otimes \gamma)<\infty\right\}\nonumber
\end{align}
already implies time consistency, and the robust representation \eqref{newreprinf} of $\rt$ also holds if the essential supremum is taken only over the set $\bar\Q^*$.
\end{theorem}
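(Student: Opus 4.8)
The plan is to transfer everything to the level of random variables on the optional filtration and then invoke the known characterizations for random variables, namely \cite[Theorem 4.5]{fp6} and \cite[Theorem 17]{ap9}, applied to $\bar\rho=(\bar\rho_t)$. I would first recall from the discussion after the definition of time consistency that $(\rho_t)$ is time consistent if and only if $(\bar\rho_t)$ on $\LTp$ is time consistent, and from Proposition~\ref{corresp} and formula~\eqref{accsetp} that $\bar\A_t=\A_t+L_{0,+}^\infty\times\cdots\times L_{t-1,+}^\infty\times\{0\}\times\cdots$, with the analogous identity for $\bar\A_{t,t+1}$. With these dictionaries in place, the equivalence (i)$\iff$(ii) follows from the random-variable statement ``time consistency $\iff$ $\bar\A_t=\bar\A_{t,t+1}+\bar\A_{t+1}$'': one only has to check that the splitting $\bar\A_t=\bar\A_{t,t+1}+\bar\A_{t+1}$ on product space is equivalent to $\A_t=\A_{t,t+1}+\A_{t+1}$ on $\R^\infty$, which is a routine bookkeeping argument using that the ``extra'' cones $L_{0,+}^\infty\times\cdots\times L_{t-1,+}^\infty\times\{0\}\times\cdots$ are compatible with addition and that $\bar\A_{t+1}$ already contains a larger such cone, so adding $\bar\A_{t+1}$ and $\bar\A_{t,t+1}$ absorbs the $\{0\}$-components correctly.

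Next I would establish (i)$\iff$(iii). By the cited results for random variables, time consistency of $\bar\rho$ is equivalent to the cocycle (additivity) property of the minimal penalty functions: $\bar\alpha_t(\bar Q)=\bar\alpha_{t,t+1}(\bar Q)+E_{\bar Q}[\bar\alpha_{t+1}(\bar Q)\mid\bar\F_t]$ for all $\bar Q\in\map$. Now I use the decomposition $\bar Q=Q\otimes D$ of Theorem~\ref{propq} together with~\eqref{corpen} and the formula for conditional expectations on the optional filtration (Corollary~\ref{cor:condexp}): on $\T_t$ one has $\bar\alpha_t(\bar Q)=\alpha_t(Q\otimes D)1_{\T_t}$, and the conditional expectation $E_{\bar Q}[\,\cdot\mid\bar\F_t]$ restricted to the $\T_t$-component is computed by reweighting with $D_{t+1}/D_t$ as in the formula for $\alpha_{t,t+1}$. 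Substituting and multiplying through by $D_t$ turns the random-variable cocycle identity exactly into (iii), the convention $0\cdot\infty:=0$ taking care of the set $\{D_t=0\}$ where the penalties are defined to be $\infty$. Conversely (iii) multiplied back by $1/D_t$ on $\{D_t>0\}$ recovers the cocycle identity, and on $\{D_t=0\}$ both sides are trivially consistent; this gives (iii)$\Rightarrow$(i).

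For (i)$\iff$(iv) I would proceed similarly, starting from the supermartingale characterization of time consistency for random variables: $(\bar\rho_{t}(X)+\bar\alpha_t(\bar Q))$ is a $\bar Q$-supermartingale for every $X$ and every $\bar Q\in\map$ with the appropriate null-set conventions, equivalently the ``$V$-supermartingale'' inequality $E_{\bar Q}[\bar\rho_{t+1}(X)+\bar\alpha_{t+1}(\bar Q)\mid\bar\F_t]\le \bar\rho_t(X)+\bar\alpha_t(\bar Q)$. Evaluating both sides on the $\T_t$-component via~\eqref{rrp} and~\eqref{corpen}, writing $\bar\rho_t(X)$ there as $-X_t+\rho_t(X)$-type terms, and expressing $E_{\bar Q}[\,\cdot\mid\bar\F_t]$ through $E_Q[D_{t+1}\,\cdot\mid\F_t]/D_t$, the inequality becomes, after multiplication by $D_t$, precisely the inequality in (iv). I would double-check the appearance of the $X_t$ term: it comes from the fact that the optional-filtration conditional expectation at time $t$ still ``sees'' the coordinate $X_t$ (since $A_t\times\T_t\in\bar\F_t$), which is why $X_t$ rather than $X_{t+1}$ enters, and from the recursiveness identity~\eqref{recurs}. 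The equivalence with (i) then follows from the random-variable supermartingale characterization; the monotone limit of $X$ together with continuity from above is what makes the essential suprema interchange legitimately, as in \cite{fp6,ap9}.

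Finally, for the ``moreover'' part I would invoke the corresponding refinement in \cite[Theorem 4.5]{fp6}/\cite[Theorem 17]{ap9}: when there is $\bar P^*\approx\bar P$ with $\alpha_0(\bar P^*)<\infty$, the supermartingale inequality need only be tested on the smaller class of equivalent measures with finite penalty, and this already forces time consistency, and moreover the robust representation can be restricted to that class. Translating ``$\bar Q\in\mep$ with $\bar\alpha_0(\bar Q)<\infty$'' through Theorem~\ref{propq} gives the description $\bar\Q^*=\{Q\otimes\gamma\mid Q\in\mel,\ \gamma\in\Gamma^e(Q),\ \alpha_0(Q\otimes\gamma)<\infty\}$ announced in~\eqref{qstar}, using that $\bar Q\approx\bar P$ holds iff $Q\approx P$ locally and $\gamma_t>0$ for all $t$; the representation statement is then Corollary~\ref{cororep} combined with this restriction. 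The main obstacle I anticipate is purely one of careful translation: keeping straight how the conditional expectation on the optional $\sigma$-field $\bar\F_t$ acts on the $\T_t$-block (the $D_{t+1}/D_t$ reweighting and the role of the coordinate $X_t$), and handling the degenerate set $\{D_t=0\}$ consistently with the conventions $\alpha_t:=\infty$ there and $0\cdot\infty:=0$, so that multiplying the random-variable identities by $D_t$ is legitimate everywhere. None of the steps requires a new idea beyond Theorem~\ref{propq} and Corollary~\ref{cor:condexp}; the content is in verifying that these translations are exact.
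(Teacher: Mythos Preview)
Your proposal is correct and follows exactly the paper's approach: the paper's proof is a single sentence saying the result follows from \cite[Theorem 17]{ap9} and \cite[Theorem 4.5]{fp6} applied to $\bar\rho_t$ defined in \eqref{rrp}, using Corollary~\ref{cor:condexp}. Your more detailed account of the translation (in particular, how the $X_t$ term in (iv) arises from the $-X_t1_{\{t\}}$ component of $\bar\rho_{t+1}(X)$ combined with $\gamma_t=D_t-D_{t+1}$ in the conditional expectation formula) is a faithful unpacking of that one line.
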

\begin{proof}
 Follows from \cite[Theorem 17]{ap9}  and \cite[Theorem 4.5]{fp6} applied to $\bar\rt$ defined in \eqref{rrp}  using Corollary~\ref{cor:condexp}.
\end{proof}

\begin{remark}
Equivalence of time consistency and (ii) of Theorem~\ref{eqcharp} holds without assuming continuity from above and was already proved in \cite[Theorem 4.6]{cdk6}. Characterizations of time consistency in terms of penalty functions as in condition (iii) are given in \cite[Theorem 4.19, Theorem 4.22]{cdk6}. However, the latter results use neither the decomposition of $\bar Q$ into a measure $Q$ and a discounting factor $D$, nor the one-step penalty functions $\alpha_{t,t+1}$. The role of $\alpha_{t,t+1}$ in condition (iii) is analogous to the corresponding characterization of time consistency of risk measures for random variables in \cite[Theorem 2.5]{bn6} and \cite[Theorem 4.5]{fp6}. In the same way, the supermartingale characterization (iv) of time consistency  translates the corresponding criterion from \cite[Theorem 4.5]{fp6} into our present framework.
\end{remark}
\begin{assumption}\label{cts}
From now on until the end of Section \ref{tc} we fix a \emph{time consistent} dynamic convex risk measure for processes \rtf such that each $\rt$ is \emph{continuous from above}.
\end{assumption}
In the following we use the notation
\[
\bar\Q_0:=\lk Q\otimes D\in\map\mk \alpha_0(Q\otimes D)<\infty\rk.
\]
\begin{corollary}\label{supermartingales}
\begin{enumerate}
\item For any $\bar Q=Q\otimes D\in\bar\Q_0$, the discounted penalty process $(D_t\alpha_t(Q\otimes D))_{t\in\T\cap\nat_0}$ is a nonnegative $Q$-supermartingale. Its Doob decomposition is given by the predictable process 
\[
 A_t^{Q,D}:=\sum_{k=0}^{t-1}D_k\alpha_{k,k+1}(Q\otimes D),\qquad t\in\T\cap\nat_0,
\]
i.e.,
\begin{equation}\label{mart:doob}
 M_t^{Q,D}:= D_t\alpha_t(Q\otimes D)+A_t^{Q,D},\qquad t\in\T\cap\nat_0,
\end{equation}
is a $Q$-martingale.
\item For all $X\in\R^{\infty}$ and all $\bar Q\in\bar\Q_0$, the process 
\begin{equation}\label{pw}
W_t^{Q,D}(X):=D_t\rt(X-X_t1_{\T_t})+\sum_{s=0}^tD_s(-\Delta X_s)+D_t\alpha_t(Q\otimes D),\quad t\in\T\cap\nat_0,
\end{equation}
is a $Q$-supermartingale.
\end{enumerate}
\end{corollary}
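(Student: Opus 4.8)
The strategy is to derive both statements directly from the supermartingale characterization (iv) of Theorem~\ref{eqcharp} together with the Doob decomposition characterization (iii), exploiting that under Assumption~\ref{cts} the dynamic risk measure is time consistent and each $\rt$ is continuous from above. For part~(1), fix $\bar Q=Q\otimes D\in\bar\Q_0$. First I would verify nonnegativity: since $0\in\A_t$ by normalization $\rt(0)=0$, the representation \eqref{accsetpp} gives $\alpha_t(Q\otimes D)\ge E_Q[-\sum_{s\in\T_t}\frac{\gamma_s}{D_t}\cdot 0\mk\F_t]=0$ on $\{D_t>0\}$, hence $D_t\alpha_t(Q\otimes D)\ge 0$ everywhere (using the convention $0\cdot\infty=0$ on $\{D_t=0\}$). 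Integrability follows because $\alpha_0(Q\otimes D)<\infty$ and the discounted penalty process is dominated in $L^1(Q)$ by the martingale $M^{Q,D}$; more directly, the supermartingale property forces $E_Q[D_t\alpha_t(Q\otimes D)]\le D_0\alpha_0(Q\otimes D)<\infty$. The supermartingale inequality
\[
E_Q[D_{t+1}\alpha_{t+1}(Q\otimes D)\mk\F_t]\le D_t\alpha_t(Q\otimes D)\quad\qf
\]
is exactly condition (iii) of Theorem~\ref{eqcharp} after dropping the nonnegative term $D_t\alpha_{t,t+1}(Q\otimes D)$; and condition (iii) read as an equality says precisely that the increment of $D_t\alpha_t(Q\otimes D)$ equals $-D_t\alpha_{t,t+1}(Q\otimes D)$ plus a $Q$-martingale increment. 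Since $A^{Q,D}_t=\sum_{k=0}^{t-1}D_k\alpha_{k,k+1}(Q\otimes D)$ is $\F_{t-1}$-measurable (each $D_k$ is predictable and $\alpha_{k,k+1}(Q\otimes D)$ is $\F_k$-measurable, so the summand is $\F_k$-measurable, and the last index is $k=t-1$), $A^{Q,D}$ is predictable and nondecreasing with $A^{Q,D}_0=0$. Rearranging (iii) as $M^{Q,D}_{t+1}-M^{Q,D}_t=D_{t+1}\alpha_{t+1}(Q\otimes D)-D_t\alpha_t(Q\otimes D)+D_t\alpha_{t,t+1}(Q\otimes D)$ and taking $E_Q[\cdot\mk\F_t]$ shows the right-hand side has zero conditional expectation, so $M^{Q,D}$ is a $Q$-martingale; this is the claimed Doob decomposition (which is unique).

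For part~(2), fix $X\in\R^\infty$ and $\bar Q=Q\otimes D\in\bar\Q_0$. The plan is to show the one-step inequality
\[
E_Q[W^{Q,D}_{t+1}(X)\mk\F_t]\le W^{Q,D}_t(X)\quad\qf,\qquad t\in\T\cap\nat_0,\ t<T.
\]
I would first rewrite the process \eqref{pw} in the form appearing in condition (iv). Note $\rt(X-X_t1_{\T_t})=\rt(X)+X_t$ by conditional cash invariance (with $m=X_t\in\Lt$), so
\[
W^{Q,D}_t(X)=D_t\bigl(\rt(X)+X_t\bigr)+\sum_{s=0}^tD_s(-\Delta X_s)+D_t\alpha_t(Q\otimes D).
\]
Using the telescoping identity $\sum_{s=0}^{t}D_s(-\Delta X_s)=-D_tX_t+\sum_{s=0}^{t-1}(D_{s+1}-D_s)X_s+D_0X_{-1}$ — or more transparently just tracking the increment $W^{Q,D}_{t+1}(X)-W^{Q,D}_t(X)$ directly — the $D_tX_t$ terms cancel and one is left with
\[
W^{Q,D}_t(X)=D_t\rt(X)+D_t\alpha_t(Q\otimes D)+\sum_{s=0}^{t-1}(D_{s+1}-D_s)X_s,
\]
where the last sum is $\F_{t-1}$-measurable, hence $\F_t$-measurable. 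Then $W^{Q,D}_{t+1}(X)-W^{Q,D}_t(X)=D_{t+1}\rho_{t+1}(X)+D_{t+1}\alpha_{t+1}(Q\otimes D)-D_t\rho_t(X)-D_t\alpha_t(Q\otimes D)+(D_{t+1}-D_t)X_t$, and after regrouping $(D_{t+1}-D_t)X_t$ with $D_{t+1}\rho_{t+1}(X)$ and $D_t\rho_t(X)$ this is exactly $D_{t+1}(X_t+\rho_{t+1}(X)+\alpha_{t+1}(Q\otimes D))-D_t(X_t+\rho_t(X)+\alpha_t(Q\otimes D))$; taking $E_Q[\cdot\mk\F_t]$ and invoking condition (iv) of Theorem~\ref{eqcharp} gives the supermartingale inequality. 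Integrability of $W^{Q,D}_t(X)$ follows from boundedness of $X$, the fact that $D$ is bounded by $1$, and $E_Q[D_t\alpha_t(Q\otimes D)]\le D_0\alpha_0(Q\otimes D)<\infty$ from part~(1); adaptedness is clear since $\rt(X)$, $\alpha_t(Q\otimes D)$ and the correction sum are all $\F_t$-measurable.

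The routine but slightly delicate point throughout is the bookkeeping on $\{D_t=0\}$: on this event $\alpha_t(Q\otimes D)$ is set to $\infty$ and every relevant product is read with the convention $0\cdot\infty=0$, so all the displayed identities are understood $Q$-a.s.\ with this convention, and one should check that $\{D_t=0\}$ is absorbing (since $D$ is nonincreasing, $D_t=0$ implies $D_{t+1}=0$), which makes the convention consistent across time steps. I do not expect a genuine obstacle here; the content of the corollary is essentially a translation of Theorem~\ref{eqcharp}(iii),(iv), and the only real work is the algebraic rearrangement of \eqref{pw} and \eqref{mart:doob} into the exact shape of those two conditions, together with the elementary nonnegativity and integrability bounds.
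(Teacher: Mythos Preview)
Your proposal is correct and follows exactly the approach the paper intends: the corollary is stated without proof, as an immediate consequence of conditions (iii) and (iv) of Theorem~\ref{eqcharp}, and your derivation of part~(1) from (iii) and part~(2) from (iv) via the algebraic rearrangement $W^{Q,D}_{t+1}(X)-W^{Q,D}_t(X)=D_{t+1}(X_t+\rho_{t+1}(X)+\alpha_{t+1})-D_t(X_t+\rt(X)+\alpha_t)$ is precisely what is needed. The handling of nonnegativity, integrability, predictability of $A^{Q,D}$, and the absorbing set $\{D_t=0\}$ is also in order.
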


\subsection{Riesz decomposition of the penalty process and the appearance of bubbles}\label{rieszsec}
The following proposition characterizes the martingale $M^{Q,D}$ in the Doob decomposition of the $Q$-supermartingale $(D_t\alpha_t(Q\otimes D))\zt$ from Corollary~\ref{supermartingales}; it translates \cite[Proposition 21]{ap9} and \cite[Proposition 2.3.2]{ipen7} into our present context.

\begin{proposition}\label{rieszp}
The martingale $M^{Q,D}$ in \eqref{mart:doob} is of the form
\[
 M_t^{Q,D}=E_Q\left[\sum_{k=0}^{T-1}D_k\alpha_{k,k+1}(Q\otimes D)\mk\F_t\right]+N_t^{Q,D}\quad \qf,\quad t\in\T\cap\nat_0,
\]
where
\[
N_t^{Q,D}:=\left\{
\begin{array}{c@{\quad  \quad}l}
0 & \text{if $T<\infty$}\\
\displaystyle\lim_{s\to\infty}E_Q\left[D_s\alpha_s(Q\otimes D)\,|\,\F_t\,\right] & \text{if $T=\infty$}
\end{array}\right.\qquad \qf,\quad t\in\T\cap\nat_0,
\]
is a nonnegative $Q$-martingale. Thus the Riesz decomposition of the $Q$-supermartingale  $(D_t\alpha_t(Q\otimes D))$ into  a potential and a martingale takes the form
\begin{equation}\label{riesz}
D_t\alpha_t(Q\otimes D)=E_Q\left[\sum_{k=t}^{T-1}D_k\alpha_{k,k+1}(Q\otimes D)\,\big|\,\F_t\,\right]+N_t^{Q,D}\quad \qf,\quad t\in\T\cap\nat_0.
\end{equation}
\end{proposition}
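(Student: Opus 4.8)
The plan is to derive the Riesz decomposition from the Doob decomposition of Corollary~\ref{supermartingales}(1) by analyzing the behavior of the predictable increasing part at the time horizon. Recall that $M^{Q,D}_t = D_t\alpha_t(Q\otimes D) + A^{Q,D}_t$ with $A^{Q,D}_t = \sum_{k=0}^{t-1}D_k\alpha_{k,k+1}(Q\otimes D)$ is a nonnegative $Q$-martingale, and that $D_t\alpha_t(Q\otimes D)\ge 0$. First I would treat the finite-horizon case $T<\infty$: here the claimed potential part is the full sum $E_Q[\sum_{k=t}^{T-1}D_k\alpha_{k,k+1}(Q\otimes D)\mid\F_t]$ and $N^{Q,D}\equiv 0$. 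Since $M^{Q,D}$ is a martingale, $M^{Q,D}_t = E_Q[M^{Q,D}_T\mid\F_t] = E_Q[D_T\alpha_T(Q\otimes D) + A^{Q,D}_T\mid\F_t]$; but at the terminal time the penalty $\alpha_T$ of the one-period trivial risk measure on $\R^\infty_{T,T}$ vanishes (the acceptance set $\A_T$ contains all of $\R^\infty_T$ up to the cash-invariance normalization, so $\alpha_T\equiv 0$, consistent with the convention $D_{T+1}:=0$), hence $D_T\alpha_T(Q\otimes D)=0$ $Q$-a.s. and $M^{Q,D}_t = E_Q[A^{Q,D}_T\mid\F_t]$. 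Rearranging $M^{Q,D}_t = D_t\alpha_t(Q\otimes D)+A^{Q,D}_t$ and using $A^{Q,D}_T - A^{Q,D}_t = \sum_{k=t}^{T-1}D_k\alpha_{k,k+1}(Q\otimes D)$ gives \eqref{riesz} with $N^{Q,D}\equiv 0$.

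For $T=\infty$ the argument is the standard Riesz-type decomposition of a nonnegative supermartingale. Since $A^{Q,D}$ is a nonnegative, nondecreasing, predictable process whose expectation equals $E_Q[M^{Q,D}_0]=E_Q[D_0\alpha_0(Q\otimes D)]=\alpha_0(Q\otimes D)<\infty$ for $\bar Q\in\bar\Q_0$ (using $D_0=1$), the limit $A^{Q,D}_\infty:=\lim_{t\to\infty}A^{Q,D}_t = \sum_{k=0}^\infty D_k\alpha_{k,k+1}(Q\otimes D)$ exists $Q$-a.s.\ and in $L^1(Q)$ by monotone convergence. Then the martingale $M^{Q,D}$ is closed, or at least uniformly integrable after subtracting the potential part: set $P^{Q,D}_t := E_Q[A^{Q,D}_\infty - A^{Q,D}_t\mid\F_t] = E_Q[\sum_{k=t}^\infty D_k\alpha_{k,k+1}(Q\otimes D)\mid\F_t]$, which is a nonnegative potential (it converges to $0$ in $L^1(Q)$ as $t\to\infty$ by dominated convergence), and define $N^{Q,D}_t := D_t\alpha_t(Q\otimes D) - P^{Q,D}_t$. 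I would then verify that $N^{Q,D}$ is a $Q$-martingale: indeed $N^{Q,D}_t = M^{Q,D}_t - A^{Q,D}_t - E_Q[A^{Q,D}_\infty\mid\F_t] + A^{Q,D}_t = M^{Q,D}_t - E_Q[A^{Q,D}_\infty\mid\F_t]$, a difference of two $Q$-martingales.

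It remains to identify $N^{Q,D}_t$ with the stated limit $\lim_{s\to\infty}E_Q[D_s\alpha_s(Q\otimes D)\mid\F_t]$ and to check nonnegativity. For the limit formula, apply the conditional expectation $E_Q[\,\cdot\mid\F_t]$ to the identity $D_s\alpha_s(Q\otimes D) = P^{Q,D}_s + N^{Q,D}_s$ for $s\ge t$: by the tower property $E_Q[N^{Q,D}_s\mid\F_t]=N^{Q,D}_t$, while $E_Q[P^{Q,D}_s\mid\F_t]=E_Q[\sum_{k=s}^\infty D_k\alpha_{k,k+1}(Q\otimes D)\mid\F_t]\to 0$ as $s\to\infty$ by dominated convergence (the tail sums are dominated by $A^{Q,D}_\infty\in L^1(Q)$). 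Hence $E_Q[D_s\alpha_s(Q\otimes D)\mid\F_t]\to N^{Q,D}_t$ in $L^1(Q)$, which gives the displayed formula (and monotonicity of $s\mapsto E_Q[D_s\alpha_s(Q\otimes D)\mid\F_t]$, from the supermartingale property of $D_s\alpha_s$, makes the limit pointwise as well). Nonnegativity of $N^{Q,D}$ is then immediate since it is the decreasing limit of the nonnegative random variables $E_Q[D_s\alpha_s(Q\otimes D)\mid\F_t]$. The main obstacle is bookkeeping rather than conceptual: one must make sure the convention $\alpha_t(Q\otimes D)=\infty$ on $\{D_t=0\}$ together with $0\cdot\infty:=0$ is handled consistently so that all the $Q$-integrability and convergence statements hold $Q$-a.s.\ (on $\{D_t>0\}$, where the penalties are genuinely finite), and to invoke the analogues \cite[Proposition 21]{ap9} and \cite[Proposition 2.3.2]{ipen7} for the random-variable risk measure $\bar\rho_t$ on the optional filtration, transferring them via \eqref{corpen} and Corollary~\ref{cor:condexp}.
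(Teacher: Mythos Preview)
Your argument is correct and follows essentially the same route as the paper's proof: both split $D_t\alpha_t(Q\otimes D)$ into the potential $E_Q[\sum_{k\ge t}D_k\alpha_{k,k+1}(\bar Q)\mid\F_t]$ and the martingale remainder $N^{Q,D}$, using property~(iii) of Theorem~\ref{eqcharp} (in your case via the Doob decomposition it implies) together with monotone convergence. One small slip: the expectation of $A^{Q,D}_t$ is \emph{bounded by}, not equal to, $E_Q[M^{Q,D}_0]=\alpha_0(Q\otimes D)$, since $E_Q[A^{Q,D}_t]=\alpha_0(Q\otimes D)-E_Q[D_t\alpha_t(Q\otimes D)]$; but $\le$ is all you need for $A^{Q,D}_\infty\in L^1(Q)$.
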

\begin{proof} 
Property (iii) of Theorem~\ref{eqcharp} yields
\begin{equation}\label{lim}
D_t\alpha_t(\bar{Q})=E_Q\left[\,\sum_{k=t}^{t+s-1}D_k\alpha_{k,k+1}(\bar{Q})\,\big|\,\F_t\,\right]+E_Q[\,D_{t+s}\alpha_{t+s}(\bar{Q})\,|\,\F_t\,]\quad \qf
\end{equation}
for all $t,s\in\nat_0$ s.t. $t+s\in\T$ and all $\bar{Q}\in\map$. For $T<\infty$ the claim is obvious, since $\alpha_{T}(\bar Q)=0$ $P$-a.s.. For $T=\infty$, by monotonicity there exists the limit
\[
S_t^{Q,D}= \lim_{s\to\infty}E_Q\left[\,\sum_{k=t}^{s}D_k\alpha_{k,k+1}(\bar{Q})\,\big|\,\F_t\,\right]=E_Q\left[\,\sum_{k=t}^{\infty}D_k\alpha_{k,k+1}(\bar{Q})\,\big|\,\F_t\,\right]\quad \qf
\]
for all $t\in\T\cap\nat_0$, where we have used the monotone convergence theorem for the second equality. Thus  \eqref{lim} implies existence of 
\[
N_t^{Q,D}= \lim_{s\to\infty}E_Q[\,D_{t+s}\alpha_{t+s}(\bar{Q})\,|\,\F_t\,]\quad \qf,\quad t\in\T\cap\nat_0
\]
and
\[
D_t\alpha_t(\bar{Q})=S_t^{Q,D}+N_t^{Q,D}\quad \qf,\quad t\in\T\cap\nat_0.
\]
The process $(S_t^{Q,D})$ is a $Q$-potential. Indeed, 
\begin{equation*}
E_Q[\,S_{t}^{Q,D}\,]\le E_Q\left[\,\sum_{k=0}^{\infty}D_k\alpha_{k,k+1}(\bar{Q})\,\big|\,\F_t\,\right]\le\alpha_0(\bar{Q})<\infty
\end{equation*}
and $E_Q[\,S_{t+1}^{Q,D}\,|\,\F_t\,]\le S_t^{Q,D}\;\,Q$-a.s. for all $t\in\T\cap\nat_0$ by definition. Moreover, monotone convergence implies
\[
\lim_{t\to\infty}E_Q[\,S_t^{Q,D}\,]=E_Q\left[\,\lim_{t\to\infty}\sum_{k=t}^{\infty}D_k\alpha_{k,k+1}(\bar{Q})\,\right]=0\quad\qf.
\]
The process $(N_t^{Q,D})$ is a nonnegative $Q$-martingale, since 
\begin{align*}
E_Q[N_{t+1}^{Q,D}-N_{t}^{Q,D}|\F_t]&=E_Q[D_{t+1}\alpha_{t+1}(\bar{Q})|\F_t]-D_t\alpha_t(\bar{Q})-E_Q[S_{t+1}^{Q,D}-S_t^{Q,D}|\F_t]\\
&=D_t\alpha_{t,t+1}(\bar{Q})-D_t\alpha_{t,t+1}(\bar{Q})=0\qquad Q\mbox{-a.s.}
\end{align*}
for all $t\in\T\cap\nat_0$ by property (iii) of Theorem~\ref{eqcharp} and the definition of $(S_t^{Q,D})$.
\end{proof}
The nonnegative martingale $N^{Q,D}$, which may appear in the decomposition \eqref{riesz} of the penalty process for $T=\infty$, plays the role of a ``bubble". Indeed, it appears on top of the ``fundamental" component which is given by the potential $S^{Q,D}$ generated by the one-step penalties, and this additional penalization causes an excessive neglect of the model $Q\otimes D$ in assessing the risk. As a result, asymptotic safety breaks down under the model $Q\otimes D$, as explained in the next section.

\subsection{Asymptotic safety and asymptotic precision}\label{asympsec}
In this section we discuss the asymptotic properties of dynamic convex risk measures for processes. Throughout this section  we consider the case $\T=\nat_0\cup\{\infty\}$. We fix a time consistent dynamic convex risk measure for processes $(\rt)_{t\in\nat_0}$. As before, $(\bar\rt)_{t\in\nat_0}$ denotes the corresponding time consistent dynamic convex risk measure for random variables on product space given by \eqref{rrp}.

Let $\bar Q=Q\otimes\gamma=Q\otimes D\in\bar Q_0$, and let us focus on the behavior of  $(\bar\rt)_{t\in\nat_0}$ under $\bar Q$. The measure $\bar Q$ will now play the same role as the reference measure $P$ in \cite[Section 5]{fp6}. In particular, the assumption $\Q^*\ne\emptyset$ from \cite[Section 5]{fp6} is satisfied for $\bar Q$, since $\bar Q\in\bar Q_0$. 

The results in \cite{fp6} imply the existence of the limits
\[
\bar\alpha_\infty(\bar Q):=\lim_{t\to\infty}\bar\alpha_t(\bar Q)\quad\text{and}\quad \bar\rho_\infty(X):=\lim_{t\to\infty}\bar\rt(X)\quad\bar Q \text{-a.s.}
\]
for all $X\in\Ri$.  Due to \eqref{rrp} and \eqref{corpen}, we have
\begin{equation}\label{as1}
\bar\rho_\infty(X)=-XI_{\nat_0}+\rho_\infty(X)I_{\{\infty\}}\quad \text{and}\quad \bar\alpha_\infty(\bar Q)=\alpha_\infty(\bar Q)I_{\{\infty\}}\quad\bar Q \text{-a.s.},
\end{equation}
where
\[
\rho_\infty(X):=\lim_{t\to\infty}\rt(X)\quad \text{and}\quad \alpha_\infty(\bar Q)=\lim_{t\to\infty}\alpha_t(\bar Q)\quad Q \text{-a.s.}\;\,\text{on}\;\,\{D_\infty>0\}
\]
by 3 of Remark~\ref{rmku}.

\begin{definition}\label{def:assymptoticsafety}
We call a dynamic convex risk measure for processes $(\rt)_{t\in\nat_0}$ \emph{asymptotically safe under the model $\bar Q=Q\otimes D$} if the limiting capital requirement $\rho_\infty(X)$ covers the final loss $-X_\infty$, i.e. 
 \[
 \rho_\infty(X)\ge-X_\infty\quad Q \text{-a.s.}\;\,\text{on}\;\,\{D_\infty>0\}
 \]
for any $X\in\Ri$.
\end{definition}
Note that due to \eqref{as1} asymptotic safety of $(\rt)_{t\in\nat_0}$ is equivalent to the condition
\[
\bar\rho_\infty(X)\ge-X\quad \bar Q\text{-a.s.},
\]
i.e., to asymptotic safety of $(\bar\rt)_{t\in\nat_0}$ in the sense of \cite[Definition 5.2]{fp6}. 

The following result translates \cite[Theorem 5.4]{fp6} and \cite[Corollary 3.1.5]{ipen7} to our present setting. It characterizes asymptotic safety by the absence of bubbles in the penalty process. This is plausible since, as we saw in Subsection~\ref{rieszsec}, such bubbles reflect an excessive neglect of models which may be relevant for the risk assessment.
\begin{theorem}\label{th:assymptoticsafety}
For a dynamic convex risk measure for processes $(\rt)_{t\in\nat_0}$ and for any model $\bar Q=Q\otimes D\in\bar\Q_0$ the following conditions are equivalent:
\begin{enumerate}
 \item $(\rt)$ is asymptotically safe under the model $\bar Q$;
\item the model $\bar Q$ has no bubble, i.e., the martingale $N^{Q,D}$ in the Riesz decomposition \eqref{riesz} of the discounted penalty process $(D_t\alpha_t(\bar Q))_{t\in\nat_0}$ vanishes;
\item the discounted penalty process $(D_t\alpha_t(\bar Q))_{t\in\nat_0}$ is a $Q$-potential;
\item no model $\bar R\ll\bar Q$ with $\alpha_0(\bar R)<\infty$ admits bubbles.
\end{enumerate}
\end{theorem}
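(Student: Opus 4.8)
The plan is to reduce everything to the product-space picture and then transfer the corresponding result \cite[Theorem 5.4]{fp6} for risk measures on random variables. By \eqref{as1}, asymptotic safety of $(\rt)$ under $\bar Q$ is exactly asymptotic safety of $(\bar\rt)$ under $\bar Q$ in the sense of \cite[Definition 5.2]{fp6}, so condition~(1) translates verbatim. The role of the reference measure in \cite[Section 5]{fp6} is played here by $\bar Q$, which is legitimate since $\bar Q\in\bar\Q_0$ guarantees $\alpha_0(\bar Q)<\infty$, i.e.\ the standing assumption $\Q^*\ne\emptyset$ of that section. Thus \cite[Theorem 5.4]{fp6} gives: $(\bar\rt)$ is asymptotically safe under $\bar Q$ iff the minimal penalty process $(\bar\alpha_t(\bar Q))_t$ is a $\bar Q$-potential iff its Riesz decomposition has no martingale part iff no $\bar R\ll\bar Q$ with $\bar\alpha_0(\bar R)<\infty$ admits a bubble. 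It remains to match each of these statements with conditions (2)--(4) as stated here, which is where the decomposition $\bar Q=Q\otimes D$ and Proposition~\ref{rieszp} enter.

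The key identification is the following dictionary between objects on $(\bar\Omega,\bar\F_t,\bar P)$ and objects on $(\Omega,\F_t,P)$. By \eqref{corpen}, $\bar\alpha_t(\bar Q)=\alpha_t(\bar Q)1_{\T_t}$, and unwinding the product structure one checks that, for a $\bar\F_t$-measurable nonnegative random variable of this special product form, being a $\bar Q$-supermartingale (resp.\ martingale, resp.\ potential) relative to $(\bar\F_t)$ is equivalent to $(D_t\alpha_t(\bar Q))_t$ being a $Q$-supermartingale (resp.\ martingale, resp.\ potential) relative to $(\F_t)$; here the factor $D_t$ appears precisely because $\bar Q$ restricted to $\T_t\times\Omega$ has $Q$-density proportional to $D_t$, and the normalization is handled by Corollary~\ref{cor:condexp}. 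Under this dictionary, ``$(\bar\alpha_t(\bar Q))$ is a $\bar Q$-potential'' becomes condition~(3), and Proposition~\ref{rieszp} already tells us that the martingale part of the Doob decomposition of $(D_t\alpha_t(\bar Q))$ is exactly $N^{Q,D}$ in \eqref{riesz}; so vanishing of the martingale in the $\bar Q$-Riesz decomposition is precisely $N^{Q,D}\equiv 0$, which is condition~(2). The steps I would carry out, in order: (a) spell out the dictionary and verify the supermartingale/potential correspondence via Corollary~\ref{cor:condexp}; (b) quote \cite[Theorem 5.4]{fp6} for $(\bar\rt)$ and $\bar Q$; (c) rewrite ``no bubble'' and ``potential'' in terms of $N^{Q,D}$ and \eqref{riesz} using Proposition~\ref{rieszp}, giving (1)$\Leftrightarrow$(2)$\Leftrightarrow$(3); (d) translate the ``no $\bar R\ll\bar Q$ with finite penalty has a bubble'' clause, using that $\bar R\in\map$ decomposes as $\bar R=R\otimes D'$ by Theorem~\ref{propq} and that $\bar\alpha_0(\bar R)<\infty$ iff $\alpha_0(\bar R)<\infty$, to obtain condition~(4).

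The one genuinely nontrivial point — the ``hard part'' — is step~(d), the equivalence with (4), i.e.\ that absence of bubbles for $\bar Q$ itself already forces absence of bubbles for every $\bar R\ll\bar Q$ with $\alpha_0(\bar R)<\infty$. On the product space this is the content of the corresponding implication in \cite[Theorem 5.4]{fp6}, whose proof rests on time consistency (here Assumption~\ref{cts}) together with the fact that the one-step penalties $\alpha_{k,k+1}$ are subadditive-type local objects, so that a bubble under $\bar R$ would, via the recursive structure (iii) of Theorem~\ref{eqcharp}, be ``dominated'' by and hence force a bubble under $\bar Q$. The only thing to check beyond invoking \cite{fp6} is that the class of admissible perturbations is preserved under the decomposition: if $\bar R\ll\bar Q$ then $R\ll Q$ locally and $D'$ is again a legitimate predictable discounting process for $R$, so $\bar R=R\otimes D'\in\map$ with $\alpha_0(\bar R)<\infty$ lies in the set over which condition~(4) quantifies; conversely any such $R\otimes D'$ arises this way. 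Once this bookkeeping is done, all four equivalences follow directly from the product-space theorem, and no further estimates are needed.
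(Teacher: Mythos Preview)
Your overall strategy---reduce to product space via \eqref{rrp} and \eqref{corpen}, then invoke \cite[Theorem 5.4]{fp6} with $\bar Q$ in the role of the reference measure---is exactly the paper's approach, and the equivalence (2)$\Leftrightarrow$(3) via the Riesz decomposition \eqref{riesz} is right. However, you have the difficulty of the proof inverted. Step~(d), which you call the ``hard part,'' is in fact trivial: asymptotic safety under $\bar Q$ is a $\bar Q$-a.s.\ inequality ($\bar\rho_\infty(X)\ge -X$), and any $\bar Q$-a.s.\ statement is inherited by every $\bar R\ll\bar Q$; so (1)$\Rightarrow$(4) is immediate once (1)$\Leftrightarrow$(2) is known, and (4)$\Rightarrow$(2) is tautological. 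The sketch you give for (d)---involving one-step penalties and a domination argument through Theorem~\ref{eqcharp}(iii)---is not what is needed and would be unnecessarily complicated.

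The actual work, which you bury in your ``dictionary'' step~(a), is the translation in (1)$\Leftrightarrow$(2). The paper does not extract a four-way equivalence from \cite[Theorem 5.4]{fp6}; it only uses that asymptotic safety of $(\bar\rt)$ under $\bar Q$ is equivalent to $\bar\alpha_\infty(\bar Q)=0$ $\bar Q$-a.s.\ and in $L^1(\bar Q)$. The point is then the concrete computation
\[
E_{\bar Q}[\bar\alpha_t(\bar Q)]=E_Q\Big[\sum_{s\in\T_t}\gamma_s\,\alpha_t(\bar Q)\Big]=E_Q[D_t\alpha_t(\bar Q)],
\]
via Corollary~\ref{cor:condexp} and \eqref{gammatodisc2}, so that $\bar\alpha_t(\bar Q)\to 0$ in $L^1(\bar Q)$ is equivalent to $E_Q[D_t\alpha_t(\bar Q)]\to 0$, which in turn is equivalent to $N^{Q,D}\equiv 0$ because $N^{Q,D}$ is a nonnegative $Q$-martingale with $N_0^{Q,D}=\lim_t E_Q[D_t\alpha_t(\bar Q)]$. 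One then checks separately that $N^{Q,D}\equiv 0$ forces $\alpha_\infty(\bar Q)=0$ $Q$-a.s.\ on $\{D_\infty>0\}$, hence $\bar\alpha_\infty(\bar Q)=0$ $\bar Q$-a.s.\ by \eqref{as1}. Your proposal would go through once you spell this out and drop the unnecessary argument for~(d).
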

\begin{proof}
Properties 2 and 3 are equivalent by \eqref{riesz}, and obviously 4 implies 2. \\
To prove $1 \Leftrightarrow 2$ we use \cite[Theorem 5.4]{fp6}. There it was shown that $(\bar\rt)$ is asymptotically safe under $\bar Q$ if and only if $\bar\alpha_\infty(\bar Q)=0$ $\bar Q$-a.s.\ and in $L^1(\bar Q)$. By Corollary~\ref{cor:condexp}, \eqref{corpen}, and \eqref{gammatodisc2} we have 
\[
 E_{\bar Q}[\bar\alpha_t(\bar Q)]=E_Q\left[ \sum_{s\in\T_t}\gamma_s\alpha_t(\bar Q)\right]=E_Q\lp D_t\alpha_t(\bar Q)\rp.
\]
Thus $\bar\alpha_t(\bar Q)\to0$ in $L^1(\bar Q)$ if and only if $D_t\alpha_t(\bar Q)\to0$ in $L^1(Q)$. This is equivalent to $N^{Q,D}\equiv0$, since the bubble $N^{Q,D}=(N_t^{Q,D})_{t\in\nat_0}$ is a nonnegative $Q$-martingale with $N_0^{Q,D}=\lim_{t\to\infty}E_Q\lp D_t\alpha_t(\bar Q)\rp$. Due to \eqref{riesz}, $N^{Q,D}\equiv0$ also implies $\alpha_\infty(\bar Q)=0$ $Q$-a.s.\ on $\{D_\infty>0\}$, thus $\bar\alpha_\infty(\bar Q)=0$ $\bar Q$-a.s. by \eqref{as1}. \\
To prove $2 \Rightarrow 4$ note that asymptotic safety under $\bar Q$ implies asymptotic safety under any model $\bar R\ll\bar Q$ with $\alpha_0(\bar R)<\infty$, thus no model $\bar R$ admits bubbles by the same reasoning as above. 
\end{proof}

\begin{definition}\label{def:assymptoticpre}
 We call a dynamic convex risk measure for processes $(\rt)_{t\in\nat_0}$ \emph{asymptotically precise under the model $\bar Q=Q\otimes D\in\bar\Q_0$} if 
 \[
 \rho_\infty(X)=-X_\infty\quad Q \text{-a.s.}\;\,\text{on}\;\,\{D_\infty>0\}
 \]
for any $X\in\Ri$.
\end{definition}
By \eqref{as1}, asymptotic precision of $(\rt)$ is equivalent to asymptotic precision of $(\bar \rt)$ in the sense of \cite[Definition 5.9]{fp6}.  The following corresponds to  \cite[Lemma 2.7]{nau7}.
\begin{lemma}\label{felix}
 A dynamic convex risk measure $(\rt)_{t\in\nat_0}$ is asymptotically precise under the model $\bar Q=Q\otimes D\in\bar\Q_0$ if and only if
\[
 \rho_\infty(X)\le-X_\infty\quad Q\text{-a.s}\;\,\text{on}\;\,\{D_\infty>0\}\quad\text{for all}\quad X\in\Ri.
\]
\end{lemma}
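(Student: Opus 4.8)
\emph{Proof plan.} One implication is trivial: if $(\rt)_{t\in\nat_0}$ is asymptotically precise under $\bar Q=Q\otimes D$, then by definition $\rho_\infty(X)=-X_\infty$ $Q$-a.s.\ on $\{D_\infty>0\}$ for every $X\in\Ri$, which in particular yields the stated inequality. So the plan is to prove the converse: from the one-sided bound $\rho_\infty(X)\le-X_\infty$ on $\{D_\infty>0\}$ holding for \emph{all} $X\in\Ri$, I would deduce the reverse inequality $\rho_\infty(X)\ge-X_\infty$ on $\{D_\infty>0\}$ for every $X\in\Ri$ --- i.e.\ asymptotic safety under $\bar Q$ --- and then combine the two to get equality.

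The key observation I would use is that normalization together with conditional convexity of $\rt$ (applied with $\lambda=\tfrac12$) gives, for every $t\in\nat_0$,
\[
 0=\rt(0)=\rt\big(\tfrac12\pi_{t,T}(X)+\tfrac12\pi_{t,T}(-X)\big)\le\tfrac12\rt(X)+\tfrac12\rt(-X),
\]
so that $\rt(X)+\rt(-X)\ge0$ for all $t$, where $\rt(-X)$ is understood as $\rt\circ\pi_{t,T}(-X)$. Since $\Ri$ is a vector space, $-X$ again belongs to $\Ri$, so the limits $\rho_\infty(X)=\lim_{t}\rt(X)$ and $\rho_\infty(-X)=\lim_{t}\rt(-X)$ both exist (and are bounded, by normalization, monotonicity and cash invariance) $Q$-a.s.\ on $\{D_\infty>0\}$ by the facts recalled at the beginning of Subsection~\ref{asympsec}; letting $t\to\infty$ gives $\rho_\infty(X)+\rho_\infty(-X)\ge0$ $Q$-a.s.\ on $\{D_\infty>0\}$. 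I would then apply the hypothesis of the lemma to the process $-X\in\Ri$, which has $(-X)_\infty=-X_\infty$, obtaining $\rho_\infty(-X)\le X_\infty$ $Q$-a.s.\ on $\{D_\infty>0\}$. Combining the last two facts,
\[
 \rho_\infty(X)\ge-\rho_\infty(-X)\ge-X_\infty\quad Q\text{-a.s.\ on }\{D_\infty>0\},
\]
and together with the hypothesis $\rho_\infty(X)\le-X_\infty$ this yields $\rho_\infty(X)=-X_\infty$ $Q$-a.s.\ on $\{D_\infty>0\}$ for all $X\in\Ri$, i.e.\ asymptotic precision.

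I do not expect a genuine obstacle here: the argument is essentially the translation to the present setting of \cite[Lemma 2.7]{nau7}, and one could alternatively derive it on the product space by rewriting the hypothesis as $\bar\rho_\infty(X)\le-X$ $\bar Q$-a.s.\ for all $X\in\LTp$ via \eqref{as1} and the identity $\gamma_\infty=D_\infty$, and then invoking the cited lemma for $(\bar\rho_t)$. The only point that needs care is bookkeeping: $\rho_\infty$ (and $\alpha_\infty$) are defined only $Q$-a.s.\ on $\{D_\infty>0\}$, so every almost-sure statement above must be localized to that set, and one must note that the hypothesis is genuinely available at $-X$, which holds precisely because $\Ri$ is closed under negation.
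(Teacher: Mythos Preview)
Your proof is correct and follows essentially the same approach as the paper: both arguments use convexity and normalization to obtain $\rho_\infty(X)\ge-\rho_\infty(-X)$, then apply the hypothesis to $-X$. The only cosmetic difference is that the paper works on the product space and cites \cite[Lemma~5.1]{fp6} to assert that $\bar\rho_\infty$ is convex and normalized (deriving $\bar\rho_\infty(X)\ge-\bar\rho_\infty(-X)$ directly for the limit), whereas you establish the inequality at each finite $t$ and pass to the limit---an alternative you yourself mention at the end.
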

\begin{proof}
 By \cite[Lemma 5.1]{fp6} the functional $\bar\rho_\infty$ is convex and normalized. This implies
\[
 \bar\rho_\infty(X)\ge-\bar\rho_\infty(-X)\quad\text{for all}\quad X\in\Ri.
\]
Thus we obtain
\[
-X\ge \bar\rho_\infty(X)\ge-\bar\rho_\infty(-X)\ge-X\quad\bar Q\text{-a.s}\quad\text{for all}\quad X\in\Ri,
\]
which is equivalent to $\rho_\infty(X)=X_\infty$ $Q$-a.s.\ on $\{D_\infty>0\}$ by \eqref{as1}. 
\end{proof}

The following result translates \cite[Proposition 5.11]{fp6} to our present setting.
\begin{proposition}\label{asymptoticpre}
Assume that for each $X\in\Ri$ the supremum in the robust representation \eqref{newreprinf} of $\rho_0(X)$ is attained by some ``worst case'' measure $Q^X\otimes\gamma^X=\bar Q^X$, such that  $\bar Q^X\approx\bar Q$. Then $(\rt)_{t\in\nat_0}$ is asymptotically precise under $\bar Q$.
\end{proposition}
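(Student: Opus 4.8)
The plan is to reduce the claim, via Lemma~\ref{felix}, to an \emph{upper} bound on $\rho_\infty$, and then to exploit the fact that the worst-case measure in \eqref{newreprinf} is attained at \emph{every} time, not only at time~$0$. By Lemma~\ref{felix}, asymptotic precision under $\bar Q$ is equivalent to $\rho_\infty(X)\le -X_\infty$ $Q$-a.s.\ on $\{D_\infty>0\}$ for every $X\in\Ri$, which by \eqref{as1} is in turn equivalent to $\bar\rho_\infty(X)\le -X$ $\bar Q$-a.s. Now fix $X\in\Ri$ and let $\bar Q^X=Q^X\otimes\gamma^X=Q^X\otimes D^X\approx\bar Q$ be the attaining measure. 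Attainment together with $|X|\le\|X\|_\infty$ forces $\alpha_0(\bar Q^X)<\infty$, so $\bar Q^X\in\bar\Q_0$; hence the limits $\rho_\infty(X)=\lim_t\rho_t(X)$ and $\alpha_\infty(\bar Q^X)=\lim_t\alpha_t(\bar Q^X)$ exist $Q^X$-a.s.\ on $\{D^X_\infty>0\}$, and \eqref{as1} holds relative to $\bar Q^X$ as well. Since $\bar Q^X\approx\bar Q$ it therefore suffices to prove $\rho_\infty(X)\le -X_\infty$ $Q^X$-a.s.\ on $\{D^X_\infty>0\}$. (Alternatively, once this reduction to the product space is made, the statement is exactly \cite[Proposition~5.11]{fp6} read with the reference measure replaced by $\bar Q\in\bar\Q_0$, which is legitimate as used throughout this subsection.)

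The key step is to show that $\bar Q^X$ realizes the conditional robust representation of $\rho_t(X)$ at \emph{every} $t$, i.e.
\begin{equation*}
\rho_t(X)=\frac1{D^X_t}E_{Q^X}\Big[-\sum_{s\in\T_t}\gamma^X_sX_s\,\Big|\,\F_t\Big]-\alpha_t(\bar Q^X)\qquad Q^X\text{-a.s.\ on }\{D^X_t>0\}.
\end{equation*}
On the product space, iterated recursiveness \eqref{recurs} gives $\bar\rho_0(X)=\bar\rho_0(-\bar\rho_t(X))=\sup_{\bar R\in\map}\big(E_{\bar R}[\bar\rho_t(X)]-\bar\alpha_{0,t}(\bar R)\big)$, where $\bar\alpha_{0,t}$ is the minimal penalty of $\bar\rho_0$ restricted to $\Ltp$. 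Using the penalty cocycle $\bar\alpha_0(\bar R)=\bar\alpha_{0,t}(\bar R)+E_{\bar R}[\bar\alpha_t(\bar R)]$ for the time-consistent risk measure $\bar\rho$ (the iterated one-step identity behind Theorem~\ref{eqcharp}(iii)), the tower property, and the conditional robust representation of Corollary~\ref{cororep}(2), one obtains for every $\bar R\in\map$
\begin{equation*}
E_{\bar R}[-X]-\bar\alpha_0(\bar R)=E_{\bar R}\big[E_{\bar R}[-X\,|\,\bar\F_t]-\bar\alpha_t(\bar R)\big]-\bar\alpha_{0,t}(\bar R)\le E_{\bar R}[\bar\rho_t(X)]-\bar\alpha_{0,t}(\bar R)\le\bar\rho_0(X),
\end{equation*}
and for $\bar R=\bar Q^X$ the leftmost term equals the rightmost one, so both inequalities are equalities. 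In particular $E_{\bar Q^X}[-X\,|\,\bar\F_t]-\bar\alpha_t(\bar Q^X)$ and $\bar\rho_t(X)$ have the same $\bar Q^X$-expectation; since the former is $\le$ the latter $\bar Q^X$-a.s., they agree $\bar Q^X$-a.s., and restricting this identity to the $\T_t$-part and translating via \eqref{rrp}, \eqref{corpen} and Corollary~\ref{cor:condexp} gives the displayed formula. I expect this to be the main obstacle, mainly because of the product-space bookkeeping and the need to invoke (or re-derive from Theorem~\ref{eqcharp}(iii)) the penalty cocycle; equivalently, the step says that attainment at time~$0$ turns the $Q^X$-supermartingale $W^{Q^X,D^X}(X)$ of Corollary~\ref{supermartingales} into a $Q^X$-martingale.

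It then remains to let $t\to\infty$ in the displayed identity on $\{D^X_\infty>0\}$, where $D^X_t\searrow D^X_\infty>0$. The left side tends to $\rho_\infty(X)$, and $\alpha_t(\bar Q^X)\to\alpha_\infty(\bar Q^X)\ge 0$, both $Q^X$-a.s. Splitting off the term $s=\infty$ in the conditional expectation: since $\gamma^X_\infty=D^X_\infty$ and $X_\infty$ is $\F_\infty$-measurable, L\'evy's theorem gives $E_{Q^X}[\gamma^X_\infty X_\infty\,|\,\F_t]\to D^X_\infty X_\infty$ $Q^X$-a.s., while the remainder is bounded in absolute value by $\|X\|_\infty\,E_{Q^X}[\,\sum_{t\le s<\infty}\gamma^X_s\,|\,\F_t]=\|X\|_\infty\big(D^X_t-E_{Q^X}[D^X_\infty\,|\,\F_t]\big)\to 0$ $Q^X$-a.s.\ (using predictability of $D^X$ and, again, L\'evy's theorem). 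Dividing by $D^X_t\to D^X_\infty>0$ yields $\rho_\infty(X)+\alpha_\infty(\bar Q^X)=-X_\infty$, hence $\rho_\infty(X)=-X_\infty-\alpha_\infty(\bar Q^X)\le -X_\infty$ $Q^X$-a.s.\ on $\{D^X_\infty>0\}$. By the reduction of the first paragraph and Lemma~\ref{felix}, this establishes asymptotic precision under $\bar Q$.
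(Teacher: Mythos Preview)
Your proposal is correct and follows essentially the same approach as the paper: reduce via Lemma~\ref{felix}, show that the worst-case measure $\bar Q^X$ attains the supremum at every time $t$ (the paper simply cites \cite[Proposition~18]{ap9} for this), pass to the limit to get $\rho_\infty(X)=-X_\infty-\alpha_\infty(\bar Q^X)\le -X_\infty$, and conclude. The only cosmetic difference is that the paper takes the limit directly on product space via martingale convergence of $E_{\bar Q^X}[-X\,|\,\bar\F_t]$, whereas you translate back to $(\Omega,\F_t)$ and split off the $s=\infty$ term explicitly; both routes yield the same identity.
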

\begin{proof}
Since $\rho_0(X)=\bar\rho_0(X)$, $\bar{Q}^X$ is also a worst case measure for $\bar\rho_0(X)$.  By \cite[Proposition 18]{ap9}, the measure $\bar{Q}^X$ is then a worst case measure for $X$ at all times $t\in\nat_0$, i.e.,
\[
 \bar\rt(X)=E_{\bar Q^X}\lp -X|\bar\F_t\rp-\bar\alpha_t(\bar Q^X)\quad \bar Q\text{-a.s.}\quad\forall\;t\in\nat_0,
\]
and in particular $\bar{Q}^X\in\bar\Q_0$. By martingale convergence,
\[
 \bar\rho_\infty(X)=-X-\bar\alpha_\infty(\bar Q^X)\quad \bar Q\text{-a.s.},
\]
which is equivalent to 
\[
 \rho_\infty(X)=-X_\infty-\alpha_\infty(\bar Q^X)\quad Q\text{-a.s.}\;\,\text{on}\;\,\{D_\infty>0\}
\]
due to \eqref{as1}. Asymptotic precision of $(\rt)$ now follows from Lemma~\ref{felix}, since $\alpha_\infty(\bar Q^X)\ge0$ $Q$-a.s.\ on $\{D_\infty>0\}$.
\end{proof}

\subsection{A maximal inequality for the capital requirements}\label{maxsec}
For $X\in\R^{\infty}$ and $Q\otimes D\in\map$, we can interpret
\[
F_t^{Q,D}(X):=E_Q\lp-\sum_{s\in\T_t}\frac{\gamma_s}{D_t}X_s\mk\F_t\rp-\alpha_t(Q\otimes \gamma)\quad\text{on}\;\{D_t>0\}
\]
as a risk evaluation of the cash flow $X$  at time $t\in\T\cap\nat_0$, using the specific model $Q$ and the specific discounting process $D$. The next proposition provides, from the point of view of the model $Q$, a maximal inequality for the excess of the required capital $\rt(X)$ over the risk evaluation $F_t^{Q,D}(X)$.

\begin{proposition}
For $Q\otimes D\in\map$, $X\in\R^{\infty}$, and $c>0$ we have
\begin{equation}\label{maxin}
Q\left(\sup_{t\in\T\cap\nat_0}\left\{D_t\left(\rt(X)-F_t^{Q,D}(X)\right)\right\}\ge c\right)\le\frac{\rho_0(X)-F_0^{Q,D}(X)}{c}.
\end{equation}
\end{proposition}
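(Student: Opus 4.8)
The plan is to realize the quantity inside the probability in \eqref{maxin} as a nonnegative $Q$-supermartingale that starts at the deterministic value $\rho_0(X)-F_0^{Q,D}(X)$, and then to apply Doob's maximal inequality for nonnegative supermartingales. Throughout I may assume $\bar Q=Q\otimes D\in\bar\Q_0$, i.e.\ $\alpha_0(Q\otimes D)<\infty$; otherwise $F_0^{Q,D}(X)=-\infty$, so the right-hand side of \eqref{maxin} is $+\infty$ and there is nothing to prove. Write $Y_t:=D_t\bigl(\rt(X)-F_t^{Q,D}(X)\bigr)$ for $t\in\T\cap\nat_0$, with the convention $Y_t:=0$ on $\{D_t=0\}$ (where $\alpha_t(\bar Q)=\infty$ and $0\cdot\infty:=0$).

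First I would check that $Y_t\ge0$. On $\{D_t=0\}$ this holds by the convention just made. On $\{D_t>0\}$ I use the robust representation of Corollary~\ref{cororep}(2): since trivially $\bar Q\in\bar\Q_t(\bar Q)$, retaining only the term $\bar R\otimes\xi=\bar Q=Q\otimes\gamma$ in the essential supremum there yields $\rt(X)\ge\frac1{D_t}E_Q\lp-\sum_{s\in\T_t}\gamma_sX_s\mk\F_t\rp-\alpha_t(Q\otimes\gamma)=F_t^{Q,D}(X)$ $Q$-a.s.\ on $\{D_t>0\}$, so $Y_t\ge0$ there as well.

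The main step is to identify $Y$ with the sum of the $Q$-supermartingale $W^{Q,D}(X)$ from Corollary~\ref{supermartingales}(2) and a $Q$-martingale. Conditional cash invariance gives $\rt(X-X_t1_{\T_t})=\rt(X)+X_t$; the ``integration by parts'' identity \eqref{ibp} gives $\sum_{s=0}^tD_s(-\Delta X_s)=-\sum_{s=0}^t\gamma_sX_s-D_{t+1}X_t$; and $D_t-D_{t+1}=\gamma_t$. Substituting these into the definition \eqref{pw} of $W_t^{Q,D}(X)$ yields $W_t^{Q,D}(X)=D_t\rt(X)-\sum_{s=0}^{t-1}\gamma_sX_s+D_t\alpha_t(Q\otimes D)$. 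On the other hand, pulling the $\F_{t-1}$-measurable factor $D_t$ inside the conditional expectation in $F_t^{Q,D}(X)$ gives $Y_t=D_t\rt(X)+E_Q\lp\sum_{s\in\T_t}\gamma_sX_s\mk\F_t\rp+D_t\alpha_t(Q\otimes D)$ (on $\{D_t=0\}$ both sides vanish, since $D_t=0$ forces $\gamma_s=0$ for all $s\ge t$ and $\{D_t=0\}\in\F_t$). Setting $V:=\sum_{s\in\T}\gamma_sX_s$ — which is $Q$-essentially bounded by $\|X\|_\infty$ because $\bar Q\ll\bar P$ forces $\sum_{s}\gamma_s|X_s|\le\|X\|_\infty$ and $\sum_s\gamma_s=1$ $Q$-a.s., hence $V\in L^\infty(Q)$ — and splitting $E_Q[V\mk\F_t]=\sum_{s=0}^{t-1}\gamma_sX_s+E_Q\lp\sum_{s\in\T_t}\gamma_sX_s\mk\F_t\rp$ by adaptedness of $\gamma$, I obtain $Y_t=W_t^{Q,D}(X)+E_Q[V\mk\F_t]$ for all $t\in\T\cap\nat_0$.

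Since $\bigl(E_Q[V\mk\F_t]\bigr)$ is a (uniformly integrable) $Q$-martingale and $W^{Q,D}(X)$ is a $Q$-supermartingale by Corollary~\ref{supermartingales}(2), the process $Y$ is a nonnegative $Q$-supermartingale, and its initial value is $Y_0=D_0\bigl(\rho_0(X)-F_0^{Q,D}(X)\bigr)=\rho_0(X)-F_0^{Q,D}(X)$, a real constant because $\F_0$ is trivial. Doob's maximal inequality for nonnegative supermartingales then gives $Q\bigl(\sup_{t\in\T\cap\nat_0}Y_t\ge c\bigr)\le E_Q[Y_0]/c$, which is exactly \eqref{maxin}. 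The only genuinely delicate part is the bookkeeping in the previous paragraph: aligning the discounted increments $D_s(-\Delta X_s)$ with the optional masses $\gamma_sX_s$ via \eqref{ibp}, keeping track of the boundary term $D_{t+1}X_t$, handling $\{D_t=0\}$ correctly (where one uses that $D_t$ is $\F_{t-1}$-measurable and that $D_t=0$ kills the whole tail of $\gamma$), and, when $\T=\nat_0\cup\{\infty\}$, including the component $X_\infty$ in the bounded random variable $V$.
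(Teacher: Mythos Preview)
Your proof is correct and follows essentially the same approach as the paper: both express $Y_t=D_t(\rt(X)-F_t^{Q,D}(X))$ as $W_t^{Q,D}(X)+E_Q[\sum_{s\in\T}\gamma_sX_s\mid\F_t]$, verify nonnegativity via Corollary~\ref{cororep}(2), use Corollary~\ref{supermartingales}(2) for the supermartingale property, and conclude by the classical maximal inequality. Your write-up is somewhat more explicit than the paper's in verifying the algebraic identity via \eqref{ibp} and in handling the set $\{D_t=0\}$, but there is no substantive difference.
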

\begin{proof} Fix $Q\otimes D\in\map$. If $\alpha_0(Q\otimes D)=\infty$, then the inequality \eqref{maxin} holds trivially. Assume that $\alpha_0(Q\otimes D)<\infty$.
By 2) of Corollary~\ref{cororep} we have
\[
\rt(X)\geq E_Q\lp-\sum_{s\in\T_t} \frac{\gamma_s}{D_t}X_s\mk\F_t\rp-\alpha_t(Q\otimes \gamma)=F_t^{Q,D}(X)\quad \qf\; \textrm{on $\{D_t>0\}$}.
\]
Thus the $Q$-supermartingale $W^{Q,D}(X)$ defined in \eqref{pw} satisfies
\begin{equation*}
W_t^{Q,D}(X)\ge - E_Q\lp\sum_{s\in\T} \gamma_sX_s\mk\F_t\rp\qquad\qf\quad \textrm{on $\{D_t>0\}$}.
\end{equation*}
On $\{D_t=0\}=\{ D_s=0\; \forall s\in\T_t\}$, we have $W_t^{Q,D}(X)=-\sum_{s=0}^{t-1} D_s\Delta X_s$.
Therefore, the process 
\[
Y_t^{Q,D}(X):=D_t\left(\rt(X)-F_t^{Q,D}(X)\right)=W_t^{Q,D}(X)+E_Q\lp\sum_{s\in\T} \gamma_sX_s\mk\F_t\rp,\quad t\in\T\cap\nat_0,
\]
is a nonnegative $Q$-supermartingale, and \eqref{maxin} follows by a classical maximal inequality; cf., e.g., \cite[Theorem VII.3.1]{shi96}.
\end{proof}

\subsection{The coherent case}\label{cohsec}
Due to positive homogeneity of a coherent risk measure, the penalty function can only take values $0$ or $\infty$, and thus a coherent risk measure for processes $\rt$ is continuous from above if and only if it admits the robust representation
\begin{equation}\label{repcoh}
\rt(X)=\es_{Q\otimes\gamma\in\Q_t^0}E_Q\lp-\sum_{s\in\T_t}\gamma_sX_s\mk\F_t\rp,  \qquad X\in\R_t^\infty,
\end{equation}
where 
\[
 \Q_t^0:=\lk \bar Q\in\bar\Q_t\mk \alpha_t(\bar Q)=0\rk.
\]

The next theorem reformulates properties (iii) and (iv) of Theorem~\ref{eqcharp} in the coherent case. This involves a translation of the notions of \emph{pasting} of measures and \emph{stability} of sets as used in \cite{adehk7}, \cite{d6}, \cite{fp6} in context of coherent risk measures for random variables to our present framework. 

For $\bar{Q}^1,\bar{Q}^2\in\map$ such that $\bar Q^1\ll\bar Q^2$ on ${\bar\F_t}$ and for $B\in\bar\F_t$ we denote by $\bar{Q}^1\oplus^{t}_B\bar{Q}^2$ the \emph{pasting} of $\bar{Q}^1$ and $\bar{Q}^2$ in $t$ via $B$, i.e., the probability measure on $(\bar\Omega, \bar\F)$ defined by
\[
\bar{Q}^1\oplus^{t}_B\bar{Q}^2(A)=E_{\bar{Q}^1}\left[E_{\bar{Q}^2}[1_A|\bar{\F}_{t}]1_B+1_{B^c}1_A\right],\qquad  A\in\bar\F.
\]
Theorem~\ref{propq} yields the decomposition $\bar Q^i=Q^i\otimes D^i$,  $i=1,2$ with $Q^1\ll Q^2$ on $\F_t$.  Then 
\[
\bar{Q}^1\oplus^{t}_B\bar{Q}^2=Q^0\otimes D^0,
\]
 where $B_t=\{\omega|(\omega, t)\in B\}\in\F_t$, and $Q^0=Q^1\oplus^{t}_{B_t}Q^2$, i.e.
\[
Q^0(A)=E_{Q^1}\left[E_{Q^2}[1_A|\F_{t}]1_{B_t}+1_{B_t^c}1_A\right],\qquad A\in\F_T,
\]
and
\begin{eqnarray*}
\gamma^{0}_u = \left\{ \begin{array}{c@{\quad\quad}l} \gamma^{1}_u & u=0\pk t-1\\
D_{t}^1\dfrac{\gamma^{2}_u}{D_{t}^2}1_{\{D^2_t>0\}}1_{B_t}+\gamma^{1}_u1_{B_t^c}  & u\in\T_t.
\end{array} \right.
\end{eqnarray*}
Here $\gamma^{i}$ and $D^i$ are related to each other via \eqref{gammatodisc} and \eqref{disctogamma} for $i=0,1,2$. Note that $Q^0\in\mal, D^0\in\D(Q^0)$, in other words, the pasting of $Q^1\otimes D^1$ with $Q^2\otimes D^2$ admits a decomposition with the pasting of $Q^1$ with $Q^2$ and the pasting of $D^1$ with $D^2$. 
\begin{definition}
We call a set $\bar\Q\subseteq\map$ \emph{stable} if, whenever $\bar Q^1, \bar Q^2\in\bar\Q$ and $\bar Q^1\ll\bar Q^2$ on ${\bar\F_t}$, the pasting of $\bar{Q}^1$ and $\bar{Q}^2$ in $t$ via $B$ belongs to $\bar\Q$ for every $B\in\bar\F_t$ and all $t\in\T\cap\nat_0$. 
\end{definition}

We associate to any $\bar{Q}\in\map$ the sets
\[
\Q_t^0(\bar{Q}) = \lk \bar{R}\in\map\mk \bar{R}=\bar{Q}|_{\bar{\F}_t},\; \bar\alpha_t(\bar{R})=0\, \bar{Q}\textrm{-a.s.} \rk,
\]
and
\[
\Q_{t,t+s}^0(\bar{Q})=\lk \bar{R}\ll\bar{P}|_{\bar{\F}_{t+s}}\mk \bar{R}=\bar{Q}|_{\F_t},\; \bar\alpha_{t,t+s}(\bar{R})=0\; \bar{Q}\textrm{-a.s.} \rk.
\]

The notion of pasting corresponds to \emph{concatenation} defined in \cite[Definition 4.10]{cdk6} on $\A^1$, and the following corollary is related to \cite[Theorem 4.13, Corollary 4.14]{cdk6}.

\begin{theorem}\label{cormp}
Suppose that the dynamic risk measure \rtf is coherent. Then the following conditions are equivalent:
\begin{itemize}
\item[1.] \rtf is time consistent
\item[2.] For all $t\in\T$, $t<T$ and $\bar{Q}\in\map$,
\begin{equation*}
\Q_t^0(\bar{Q}) = \lk \bar{Q}^1\oplus^{t+1}_\Omega \bar{Q}^2\mk \bar{Q}^1\in\Q_{t,t+1}^0(\bar{Q}), \;\bar{Q}^2\in\Q_{t+1}^0(\bar{Q}^1)\rk.
\end{equation*}
\item[3.] For all $t\in\T$, $t<T$, $X\in\R^\infty$ and $\bar{Q}=Q\otimes D\in\map$ such that $\alpha_t(\bar{Q})=0\;\qf$ on $\{D_t>0\}$,
\[
E_Q[D_{t+1}(X_t+\rho_{t+1}(X))\,|\,\F_t] \le D_t(X_t+\rho_t(X))\quad\text{and}\quad \alpha_{t+1}(\bar{Q})=0\;\qf \;\;\text{on $\{D_{t+1}>0\}$}.
\] 
\end{itemize}
Moreover, if the set $\bar\Q^*$ defined in \eqref{qstar} is not empty, then time consistency is equivalent to each of the following conditions:
\begin{itemize}
\item[4.] The set $\bar\Q^*$ is stable, and $\rt$ has the representation
\begin{equation}\label{robdarcoh}
 \rt(X)=\es_{Q\otimes D\gamma\in\bar\Q^*}\frac{1}{D_t}E_Q\lp -\sum_{s\in\T_t}\gamma_s X_s\mk \F_t\rp
\end{equation}
for all $X\in\R^\infty$ and $t\in\T\cap\nat_0$.

\item[5.] The representation \eqref{robdarcoh} holds for all $t\in\T\cap\nat_0$ and all $X\in\R^\infty$, and the process
\begin{equation*}
D_t\rt(X-X_t1_{\T_t})-\sum_{s=0}^tD_s\Delta X_s,\qquad t\in\T\cap\nat_0
\end{equation*}
is a $Q$-supermartingale for all $\bar Q=Q\otimes D\in\bar\Q^*$.
\end{itemize}
\end{theorem}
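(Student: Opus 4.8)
\textbf{Proof proposal for Theorem~\ref{cormp}.}

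The plan is to reduce everything to the corresponding statement for coherent risk measures for random variables on product space via Proposition~\ref{corresp}, in complete analogy with the proof of Theorem~\ref{eqcharp}. First I would observe that, by Proposition~\ref{corresp}, $(\rt)$ is coherent and time consistent if and only if the associated sequence $(\bar\rt)$ on $\LTp$, defined by \eqref{rrp}, is coherent and time consistent, so the equivalence $1\Leftrightarrow 2$ should follow from the known characterization of time consistency of coherent risk measures for random variables in terms of the stability (pasting/concatenation) of the underlying acceptance sets; cf.\ \cite{adehk7,d6,fp6}, and in the present language it is essentially \cite[Theorem 4.13]{cdk6}. Concretely, time consistency of $\bar\rt$ is equivalent to the multiplicative decomposition $\Q_t^0(\bar Q)=\{\bar Q^1\oplus^{t+1}_\Omega\bar Q^2\mid\bar Q^1\in\Q_{t,t+1}^0(\bar Q),\;\bar Q^2\in\Q_{t+1}^0(\bar Q^1)\}$ of the level sets of the minimal penalty function, which is the coherent specialization of property (iii) of Theorem~\ref{eqcharp} (recall that in the coherent case the penalty takes only the values $0$ and $\infty$, so (iii) becomes a statement about the sets on which it vanishes). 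The translation of the abstract pasting $\oplus^{t+1}_\Omega$ on $\map$ into the pasting of the measures $Q^i$ on $\Omega$ together with the pasting of the discounting factors $D^i$, exactly as computed in the paragraph preceding Definition of \emph{stable}, is what makes condition 2 the faithful image of the random-variable criterion.

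Next, for $2\Leftrightarrow 3$ I would rewrite the supermartingale criterion (iv) of Theorem~\ref{eqcharp} in the coherent case: since $\alpha_t(\bar Q)\in\{0,\infty\}$ and the nontrivial content is confined to $\{D_t>0\}$ where $\alpha_t(\bar Q)=0$, condition (iv) collapses to the inequality $E_Q[D_{t+1}(X_t+\rho_{t+1}(X))\,|\,\F_t]\le D_t(X_t+\rho_t(X))$ together with the propagation of the vanishing of the penalty, i.e.\ $\alpha_{t+1}(\bar Q)=0$ $Q$-a.s.\ on $\{D_{t+1}>0\}$; this is precisely condition 3. So $1\Leftrightarrow 3$ is just the coherent reading of $1\Leftrightarrow(\text{iv})$ from Theorem~\ref{eqcharp}, and $2\Leftrightarrow 3$ follows by transitivity.

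For the second block, assume $\bar\Q^*\neq\emptyset$. The representation \eqref{robdarcoh} over $\bar\Q^*$ is exactly the coherent specialization of the final assertion of Theorem~\ref{eqcharp} (robust representation over $\bar\Q^*$ under the hypothesis that some $\bar P^*\approx\bar P$ has finite penalty, which here is automatic once $\bar\Q^*\neq\emptyset$ because the penalty is $\{0,\infty\}$-valued); combined with the pasting computation this gives $1\Leftrightarrow 4$, since stability of $\bar\Q^*$ is the set-theoretic form of condition 2 restricted to equivalent measures with finite (hence zero) penalty — one uses that $\bar\Q^*$ is stable if and only if its level sets satisfy the multiplicative decomposition in 2, and that by the final part of Theorem~\ref{eqcharp} the essential supremum in \eqref{newreprinf} may be taken over $\bar\Q^*$. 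Finally $4\Leftrightarrow 5$: given the representation \eqref{robdarcoh}, the process in 5 is, for each fixed $\bar Q=Q\otimes D\in\bar\Q^*$, nothing but $W_t^{Q,D}(X)$ of \eqref{pw} with the (vanishing) penalty term $D_t\alpha_t(\bar Q)$ removed, so its supermartingale property under $Q$ is equivalent, via the recursiveness \eqref{recurs} and a one-step conditioning, to the inequality in condition 3; conversely, a standard ``glueing'' argument shows that if $\rt$ is given by \eqref{robdarcoh} and all these processes are $Q$-supermartingales, then the family $\bar\Q^*$ is stable and $(\rt)$ is time consistent (this is the translation of \cite[Theorem 17]{ap9} and the supermartingale characterization in \cite[Theorem 4.5]{fp6} to the coherent setting). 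I expect the main obstacle to be the bookkeeping in the pasting identity $\bar Q^1\oplus^{t}_B\bar Q^2=Q^0\otimes D^0$ — in particular verifying that $Q^0\in\mal$, that $D^0\in\D(Q^0)$, and that the formula for $\gamma^0_u$ on $\{D^2_t=0\}$ and on $B_t^c$ is consistent with the normalizations defining $\Gamma(Q^0)$ — since everything else is a routine coherent specialization of Theorems~\ref{eqcharp} and the random-variable results already invoked.
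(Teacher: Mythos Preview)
Your proposal is correct and follows essentially the same route as the paper: both argue by passing to the associated risk measure $(\bar\rt)$ on $\LTp$ via \eqref{rrp} and then invoking the known characterizations of time consistency for coherent risk measures for random variables (the paper cites \cite[Corollary 23]{ap9} and \cite[Corollary 4.12]{fp6}, which are precisely the coherent specializations of the results you obtain by reading Theorem~\ref{eqcharp} with $\{0,\infty\}$-valued penalty), together with Corollary~\ref{cor:condexp} to translate back. Your additional remarks on how conditions (iii)--(iv) of Theorem~\ref{eqcharp} collapse in the coherent case, and on identifying the process in condition~5 with $W^{Q,D}(X)$ from \eqref{pw} with vanishing penalty, are accurate and simply make explicit what the paper leaves to the cited corollaries.
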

\begin{proof} Follows by applying \cite[Corollary 23]{ap9} and \cite[Corollary 4.12]{fp6} to $\bar{\rho}$ defined in \eqref{rrp} and using Corollary~\ref{cor:condexp}.
 \end{proof}

\begin{remark}
Note that due to Theorem~\ref{th:assymptoticsafety} coherence implies that the risk measure is asymptotically safe under any model $\bar Q=Q\otimes D\in\Q_0^0$. Indeed, by 1 of Corollary~\ref{supermartingales}, $(D_t\alpha_t(\bar Q))_{t\in\nat_0}$ is a nonnegative $Q$-supermartingale beginning at $0$, and hence it vanishes. In particular there are no bubbles in the coherent case. 
\end{remark}

\subsection{Weaker notions of time consistency}\label{weaksec}
In this section we characterize some weaker notions of time consistency that appeared in \cite{Weber}, \cite{adehk7}, \cite{burg}, \cite{tu6}, \cite{Samuel}, \cite{ros7} \cite{ipen7}, \cite{ap9} in context of risk measures for random variables. 

\begin{definition}
A dynamic convex risk measure $(\rt)_{t\in\T\cap\nat_0}$ on $\R^{\infty}$ is called \emph{acceptance consistent} (resp.\ \emph{rejection consistent}) if
for all $t\in\T$ such that $t<T$, and for all $X\in\R^{\infty}$
\begin{equation}\label{recmp}
\rt(X)\leq \rt(X_t1_{\{t\}}-\rho_{t+1}(X)1_{\T_{t+1}})\;\;\; P\text{-a.s.}\qquad (\mbox{resp.} \geq).
\end{equation}
\end{definition}

Note that $(\rt)_{t\in\T\cap\nat_0}$ is acceptance (resp.\ rejection) consistent if and only if the corresponding dynamic convex risk measure $(\bar\rho_t)_{t\in\T\cap\nat_0}$ on $\LTp$ defined in (\ref{rrp}) is acceptance (resp.\ rejection) consistent in the sense of \cite[Proposition 24]{ap9}; cf.\ also \cite[Definition 3.1]{Samuel}. Similar to Theorem~\ref{eqcharp}, the following theorem translates characterizations of acceptance and rejection consistency from \cite[Theorem 27, Proposition 29]{ap9} to our present framework.

\begin{theorem}\label{eqcharweak}
Let \rtf be a dynamic convex risk measure on $\R^\infty$. Then the
following conditions are equivalent:
\begin{itemize}
\item[(i)] \rtf is acceptance (resp.\ rejection) consistent;
\item[(ii)] $\A_t\supseteq A_{t,t+1}+\A_{t+1}\;\;$\quad (resp.\ $\subseteq$)\quad for all $t\in\T$, $t<T$;
\item[(iii)] for all $X\in\Ri$, $t\in\T$, $t<T$, and all $\bar{Q}=Q\otimes D\in\map$
\begin{equation*}
D_t\alpha_t(Q\otimes D)\ge D_t\alpha_{t,t+1}(Q\otimes D)+E_Q[D_{t+1}\alpha_{t+1}(Q\otimes D)\mk\F_t]\quad\textrm{(resp.\ $\le$)}\quad \qf.
\end{equation*}
\end{itemize}

Moreover, rejection consistency is equivalent to the following: 
\begin{itemize}
\item[(iv)] for all $t\in\T$, $t<T$, and all $\bar{Q}=Q\otimes D\in\map$
\[
E_Q[D_{t+1}(X_t+\rho_{t+1}(X))\,|\,\F_t]\le
D_t(X_t+\rho_t(X))+\alpha_{t,t+1}(Q\otimes D)\quad \qf.
\] 
\end{itemize}
\end{theorem}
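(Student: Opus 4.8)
The plan is to reduce Theorem~\ref{eqcharweak} to the corresponding characterizations of acceptance and rejection consistency for risk measures for random variables, exactly as was done for strong time consistency in Theorem~\ref{eqcharp}. The key bridge is Proposition~\ref{corresp}, which identifies $(\rt)\ztu$ with the dynamic risk measure $(\bar\rho_t)\ztu$ on $\LTp$ via \eqref{rrp}, together with Corollary~\ref{cor:condexp} and the decomposition Theorem~\ref{propq}. First I would observe, as already noted in the text, that $(\rt)$ is acceptance (resp.\ rejection) consistent precisely when $(\bar\rho_t)$ is acceptance (resp.\ rejection) consistent in the sense of \cite[Proposition 24]{ap9}; this is a direct consequence of \eqref{rrp} applied to the recursive inequality \eqref{recmp}, since adding the $\bar\F_t$-measurable past coordinates $-X_0 1_{\{0\}}-\cdots-X_{t-1}1_{\{t-1\}}$ to both sides preserves the inequality and, conversely, restricting to the $\T_t$-coordinates recovers \eqref{recmp}.

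Next I would invoke \cite[Theorem 27, Proposition 29]{ap9}, which for a dynamic convex risk measure for random variables give the equivalences: acceptance (resp.\ rejection) consistency $\Leftrightarrow$ the splitting inclusion $\bar\A_t\supseteq\bar\A_{t,t+1}+\bar\A_{t+1}$ (resp.\ $\subseteq$) $\Leftrightarrow$ the one-step supermartingale-type inequality on the minimal penalty functions $\bar\alpha_t(\bar Q)\ge\bar\alpha_{t,t+1}(\bar Q)+E_{\bar Q}[\bar\alpha_{t+1}(\bar Q)\mid\bar\F_t]$ (resp.\ $\le$) for all $\bar Q\in\map$; and, in the rejection case, the additional characterization via the process $X_t+\bar\rho_t(X)$ together with the one-step penalty. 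To obtain (ii) I would feed \eqref{accsetp} and \eqref{accsetpp} together with the analogous formula for $\bar\A_{t,t+1}$ into the random-variable splitting inclusion: since $\bar\A_t=\A_t+L^\infty_{0,+}\times\cdots\times L^\infty_{t-1,+}\times\{0\}\times\cdots$, and the "past" summand is common to $\bar\A_t$, $\bar\A_{t,t+1}$, $\bar\A_{t+1}$ (the last contributing $L^\infty_{0,+}\times\cdots\times L^\infty_{t,+}\times\{0\}\times\cdots$, whose extra $t$-th coordinate is absorbed by $\bar\A_{t,t+1}$), the inclusion $\bar\A_t\supseteq\bar\A_{t,t+1}+\bar\A_{t+1}$ is equivalent to $\A_t\supseteq\A_{t,t+1}+\A_{t+1}$. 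For (iii), I would apply Corollary~\ref{cor:condexp} and the decomposition $\bar Q=Q\otimes D$ to rewrite $\bar\alpha_t(\bar Q)=\alpha_t(Q\otimes D)1_{\T_t}$, $\bar\alpha_{t,t+1}(\bar Q)=\alpha_{t,t+1}(Q\otimes D)1_{\{t,t+1,\dots\}}$, and $E_{\bar Q}[\bar\alpha_{t+1}(\bar Q)\mid\bar\F_t]$ as an $\F_t$-conditional expectation weighted by the discounting factors; multiplying through by $D_t$ (using the convention $0\cdot\infty:=0$ on $\{D_t=0\}$) turns the random-variable inequality into the stated inequality $D_t\alpha_t(Q\otimes D)\ge D_t\alpha_{t,t+1}(Q\otimes D)+E_Q[D_{t+1}\alpha_{t+1}(Q\otimes D)\mid\F_t]$. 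The extra rejection characterization (iv) comes the same way from the corresponding random-variable statement in \cite[Proposition 29]{ap9}, translated via \eqref{rrp} and the integration-by-parts identity \eqref{intbyparts} that converts $E_Q[-\sum_{s\in\T_t}\gamma_s X_s\mid\F_t]$ into discounted increments, so that the process there becomes $D_t(X_t+\rho_t(X))$ up to the $Q$-martingale coming from $E_Q[\sum_{s\in\T}\gamma_s X_s\mid\F_t]$, as in the proof of the maximal inequality in Subsection~\ref{maxsec}.

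The main obstacle is bookkeeping rather than conceptual: one must be careful that the minimal penalty functions $\alpha_t(Q\otimes D)$, $\alpha_{t,t+1}(Q\otimes D)$ are only defined $Q$-a.s.\ on $\{D_t>0\}$, so the translation of the random-variable inequalities — which live on all of $\bar\Omega$ — to the process setting must correctly account for the null-set $\{D_t=0\}$, where by convention all penalties are $+\infty$ and the discounted quantities $D_t\alpha_t(Q\otimes D)$ vanish. On $\{D_t=0\}$ one has $D_s=0$ for all $s\ge t$, so both sides of the inequality in (iii) are $0$ and it holds trivially; this is exactly the point where the $0\cdot\infty:=0$ convention makes the statement clean. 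A secondary technicality is that acceptance consistency (unlike time consistency) does not require continuity from above, so one should make sure the cited results from \cite{ap9} are invoked in the form that does not presuppose that hypothesis — which is indeed how \cite[Theorem 27, Proposition 29]{ap9} are stated. With these caveats handled, the proof is a routine application of Proposition~\ref{corresp}, Corollary~\ref{cor:condexp}, Theorem~\ref{propq}, and the quoted results of \cite{ap9}.

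\begin{proof}
As noted above, $(\rt)\ztu$ is acceptance (resp.\ rejection) consistent if and only if $(\bar\rho_t)\ztu$ defined in \eqref{rrp} is acceptance (resp.\ rejection) consistent in the sense of \cite[Proposition 24]{ap9}. The equivalence of (i), (ii), and (iii), as well as the characterization (iv) of rejection consistency, then follow by applying \cite[Theorem 27, Proposition 29]{ap9} to $\bar\rho_t$ and translating back via \eqref{accsetp}, \eqref{accsetpp}, Corollary~\ref{cor:condexp}, and the decomposition $\bar Q=Q\otimes D$ of Theorem~\ref{propq}, together with the integration by parts formula \eqref{intbyparts}, in the same manner as in the proof of Theorem~\ref{eqcharp}. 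On $\{D_t=0\}$ both sides of the inequality in (iii) vanish by the convention $0\cdot\infty:=0$, so the inequality there is understood $Q$-a.s.\ on all of $\Omega$.
\end{proof}
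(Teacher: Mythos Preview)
Your proposal is correct and follows exactly the route the paper takes: the paper does not spell out a proof for Theorem~\ref{eqcharweak} but indicates in the preceding sentence that it ``translates characterizations of acceptance and rejection consistency from \cite[Theorem 27, Proposition 29]{ap9}'' via the correspondence \eqref{rrp}, just as in Theorem~\ref{eqcharp}. Your proof does precisely this, with some added detail on the bookkeeping for $\{D_t=0\}$ and the acceptance-set decomposition; the invocation of the integration-by-parts formula \eqref{intbyparts} for (iv) is not strictly needed (the translation goes through Corollary~\ref{cor:condexp} alone), but it does no harm.
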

\begin{corollary}
\begin{enumerate}
\item For a rejection consistent dynamic convex risk measure, property (iv) of Theorem~\ref{eqcharweak} implies that the process
 \[
  D_t\rt(X-X_t1_{\T_t})-\sum_{s=0}^tD_s\Delta X_s-\sum_{s=0}^{t-1}D_s\alpha_{s,s+1}(\bar Q),\qquad t\in\T\cap\nat_0,
 \]
is a $Q$-supermartingale for all $\bar Q=Q\otimes D\in\map$ such that $E_Q[\sum_{s=0}^t D_s\alpha_{s,s+1}(\bar Q)]<\infty$ for all $t\in\T$, $t<T$.

\item For an acceptance consistent dynamic convex risk measure, property (iii) of Theorem~\ref{eqcharweak} implies that the discounted penalty process $(D_t\alpha_{t}(\bar Q))_{t\in\T\cap\nat_0}$ is a $Q$-supermartingale for all $\bar Q=Q\otimes D\in\bar\Q_0$. 
\end{enumerate}
\end{corollary}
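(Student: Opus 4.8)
The plan is to derive both statements directly from the supermartingale-type inequalities already packaged in Theorem~\ref{eqcharweak}, exactly as in the proof of Corollary~\ref{supermartingales} but using the weaker (one-sided) versions. For part (1), fix $\bar Q=Q\otimes D\in\map$ with $E_Q[\sum_{s=0}^tD_s\alpha_{s,s+1}(\bar Q)]<\infty$ for every $t<T$, and write $V_t:=D_t\rt(X-X_t1_{\T_t})-\sum_{s=0}^tD_s\Delta X_s-\sum_{s=0}^{t-1}D_s\alpha_{s,s+1}(\bar Q)$. First I would check integrability: each summand is either bounded in $L^\infty$ (the $D_s\Delta X_s$ terms, since $X\in\R^\infty$ and $D$ is bounded) or integrable by the standing assumption on the one-step penalties, and $D_t\rt(X-X_t1_{\T_t})$ is bounded because $\rt$ maps into $\Lt$; so $V_t\in L^1(Q)$. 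Then I would compute $E_Q[V_{t+1}-V_t\mk\F_t]$. The increment of the first two blocks is $E_Q[D_{t+1}(\rho_{t+1}(X-X_{t+1}1_{\T_{t+1}})-\Delta X_{t+1})\mk\F_t]-D_t\rt(X-X_t1_{\T_t})$, and using conditional cash invariance of $\rho_{t+1}$ (to absorb the $\Delta X_{t+1}=X_{t+1}-X_t$ term into the argument, rewriting $\rho_{t+1}(X-X_{t+1}1_{\T_{t+1}})-\Delta X_{t+1}$ in terms of $X_t+\rho_{t+1}(X)$) this is exactly $E_Q[D_{t+1}(X_t+\rho_{t+1}(X))\mk\F_t]-D_t(X_t+\rho_t(X))$ up to the bookkeeping of the $X_t$ and $\Delta X_s$ terms; the increment of the penalty block is $-D_t\alpha_{t,t+1}(\bar Q)$. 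Property (iv) of Theorem~\ref{eqcharweak} says precisely that the sum of these is $\le 0$ $Q$-a.s., which is the supermartingale inequality for $V$.

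For part (2), let $\bar Q=Q\otimes D\in\bar\Q_0$, so $\alpha_0(\bar Q)<\infty$ and hence, by the Doob-type bookkeeping as in Corollary~\ref{supermartingales}(1), $D_t\alpha_t(\bar Q)\in L^1(Q)$ for every $t$ and is $\ge0$. The acceptance-consistent inequality (iii) of Theorem~\ref{eqcharweak} reads $D_t\alpha_t(\bar Q)\ge D_t\alpha_{t,t+1}(\bar Q)+E_Q[D_{t+1}\alpha_{t+1}(\bar Q)\mk\F_t]$ $Q$-a.s.; since $D_t\alpha_{t,t+1}(\bar Q)\ge0$, this immediately gives $E_Q[D_{t+1}\alpha_{t+1}(\bar Q)\mk\F_t]\le D_t\alpha_t(\bar Q)$, i.e.\ the $Q$-supermartingale property of $(D_t\alpha_t(\bar Q))\zt$. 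Nonnegativity follows from $\alpha_t\ge0$ (a risk measure's minimal penalty function is nonnegative by normalization), with the convention $0\cdot\infty=0$ on $\{D_t=0\}$ taking care of the degenerate set.

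The only genuinely delicate point is the careful cash-invariance rewriting in part (1): one must be scrupulous that $\rt(X-X_t1_{\T_t})$, $\rt(X)$, and the increments $\Delta X_s$ combine correctly, because the ``$-X_t1_{\T_t}$'' shift is time-dependent (the shift at step $t{+}1$ subtracts $X_{t+1}$, not $X_t$), so the telescoping of the $\sum_{s=0}^tD_s\Delta X_s$ block against the changing argument of $\rho$ has to be matched term by term. This is the same computation that underlies the passage from property (iv) of Theorem~\ref{eqcharp} to the $W^{Q,D}$-supermartingale in Corollary~\ref{supermartingales}(2), so I would simply point to that identity and note that the extra penalty subtraction in $V_t$ exactly compensates the $\alpha_{t,t+1}$ term by which (iv) of Theorem~\ref{eqcharweak} weakens (iv) of Theorem~\ref{eqcharp}. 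Part (2) has no real obstacle beyond recalling that the minimal penalty function is nonnegative.
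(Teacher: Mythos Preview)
Your proposal is correct and is exactly the direct computation the paper has in mind; the paper itself states this corollary without proof, treating it as an immediate consequence of Theorem~\ref{eqcharweak} in the same way that Corollary~\ref{supermartingales} follows from Theorem~\ref{eqcharp}. Your cash-invariance bookkeeping in part~(1), reducing the increment of the first two blocks to $E_Q[D_{t+1}(X_t+\rho_{t+1}(X))\mid\F_t]-D_t(X_t+\rho_t(X))$, and your use of $\alpha_{t,t+1}\ge 0$ in part~(2), are precisely the two moves needed; note only that in property~(iv) of Theorem~\ref{eqcharweak} the penalty term should be read as $D_t\alpha_{t,t+1}(Q\otimes D)$ (consistent with property~(iii) and with the corollary itself), which is what your computation correctly uses.
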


The following definition translates the weak notion of time consistency from \cite{Weber}, \cite{adehk7}, \cite{burg}, \cite{tu6}, \cite{ros7}, \cite{ap9} to our present framework. 
\begin{definition}
A dynamic convex risk measure $(\rt)_{t\in\T\cap\nat_0}$ on $\R^{\infty}$ is called \emph{weakly acceptance consistent}  if
\begin{equation*}
X_t=0\;\;\; \textrm{and}\;\;\; \rho_{t+1}(X)\le0\quad\Longrightarrow\quad \rt(X)\leq 0
\end{equation*}
for all $t\in\T$ such that $t<T$ and for all $X\in\R^{\infty}$.
\end{definition}

\begin{proposition}\label{weaktcp}
For a dynamic convex risk measure \rtf  the following properties are equivalent: 
\begin{enumerate}
\item \rtf is weakly acceptance consistent;
\item $\A_{t+1}\subseteq\A_t\quad$ for all $t\in\T$, $t<T$;
\item for all $t\in\T$, $t<T$, and all $\bar{Q}=Q\otimes D\in\map$
\begin{equation*}
E_Q[D_{t+1}\alpha_{t+1}(Q\otimes D)\mk\F_t]\le D_t\alpha_t(Q\otimes D)\quad \qf.
\end{equation*}
\end{enumerate} 
In particular, if $(\rt)$  is weakly acceptance consistent, then the discounted penalty process $(D_t\alpha_t(\bar Q))_{t\in\T\cap\nat_0}$ is a nonnegative $Q$-supermartingale for each $\bar{Q}=Q\otimes D\in\bar\Q_0$.
\end{proposition}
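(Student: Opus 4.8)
The plan is the one already used for Theorems~\ref{eqcharp} and \ref{eqcharweak}: via Proposition~\ref{corresp} and \eqref{accsetp} one checks that $(\rt)$ is weakly acceptance consistent if and only if $(\bar\rho_t)$ is, one applies the corresponding characterization of weak acceptance consistency for risk measures for random variables (in the spirit of \cite{Weber,adehk7,burg,tu6,ros7,ap9}) at the level of the optional filtration, and one translates the penalty condition back into the decomposed form $E_Q[D_{t+1}\alpha_{t+1}(Q\otimes D)\mk\F_t]\le D_t\alpha_t(Q\otimes D)$ by means of Corollary~\ref{cor:condexp} and the decomposition $\bar Q=Q\otimes D$ of Theorem~\ref{propq}. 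The equivalence $1\Leftrightarrow 2$ is then just unwinding definitions: $X\in\A_{t+1}$ means exactly $X\in\R^\infty_{t+1}$ — in particular $X_t=0$ — together with $\rho_{t+1}(X)\le 0$, while $X\in\A_t$ means $\rt(X)\le 0$; since $\pi_{t,T}(X)=\pi_{t+1,T}(X)$ whenever $X_t=0$ and $\rt,\rho_{t+1}$ act on $X$ only through these projections, the inclusion $\A_{t+1}\subseteq\A_t$ is literally the defining implication of weak acceptance consistency (for a general $X\in\R^\infty$ with $X_t=0$ one first replaces $X$ by $\pi_{t+1,T}(X)\in\R^\infty_{t+1}$).

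For $2\Rightarrow 3$ I would fix $\bar Q=Q\otimes D=Q\otimes\gamma\in\map$; on $\{D_t=0\}$ the inequality holds by the convention $0\cdot\infty=0$, and since $D$ is non-increasing one has $\{D_{t+1}>0\}\subseteq\{D_t>0\}$, so it suffices to argue on $\{D_t>0\}$. For $X\in\A_{t+1}\subseteq\A_t$, using $X_t=0$ and the tower property,
\[
D_t\alpha_t(\bar Q)\ge E_Q\lp-\sum_{s\in\T_t}\gamma_sX_s\mk\F_t\rp=E_Q\lp E_Q\lp-\sum_{s\in\T_{t+1}}\gamma_sX_s\mk\F_{t+1}\rp\mk\F_t\rp .
\]
The family $\{\,E_Q[-\sum_{s\in\T_{t+1}}\gamma_sX_s\mk\F_{t+1}]\,:\,X\in\A_{t+1}\,\}$ is upward directed (by $\F_{t+1}$-decomposability of $\A_{t+1}$, which follows from conditional convexity of $\rho_{t+1}$), so its $Q$-essential supremum $D_{t+1}\alpha_{t+1}(\bar Q)$ is an increasing $Q$-a.s.\ limit along a sequence in the family; monotone convergence then lets me pull the supremum through $E_Q[\,\cdot\mk\F_t]$, which gives $D_t\alpha_t(\bar Q)\ge E_Q[D_{t+1}\alpha_{t+1}(\bar Q)\mk\F_t]$.

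For $3\Rightarrow 2$ I would run this backwards: for $X\in\A_{t+1}$ the definition of the minimal penalty function gives $D_{t+1}\alpha_{t+1}(\bar Q)\ge E_Q[-\sum_{s\in\T_{t+1}}\gamma_sX_s\mk\F_{t+1}]$ for every $\bar Q$, and combining this with $3$ and $X_t=0$ yields $\alpha_t(\bar Q)\ge E_Q[-\sum_{s\in\T_t}\frac{\gamma_s}{D_t}X_s\mk\F_t]$ on $\{D_t>0\}$ for every $\bar Q=Q\otimes\gamma\in\map$; since $\alpha_t$ is the minimal penalty function of $\rt$ this forces $\rt(X)\le 0$, i.e.\ $X\in\A_t$, and it is here that the identification of $\A_t$ with $\bar\A_t$ via \eqref{accsetp} and the corresponding statement for $\bar\rho_t$ from \cite{ap9} are used. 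Finally, if $\bar Q=Q\otimes D\in\bar\Q_0$ then $\alpha_t(\bar Q)\ge0$ because $0\in\A_t$, condition $3$ is precisely the $Q$-supermartingale inequality for $(D_t\alpha_t(\bar Q))_{t\in\T\cap\nat_0}$, and integrability follows from $E_Q[D_t\alpha_t(\bar Q)]\le E_Q[D_0\alpha_0(\bar Q)]=\alpha_0(\bar Q)<\infty$.

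The delicate point is the interchange of the conditional essential supremum with $E_Q[\,\cdot\mk\F_t]$ in $2\Rightarrow 3$, which rests on upward directedness together with monotone convergence, and its counterpart in $3\Rightarrow 2$, namely that the minimal penalty function recovers the acceptance set; the remaining steps — the bookkeeping on $\{D_t=0\}$ under the convention $0\cdot\infty=0$, and the passage between processes in $\R^\infty$ and random variables on the optional $\sigma$-field — are routine given Proposition~\ref{corresp} and Corollary~\ref{cor:condexp}.
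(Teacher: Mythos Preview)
Your proposal is correct and takes the same route as the paper: the paper's proof is simply the one-line citation ``Follows from \cite[Proposition~33]{ap9} applied to $\bar\rho_t$ defined in \eqref{rrp}'', which is precisely the plan you describe in your first paragraph. Your additional unfolding of $1\Leftrightarrow 2$ and of the penalty inequality via upward directedness and monotone convergence is more than the paper provides, but it reproduces the standard argument behind the cited result; just note that the step $3\Rightarrow 2$ tacitly uses continuity from above (needed so that the minimal penalty function actually recovers $\rt$ via the robust representation), which here is part of the standing assumptions of Section~\ref{tc}.
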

\begin{proof}
 Follows from \cite[Proposition 33]{ap9} applied to $\bar\rt$ defined in \eqref{rrp}.
\end{proof}

\section{Cash subadditivity and calibration to num\'eraires}\label{cashadsec}
As noted after Definition~\ref{defrmp}, cash invariance of risk measures for processes differs from the corresponding property of risk measures for random variables, since it takes into account the timing of the payment. This aspect can be made precise using the notion of cash subadditivity. Cash subadditivity was introduced by El Karoui and Ravanelli~\cite{er08} in the context of risk measures for random variables in order to account for discounting ambiguity. It will be shown in Proposition~\ref{propcash}, and it is also apparent from the robust representation given in Subsection~\ref{robsec}, that every risk measure for processes is cash subadditive. Thus risk measures for processes provide a natural framework to capture uncertainty about the time value of money, and a systematic approach to the issue of discounting ambiguity.

\subsection{Cash subadditivity}
\begin{definition}
A conditional convex risk measure for processes $\rt$ is called
\begin{itemize}
\item \emph{cash subadditive} if
\begin{equation}\label{subcash}
\rt(X+m1_{\T_{t+s}})\geq\rt(X)-m,\;\;\; \forall\; s>0,\;\; \forall\;  m\in\Lts,\;\; m\geq 0;
\end{equation}
\item \emph{cash additive} at time $t+s$, with $s>0$ and $t+s\in\T$, if
\begin{equation*}
\rt(X+m1_{\T_{t+s}})=\rt(X)-m,\quad \forall\; m\in\Lts,
\end{equation*}
\item \emph{cash additive} if it is cash additive at all times $s\in\T_{t+1}$.
\end{itemize}
\end{definition}
\begin{remark}
 Note that \eqref{subcash} is equivalent to 
\begin{equation*}
\rt(X+m1_{\T_{t+s}})\leq\rt(X)-m,\;\;\; \forall\; s>0,\;\;\forall\;  m\in\Lts,\;\; m\leq 0,
\end{equation*}
since $\rt(X)=\rt(X+m1_{\T_{t+s}}-m1_{\T_{t+s}})$.
\end{remark}

Cash subadditive risk measures account for the timing of the payment in the sense that the risk is reduced by having positive inflows earlier and negative ones later. Other equivalent characterizations of cash subadditivity can be found in \cite[Section 3.1]{er08}.

As noted in \cite{ck6} in the time consistent case, cash subadditivity is an immediate consequence of the basic properties of a conditional risk measure for processes.  

\begin{proposition}\label{propcash}
Every conditional convex risk measure for processes $\rt$  is cash subadditive.
\end{proposition}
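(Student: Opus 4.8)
The plan is to derive cash subadditivity directly from monotonicity and conditional cash invariance, the two structural properties that a conditional convex risk measure for processes possesses by Definition~\ref{defrmp}. The key observation is that for $s>0$ and $m\in\Lts$ with $m\geq0$, the process $m1_{\T_{t+s}}$ is dominated by the process $m1_{\T_t}$, i.e., $m1_{\T_{t+s}}\le m1_{\T_t}$ componentwise, since the two processes agree from time $t+s$ onward and $m1_{\T_{t+s}}$ vanishes (while $m1_{\T_t}\ge0$) on the time points $t,\ldots,t+s-1$.

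First I would fix $X\in\R_t^\infty$, $s>0$ with $t+s\in\T$, and $m\in\Lts$ with $m\ge0$. Note that $m1_{\T_t}\in\Lt$-times the indicator $1_{\T_t}$ is exactly the object appearing in the conditional cash invariance axiom; here I should be slightly careful, since cash invariance in Definition~\ref{defrmp} is stated for $m\in\Lt$, whereas $m$ is only $\F_{t+s}$-measurable. However, $X+m1_{\T_{t+s}}\ge X$ is false in general (it is $m$ that is nonnegative, added only from $t+s$ on), so the argument must run the other way: from $m\ge0$ we get $X+m1_{\T_{t+s}}\le X+m1_{\T_t}$ componentwise. Applying monotonicity of $\rt$ yields
\[
\rt(X+m1_{\T_{t+s}})\ge\rt(X+m1_{\T_t}).
\]
For the right-hand side I would like to apply conditional cash invariance, but $m$ is $\F_{t+s}$-measurable rather than $\F_t$-measurable. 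The clean way around this is to approximate or, better, to observe that the correspondence of Proposition~\ref{corresp} and the product-space formulation make $m1_{\T_t}$ legitimate: on the optional $\sigma$-field the process $m1_{\T_t}$, viewed as a random variable, is $\bar\F_{t+s}$-measurable, and conditional cash invariance for $\bar\rho_t$ on $\LTp$ holds for all $m\in\Ltf[t+s]\subseteq\LTp$ in the form $\bar\rho_t(X+m)=\bar\rho_t(X)-E[\cdots]$. Actually the cleanest route is to avoid this entirely.

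The genuinely clean argument: since $m\ge0$ and $s>0$, write $m1_{\T_{t+s}}=m1_{\T_{t+s}}$ and compare with the constant-in-time-from-$t$ process $m1_{\T_t}$. But in fact conditional cash invariance \emph{does} apply to $m1_{\T_t}$ only when $m\in\Lt$. So instead I would argue: $X + m1_{\T_{t+s}} \leq X + m1_{\T_t}$ is not available unless $m$ is constant over $\T_t$; rather I use that $m1_{\T_{t+s}}\ge 0$ as a process (this needs $m\ge0$), hence $X+m1_{\T_{t+s}}\ge X$, giving by monotonicity $\rt(X+m1_{\T_{t+s}})\le\rt(X)$. Separately, I need the $-m$ term, which is where the timing must enter. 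The right comparison is with $m$ placed at the \emph{earliest} time $\ge t$ in $\T_t$, where cash invariance applies only for $\F_t$-measurable scalars. The main obstacle, therefore, is bridging the measurability gap between $\F_{t+s}$ and $\F_t$; I expect the author resolves it by a conditioning/tower argument using the robust representation of Subsection~\ref{robsec}, where each term $E_Q[-\sum_{s}\gamma_s X_s\mid\F_t]$ is manifestly cash subadditive because adding $m\ge0$ at time $t+s$ changes it by $-E_Q[(\sum_{r\ge t+s}\gamma_r/D_t)\,m\mid\F_t]\ge -m$ (since $\sum_{r\ge t+s}\gamma_r\le D_t$ pointwise by \eqref{gammatodisc2}), and then taking the essential supremum preserves the inequality. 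I would present the proof via this representation route as the main line, noting the elementary monotonicity-based half ($\le\rt(X)-m$ for $m\le0$, equivalently $\ge$ for $m\ge0$) as a sanity check, and flagging that continuity from above is not needed because the representation-free argument in \cite{ck6} can also be invoked. The one step to watch is confirming $\sum_{r\in\T_{t+s}}\gamma_r\le\sum_{r\in\T_t}\gamma_r=D_t$, which is immediate from nonnegativity of $\gamma$ and \eqref{gammatodisc2}, so that the discounting factor on the added cash is at most $1$ — this is precisely the quantitative content of cash subadditivity.
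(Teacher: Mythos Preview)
You have misread the measurability hypothesis. In the paper's definition of cash subadditivity (see \eqref{subcash}), the cash amount $m$ belongs to $L^\infty_t$, not to $L^\infty_{t+s}$; note that the macro $\Lts$ used there expands to $L^\infty_t$. Hence $m$ is $\F_t$-measurable, and the ``measurability gap'' you spend most of your proposal trying to bridge does not exist. With $m\in L^\infty_t$ and $m\ge 0$, the elementary argument you sketched at the outset goes through verbatim: $m1_{\T_{t+s}}\le m1_{\T_t}$ componentwise, so monotonicity gives $\rho_t(X+m1_{\T_{t+s}})\ge\rho_t(X+m1_{\T_t})$, and conditional cash invariance (which now applies directly since $m\in L^\infty_t$) gives $\rho_t(X+m1_{\T_t})=\rho_t(X)-m$. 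This is exactly the paper's one-line proof.

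Your fallback route via the robust representation is unnecessary and, more importantly, illegitimate here: the representation \eqref{newreprinf} requires continuity from above, which is \emph{not} assumed in Proposition~\ref{propcash}. The whole point of the proposition is that cash subadditivity follows from the bare axioms of Definition~\ref{defrmp} alone.
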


\begin{proof}
Cash subadditivity follows straightforward from monotonicity and cash invariance of $\rt$:
\[
\rt(X)-m=\rt(X+m1_{\T_t})\leq \rt(X+m1_{\T_{t+s}}),\quad \forall s>0,\;\,\forall\;  m\in\Lts,\;\, m\ge0.
\]
\end{proof}

Cash subadditivity of risk measures for processes is also apparent from the robust representation given in Subsection~\ref{robsec} due to the appearance of the discounting factors.

\begin{remark}\label{redekr}
In particular, for $T<\infty$ or $\T=\nat_0\cup\{\infty\}$, every risk measure for processes restricted to the space $\{X\in\Ri | X_t=0,\; t<T\}$ defines a cash subadditive risk measure on $L^{\infty}$ in the sense of \cite[Definition 3.1]{er08}.
\end{remark}

\begin{remark}\label{rmkca}
For $\T=\nat_0$, a conditional convex risk measure for processes $\rt$ that is continuous from above cannot be cash additive. Indeed, if $\rt$ is cash additive at $t+s$ for all $s>0$, continuity from above implies for $X\in\R^{\infty}$ and $m\in\Lt, m>0,$
\[
-m+\rt(X)=\rt(X+m1_{\T_{t+s}})\nearrow \rt(X)\quad \text{with}\quad s\to\infty,
\]
which is absurd.  The interpretation of this result is clear: If we are indifferent between having an amount of money today or tomorrow or at any future time, then any payment can be shifted  from one date to the next, and so it would never appear.
\end{remark}

The following proposition describes the interplay between time consistency and cash additivity.

\begin{proposition}\label{infca}
Let $(\rt)\zt$ be a time consistent dynamic convex risk measure on $\mathcal{R}^{\infty}$ such that each $\rt$ is cash additive at time $t+1$. Then each $\rt$ is cash additive.
\end{proposition}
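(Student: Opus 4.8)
The plan is to use the recursiveness characterization of time consistency, which is available here by the remarks following Theorem~\ref{eqcharp}: time consistency is equivalent to \eqref{recurs}, so that for all $X\in\R^\infty$ and all $t\in\T$ with $t<T$ one has $\rt(X)=\rt(X_t1_{\{t\}}-\rho_{t+1}(X)1_{\T_{t+1}})$. The aim is to show that for each fixed $t$, $\rt$ is cash additive at every time $t+s$ with $s\ge 1$, and we will do this by induction on $s$. The base case $s=1$ is the hypothesis. The whole argument is really about peeling off one time step and invoking cash additivity of $\rho_{t+1}$ at the earlier time $(t+1)+(s-1)=t+s$, which holds by the induction hypothesis applied to the conditional risk measure one step later.

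First I would fix $s\ge 2$, assume each $\rho_u$ is cash additive at time $u+(s-1)$ for all $u$ (in particular $\rho_{t+1}$ is cash additive at $t+s$), and take $m\in L^\infty_{t+s}$ and $X\in\R^\infty$. Applying recursiveness to $Y:=X+m1_{\T_{t+s}}$ gives
\[
\rt(X+m1_{\T_{t+s}})=\rt\bigl(Y_t1_{\{t\}}-\rho_{t+1}(Y)1_{\T_{t+1}}\bigr).
\]
Since $s\ge 2$ we have $t+s\ge t+2>t$, so $Y_t=X_t$; and since $m1_{\T_{t+s}}$ restricted to $\T_{t+1}$ equals $m1_{\T_{t+s}}$ with $t+s\ge (t+1)$, cash additivity of $\rho_{t+1}$ at time $t+s$ yields $\rho_{t+1}(Y)=\rho_{t+1}(X+m1_{\T_{t+s}})=\rho_{t+1}(X)-m$ (here $m\in L^\infty_{t+s}\subseteq L^\infty_{t+1+(s-1)}$, so this is exactly the cash-additivity hypothesis for $\rho_{t+1}$ at time $(t+1)+(s-1)$). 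Substituting back,
\[
\rt(X+m1_{\T_{t+s}})=\rt\bigl(X_t1_{\{t\}}-(\rho_{t+1}(X)-m)1_{\T_{t+1}}\bigr)=\rt\bigl(X_t1_{\{t\}}-\rho_{t+1}(X)1_{\T_{t+1}}+m1_{\T_{t+1}}\bigr).
\]
Now I would apply ordinary conditional cash invariance of $\rt$ (the defining property, with $m\in L^\infty_{t+s}\subseteq L^\infty_t$? — careful here) to pull out $m1_{\T_{t+1}}$; but $m$ need not be $\F_t$-measurable, so one cannot directly use cash invariance of $\rt$. Instead, one more application of recursiveness in the reverse direction identifies the right-hand side with $\rt(X)-m$: indeed by recursiveness $\rt(X)=\rt(X_t1_{\{t\}}-\rho_{t+1}(X)1_{\T_{t+1}})$, and adding $m1_{\T_{t+1}}$ inside corresponds, via recursiveness applied to $X-\frac{m}{?}$... — this is the delicate point and I address it next.

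The main obstacle is precisely bridging the $\F_{t+s}$-measurable shift $m$ through $\rt$, which only enjoys cash invariance for $\F_t$-measurable amounts. The clean way is to not use the formula above but rather to argue directly: apply recursiveness to both $X$ and $X+m1_{\T_{t+s}}$, reducing cash additivity of $\rt$ at $t+s$ to cash additivity of $\rho_{t+1}$ at $t+s$ \emph{together with} the fact that $\rt$ applied to processes of the form $X_t1_{\{t\}}+Z1_{\T_{t+1}}$ depends on $Z$ in a cash-invariant way under shifts by $m1_{\T_{t+1}}$ with $m\in L^\infty_{t+1}$ — and this last fact is itself just cash additivity of $\rt$ at time $t+1$, which is the hypothesis. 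Concretely: $\rt(X_t1_{\{t\}}-\rho_{t+1}(X)1_{\T_{t+1}}+m1_{\T_{t+1}})=\rt(X_t1_{\{t\}}-(\rho_{t+1}(X)-m)1_{\T_{t+1}})$, and writing $\widetilde X$ for any process with $\widetilde X_t=X_t$ and $\rho_{t+1}(\widetilde X)=\rho_{t+1}(X)-m$ (e.g.\ $\widetilde X=X+m1_{\T_{t+1}}$, using cash additivity of $\rho_{t+1}$ at $t+1$), recursiveness gives this equals $\rt(\widetilde X)=\rt(X+m1_{\T_{t+1}})=\rt(X)-m$ by cash additivity of $\rt$ at $t+1$. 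Chaining the equalities yields $\rt(X+m1_{\T_{t+s}})=\rt(X)-m$, completing the induction. I expect the bookkeeping with which time index each cash-additivity hypothesis is invoked at, and the justification that $\rho_{t+1}(X+m1_{\T_{t+s}})=\rho_{t+1}(X)-m$ for $m$ that is $\F_{t+s}$- but not $\F_{t+1}$-measurable, to be the only genuinely careful steps; everything else is substitution into \eqref{recurs}.
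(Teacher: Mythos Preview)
Your overall strategy---induction on $s$ combined with recursiveness \eqref{recurs}---is exactly the paper's approach. However, you have misread the definition of cash additivity: in the paper, cash additivity of $\rho_t$ at time $t+s$ is required only for $m\in L^\infty_t$, not for $m\in L^\infty_{t+s}$. With $m\in L^\infty_t$, the ``delicate point'' you identified dissolves entirely: after the inductive hypothesis gives $\rho_{t+1}(X+m1_{\T_{t+s}})=\rho_{t+1}(X)-m$ (here $m\in L^\infty_t\subseteq L^\infty_{t+1}$, so cash additivity of $\rho_{t+1}$ at time $(t+1)+(s-1)$ applies), recursiveness yields
\[
\rho_t(X+m1_{\T_{t+s}})=\rho_t\bigl(X_t1_{\{t\}}-\rho_{t+1}(X)1_{\T_{t+1}}+m1_{\T_{t+1}}\bigr),
\]
and now cash additivity of $\rho_t$ at time $t+1$ (the hypothesis) applies directly, since $m\in L^\infty_t$, giving $\rho_t(X_t1_{\{t\}}-\rho_{t+1}(X)1_{\T_{t+1}})-m=\rho_t(X)-m$ by one more use of recursiveness.

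Your proposed workaround for the (nonexistent) obstacle is in fact flawed under your reading of the definition: if $m$ were only $\F_{t+s}$-measurable with $s\ge 2$, then the process $m1_{\T_{t+1}}$ would not be adapted (its $(t+1)$-th component is not $\F_{t+1}$-measurable), so $\widetilde X=X+m1_{\T_{t+1}}$ would not lie in $\R^\infty$ and neither $\rho_{t+1}(\widetilde X)$ nor $\rho_t(\widetilde X)$ would make sense. Likewise, invoking conditional cash invariance of $\rho_{t+1}$ requires $m\in L^\infty_{t+1}$, and cash additivity of $\rho_t$ at $t+1$ requires $m\in L^\infty_t$, neither of which you would have. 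So the fix would not have rescued the argument; fortunately, with the correct reading no rescue is needed.
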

\begin{proof}
Follows by induction using one-step cash additivity and recursiveness \eqref{recurs}.
\end{proof}

In view of Proposition~\ref{infca} and Remark~\ref{rmkca} we obtain the following result.
\begin{corollary}\label{coratc}
For $\T=\nat_0$, a dynamic convex risk measure $(\rt)_{t\in\nat_0}$ on $\R^{\infty}$ such that each $\rt$ is continuous from above and cash additive at time $t+1$ cannot be time consistent.
\end{corollary}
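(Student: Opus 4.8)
The plan is to combine the two preceding results directly. Corollary~\ref{coratc} asserts, for $\T=\nat_0$, that a dynamic convex risk measure $(\rt)_{t\in\nat_0}$ on $\R^\infty$ in which each $\rt$ is continuous from above and cash additive at time $t+1$ cannot be time consistent. So I would argue by contradiction: assume such a $(\rt)$ is time consistent, and derive a contradiction with Remark~\ref{rmkca}.

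The key steps, in order: First, since each $\rt$ is cash additive at time $t+1$ and $(\rt)$ is assumed time consistent, Proposition~\ref{infca} applies and yields that each $\rt$ is in fact cash additive, i.e.\ cash additive at every time $t+s$, $s>0$. Second, invoke Remark~\ref{rmkca}: for $\T=\nat_0$, a conditional convex risk measure $\rt$ that is continuous from above \emph{cannot} be cash additive, because for $X\in\R^\infty$ and $m\in\Lt$ with $m>0$ one would have
\[
-m+\rt(X)=\rt(X+m1_{\T_{t+s}})\nearrow\rt(X)\qquad\text{as}\quad s\to\infty,
\]
which is absurd. Since each $\rt$ in our dynamic risk measure is continuous from above, this contradicts the conclusion of the first step. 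Hence no such time consistent $(\rt)$ exists.

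I do not expect any genuine obstacle here: the statement is a clean corollary that simply packages Proposition~\ref{infca} (time consistency plus one-step cash additivity $\Rightarrow$ full cash additivity, proved by induction via recursiveness~\eqref{recurs}) together with Remark~\ref{rmkca} (continuity from above is incompatible with full cash additivity when $\T=\nat_0$). The only point requiring a word of care is the logical bookkeeping of the contrapositive: Proposition~\ref{infca} is stated as an implication assuming time consistency, so the proof must phrase itself as ``suppose, for contradiction, that $(\rt)$ is time consistent'' before applying it; and one should note that continuity from above of each $\rt$ is needed both to invoke Remark~\ref{rmkca} and is part of the hypothesis, so nothing extra is assumed. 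Thus the proof is essentially one line: it follows immediately from Proposition~\ref{infca} and Remark~\ref{rmkca}.
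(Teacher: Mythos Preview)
Your proposal is correct and matches the paper's own reasoning exactly: the corollary is stated right after Proposition~\ref{infca} and Remark~\ref{rmkca} with the remark ``In view of Proposition~\ref{infca} and Remark~\ref{rmkca} we obtain the following result,'' and no further argument is given. Your contrapositive write-up is precisely the intended one-line proof.
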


\begin{remark}
Corollary~\ref{coratc} and Remark~\ref{rmkca} heavily depend on the assumption of continuity from above, which was formulated as a global property. For $\T=\nat_0$, the corollary in fact suggests to replace global continuity from above by a local version; this is done in \cite{fp10}.
\end{remark}

\subsection{Calibration to num\'eraires}
Cash additivity can be seen as additivity with respect to the num\'eraire $1$. In this section we discuss additivity with respect to  other possible num\'eraires. To this end we formulate conditional versions of some results from \cite{er08}.

\begin{assumption}
In the rest of Section~\ref{cashadsec} all conditional convex risk measures $\rt$ are assumed to be continuous from above.
\end{assumption}
As usual, we denote by $\alpha_t$ the minimal penalty function of $\rt$, and for $t\in\T\cap\nat_0$ we define
\[
 \Q_t^{\alpha}:=\lk Q\in\Q_t\mk \alpha_t(Q)<\infty \rk,\quad\bar\Q_t^{\alpha}:=\lk \bar Q\in\bar\Q_t\mk \alpha_t(\bar Q)<\infty\rk, 
\]
where
\begin{equation*}
 \Q_t:=\lk Q\in\ma\mk Q=P\,\;\text{on}\;\,\F_t\rk,
\end{equation*}
and $\bar\Q_t$ is defined in \eqref{bqt}.

The following lemma is a conditional version of \cite[Lemma 2.3]{er08}.
\begin{lemma}\label{lemlin}
Let $\rt\,:\,\LTs\,\rightarrow\,\Lts$ be a conditional convex risk measure for random variables, and let $N\in\LT$. Then the following conditions are equivalent:
\begin{itemize}
\item[(i)]  $\rt(\lambda_tN)=\lambda_t\rt(N)$ for all $\lambda_t\in\Lts$;
\item[(ii)] $E_Q[-N\,|\,\F_t\,]=\rt(N)$ for all $Q\in\Q_t^\alpha$;
\item[(iii)] $\rt(X+\lambda_tN)=\rt(X)+\lambda_t\rt(N)$ for all $X\in\LTs$ and all $\lambda_t\in\Lts$.

\end{itemize}
\end{lemma}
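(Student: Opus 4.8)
The plan is to prove the three implications $(iii)\Rightarrow(i)$, $(i)\Rightarrow(ii)$, and $(ii)\Rightarrow(iii)$ in a cycle, relying on the robust representation of $\rt$ together with conditional cash invariance, conditional convexity, and conditional positive homogeneity where available. Throughout I would freely use that $\rt(X)=\es_{Q\in\Q_t^\alpha}(E_Q[-X\,|\,\F_t]-\alpha_t(Q))$, which is the translation of the robust representation for conditional convex risk measures for random variables (this is the random-variable analogue underlying Theorem~\ref{robdarpr}), and the fact that $\rt$ is normalized.

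The implication $(iii)\Rightarrow(i)$ is immediate: apply $(iii)$ with $X=0$ and use $\rt(0)=0$. For $(i)\Rightarrow(ii)$, I would argue by a scaling/subadditivity trick. From the representation, $\rt(N)\ge E_Q[-N\,|\,\F_t]-\alpha_t(Q)$ for every $Q\in\Q_t^\alpha$, and similarly $\rt(-N)\ge E_Q[N\,|\,\F_t]-\alpha_t(Q)$. Adding and using $\rt(N)+\rt(-N)\le \frac12\rt(2N)+\frac12\rt(-2N)$ iteratively — or more directly, using positive homogeneity $(i)$ which gives $\rt(\lambda_t N)=\lambda_t\rt(N)$ for all $\lambda_t\ge 0$ and hence (by splitting into positive and negative parts of $\lambda_t$, writing $\lambda_t N = \lambda_t^+N - \lambda_t^-N$ and noting $-N$ also satisfies the homogeneity by applying $(i)$ with $-N$ in place of $N$, which follows since $(i)$ for $N$ plus monotonicity forces it) — one deduces $\rt$ is linear along the line $\real N$. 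Then fixing $Q\in\Q_t^\alpha$ and writing $c:=E_Q[-N\,|\,\F_t]-\rt(N)\le 0$, the estimate $\rt(\lambda_t N)\ge \lambda_t E_Q[-N\,|\,\F_t]-\alpha_t(Q)$ for all real $\lambda_t$, combined with $\rt(\lambda_t N)=\lambda_t\rt(N)$, yields $\lambda_t c\le \alpha_t(Q)$ for all $\lambda_t\in\Lts$; letting $\lambda_t\to+\infty$ on $\{c<0\}$ and $\lambda_t\to-\infty$ on $\{c>0\}$ (both of which would contradict $\alpha_t(Q)<\infty$) forces $c=0$ $P$-a.s., which is $(ii)$.

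For $(ii)\Rightarrow(iii)$: fix $X\in\LTs$ and $\lambda_t\in\Lts$. Using the representation, $\rt(X+\lambda_t N)=\es_{Q\in\Q_t^\alpha}(E_Q[-X\,|\,\F_t]+\lambda_t E_Q[-N\,|\,\F_t]-\alpha_t(Q))$. By $(ii)$, on $\Q_t^\alpha$ we may replace $E_Q[-N\,|\,\F_t]$ by the $\F_t$-measurable random variable $\rt(N)$, which pulls out of the essential supremum along with $\lambda_t$; thus $\rt(X+\lambda_t N)=\lambda_t\rt(N)+\es_{Q\in\Q_t^\alpha}(E_Q[-X\,|\,\F_t]-\alpha_t(Q))=\lambda_t\rt(N)+\rt(X)$. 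One small point to handle carefully: pulling the $\F_t$-measurable factor $\lambda_t$ out of the essential supremum is clean when $\lambda_t\ge 0$; for general $\lambda_t$ one splits $\bar\Omega$ along $\{\lambda_t\ge 0\}$ and $\{\lambda_t<0\}$ and uses the local property of conditional risk measures, and on the negative part one uses $(ii)$ applied to $-N$ as well (which is again a consequence of $(ii)$ for $N$ since $E_Q[N\,|\,\F_t]=-\rt(N)$ and $-\rt(N)=\rt(-N)$ must hold — this last identity is exactly where one again invokes that the supremum is attained in the limit and $c=0$).

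The main obstacle I anticipate is the sign/measurability bookkeeping around the scalar $\lambda_t$: it is an $\F_t$-measurable random variable of arbitrary sign, not a constant, so every step that "factors $\lambda_t$ out" of an essential supremum or that invokes positive homogeneity has to be localized on $\{\lambda_t\ge 0\}$ and $\{\lambda_t< 0\}$ via the local property of $\rt$, and the behavior along the negative ray $-\real_+ N$ has to be tied back to $(ii)$ for $-N$. Establishing that $(i)$ for $N$ automatically gives the corresponding statement for $-N$ (equivalently $\rt(N)+\rt(-N)=0$) is the delicate link in the chain; once that is in place, the rest is routine manipulation of the robust representation.
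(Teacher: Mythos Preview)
Your cycle $(iii)\Rightarrow(i)\Rightarrow(ii)\Rightarrow(iii)$ and the core scaling argument for $(i)\Rightarrow(ii)$ are exactly what the paper does; the paper's proof is in fact a two-line version of yours. However, you have manufactured difficulties that are not there.

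First, condition $(i)$ is stated for \emph{all} $\lambda_t\in\Lts$, not just nonnegative ones. In particular $\rt(-N)=-\rt(N)$ is immediate from $\lambda_t=-1$, so there is no ``delicate link'' to establish and no need to invoke convexity or iterate subadditivity to pass to the negative ray. Your claim that $c:=E_Q[-N\,|\,\F_t]-\rt(N)\le 0$ is both unjustified and unnecessary; the argument is simply that $\lambda_t c\le \alpha_t(Q)$ for every $\lambda_t\in\Lts$ forces $c=0$ on $\{\alpha_t(Q)<\infty\}$ (and your limits are swapped: take $\lambda_t\to+\infty$ on $\{c>0\}$ and $\lambda_t\to-\infty$ on $\{c<0\}$).

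Second, in $(ii)\Rightarrow(iii)$ there is no sign issue at all. After substituting $E_Q[-N\,|\,\F_t]=\rt(N)$ via $(ii)$, the term $\lambda_t\rt(N)$ is an $\F_t$-measurable quantity \emph{independent of $Q$}, so it pulls out of the essential supremum \emph{additively}: $\es_Q(a_Q+b)=b+\es_Q a_Q$ whenever $b$ does not depend on $Q$. No positivity of $\lambda_t$, no localization, and no appeal to $(ii)$ for $-N$ is needed. Drop these detours and your proof collapses to the paper's.
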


\begin{proof}
$(i)\Rightarrow(ii)$.\, (i) and the robust representation \eqref{rd:fp} imply for each $\lambda_t\in\Lts$ and $Q\in\Q_t$ 
\[
\lambda_t\rt(N)=\rt(\lambda_tN)\geq \lambda_t E_Q[-N|\F_t]-\alpha_t(Q). 
\]
If $\alpha_t(Q)<\infty$, we have  $\alpha_t(Q)\geq -\lambda_t(E_Q[N|\F_t]+\rt(N))$ for any $\lambda_t\in\Lts$, thus $\rt(N)=E_Q[-N|\F_t]$.\\
$(ii)\Rightarrow(iii)$ follows from the robust representation \eqref{rd:fp}, and $(iii)\Rightarrow(i)$ from normalization.
\end{proof}

Due to (i) of Lemma~\ref{lemlin}, we  can assume without loss of generality that the random variable $N$ satisfies the  condition $\rt(N)=-1$. Then condition (ii) of Lemma~\ref{lemlin} means that the conditional expectation of the ``num\'eraire'' $N$ is unique under all relevant probability measures, and condition (iii) can be viewed as \emph{additivity} with respect to the num\'eraire $N$:
\[
\rt(X+\lambda_tN)=\rt(X)-\lambda_t\quad \forall X\in\LTs,\quad \forall \lambda_t\in\Lts.
\]

The following proposition translates Lemma~\ref{lemlin} to the framework of risk measures for processes.
\begin{lemma}\label{proplin}
Let $\rt\,:\,\mathcal{R}_t^{\infty}\,\rightarrow\,\Lts$ be a conditional convex risk measure for processes, and let $N_s\in L^{\infty}_s$ for some $s\in\T_{t+1}$. Then the following conditions are equivalent:
\begin{itemize}
\item[(i)] $\rt(\lambda_tN_s1_{\T_s})=\lambda_t\rt(N_s1_{\T_s})$ for all $\lambda_t\in\Lts$;
\item[(ii)]$E_Q\left[-N_s\frac{D_s}{D_t}\,\Big|\,\F_t\,\right]=\rt(N_s1_{\T_s})$ for all $\bar Q=Q\otimes D\in\bar\Q_t^\alpha$ 
\item[(iii)] for all $X\in\mathcal{R}_t^{\infty}$ and $\lambda_t\in\Lts$
\[
\rt(X+\lambda_tN_s1_{\T_s})=\rt(X)+\lambda_t\rt(N_s1_{\T_s}).
\]
\end{itemize}
\end{lemma}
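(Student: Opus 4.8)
The plan is to reduce everything to the already-proved Lemma~\ref{lemlin} by passing to the product space $(\bar\Omega,\bar\F,\bar P)$ via the correspondence $\rt\leftrightarrow\bar\rho_t$ of Proposition~\ref{corresp}, and then to re-express the conditional-expectation condition on product space in terms of the decomposition $\bar Q=Q\otimes D$ using Theorem~\ref{propq} and Corollary~\ref{cor:condexp}. First I would observe that for a process of the form $N_s1_{\T_s}$ with $N_s\in L^\infty_s$ and $s\in\T_{t+1}$, the random variable on $\bar\Omega$ attached to it is $\bar\F$-measurable and, more importantly, setting $\bar N:=N_s1_{\T_s}$ we have, by \eqref{rrp}, $\bar\rho_t(\lambda_t\bar N)=\lambda_t\rt(N_s1_{\T_s})1_{\T_t}$ and $\bar\rho_t(\bar N)=\rt(N_s1_{\T_s})1_{\T_t}$, so condition (i) here is equivalent to $\bar\rho_t(\lambda_t\bar N)=\lambda_t\bar\rho_t(\bar N)$ for all $\lambda_t\in\Lts$, i.e.\ condition (i) of Lemma~\ref{lemlin} applied to $\bar\rho_t$ and $\bar N$. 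The same substitution shows condition (iii) here is equivalent to condition (iii) of Lemma~\ref{lemlin} for $\bar\rho_t$ and $\bar N$, using that $X\in\R^\infty_t$ corresponds to a product-space variable and that the extra ``past coordinates'' in \eqref{rrp} are untouched by adding $\lambda_t\bar N$ since $\bar N_u=0$ for $u<s$ and $s\ge t+1$.

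The only genuine work is to match condition (ii). By Lemma~\ref{lemlin}(ii) applied to $\bar\rho_t$, condition (i)/(iii) is equivalent to $E_{\bar Q}[-\bar N\mid\bar\F_t]=\bar\rho_t(\bar N)$ for all $\bar Q\in\bar\Q_t^\alpha$. Using Corollary~\ref{cor:condexp} together with the decomposition $\bar Q=Q\otimes D=Q\otimes\gamma$ of Theorem~\ref{propq}, the conditional expectation $E_{\bar Q}[-\bar N\mid\bar\F_t]$ restricted to the coordinates in $\T_t$ equals
\[
\frac{1}{D_t}E_Q\Big[-\sum_{u\in\T_t}\gamma_u(N_s1_{\{u\ge s\}})\,\Big|\,\F_t\Big]
=\frac{1}{D_t}E_Q\Big[-N_s\sum_{u\in\T_s}\gamma_u\,\Big|\,\F_t\Big]
=\frac{1}{D_t}E_Q\big[-N_sD_s\,\big|\,\F_t\big]
=E_Q\Big[-N_s\tfrac{D_s}{D_t}\,\Big|\,\F_t\Big]
\]
on $\{D_t>0\}$, where the middle step uses \eqref{gammatodisc2} and the fact that $N_s$ is $\F_s$-measurable hence can be pulled out of the inner sum over $u\in\T_s\subseteq\T_t$ (all indices $u\ge s\ge t$). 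On $\{D_t=0\}$ both sides are governed by the convention and the statement is vacuous, matching the phrasing ``for all $\bar Q=Q\otimes D\in\bar\Q_t^\alpha$''. Hence $E_{\bar Q}[-\bar N\mid\bar\F_t]=\bar\rho_t(\bar N)$ for all $\bar Q\in\bar\Q_t^\alpha$ translates precisely into condition (ii) of the present lemma.

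Putting these equivalences together gives the three-way equivalence. The main obstacle, and the one place to be careful, is the bookkeeping in the identity above: one has to verify that the set $\bar\Q_t^\alpha$ on product space, expressed through the decomposition, is exactly the set of pairs $\bar Q=Q\otimes D$ with $\alpha_t(\bar Q)<\infty$ that appear in (ii) — this follows from \eqref{corpen} together with Lemma~\ref{rmkcon} (which pins down that $\bar Q\in\bar\Q_t$ forces $\gamma_u=\mu_u$ for $u<t$ and $D_t=\sum_{u\in\T_t}\mu_u>0$, so the division by $D_t$ is legitimate) — and that pulling $N_s$ out of the sum is valid because $s\ge t+1$, so the summation index never drops below $s$. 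Everything else is a direct application of Lemma~\ref{lemlin} through the dictionary of Proposition~\ref{corresp} and Corollary~\ref{cor:condexp}.
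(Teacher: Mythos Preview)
Your proposal is correct and follows essentially the same route as the paper: pass to $\bar\rho_t$ via Proposition~\ref{corresp}, apply Lemma~\ref{lemlin} on the product space, and translate condition (ii) back via Corollary~\ref{cor:condexp} and the decomposition $\bar Q=Q\otimes D$. The paper's proof is terser but identical in structure; your explicit computation $E_{\bar Q}[-\bar N\mid\bar\F_t]\restriction_{\T_t}=E_Q[-N_sD_s/D_t\mid\F_t]$ using \eqref{gammatodisc2} is exactly the content hidden in the paper's phrase ``this is equivalent to (ii) by Corollary~\ref{cor:condexp}'', and your observation via Lemma~\ref{rmkcon} that $D_t>0$ for $\bar Q\in\bar\Q_t$ is the right justification for the division.
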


\begin{proof} 
Consider the conditional convex risk measure $\bar\rho_t\,:\,\LTp\,\rightarrow\,\Ltp$ associated to $\rt$  via \eqref{rrp}.
The linearity condition (i) for $\rt$ is equivalent to
\begin{equation*}
\bar\rho_t(\lambda_tN_s1_{\T_s})=\lambda_t\bar\rt(N_s1_{\T_s})\quad \forall \lambda_t\in\Lts,
\end{equation*}
i.e., $\bar\rho_t$ is linear on $\{\Lambda_t N_s1_{\T_s}\,|\,\Lambda_t\in\Ltp\}$. By Lemma~\ref{lemlin} and \eqref{rrp} this is equivalent to
\[
E_{\bar{Q}}[-N_s1_{\T_s}\,|\,\bar\F_t\,]=\rt(N_s1_{\T_s})1_{\T_t}\quad\bar Q\text{-a.s.}\quad \forall \bar{Q}=Q\otimes D\in\bar{\Q}_t^\alpha,
\]
and this is equivalent to (ii) by Corollary~\ref{cor:condexp}. In the same way, Lemma~\ref{lemlin} and \eqref{rrp} imply that  (i) is equivalent  to (iii).
\end{proof}

Since each  $D\in\mathcal{D}_t(Q)$ is non-decreasing, Lemma~\ref{proplin} applied to  $N_s=1$ for some $s>t$ yields the following characterization of cash additivity:
\begin{corollary}\label{proofca}
A conditional convex risk measure for processes $\rt\,:\,\mathcal{R}_t^{\infty}\,\rightarrow\,\Lts$ is cash additive at time $s\in\T_{t+1}$ if and only if
\[
D_t=D_{t+1}=\ldots=D_s\quad \qf
\]
for all $\bar Q=Q\otimes D\in\bar Q_t^\alpha$.
\end{corollary}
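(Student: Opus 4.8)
The plan is to deduce Corollary~\ref{proofca} from Lemma~\ref{proplin} applied to the numéraire $N_s=1$, together with the structural fact that every $D\in\D_t(Q)$ is predictable and \emph{non-increasing} with $D_t=1$. Concretely, apply part (ii) of Lemma~\ref{proplin} with $N_s\equiv1$: cash additivity of $\rt$ at time $s\in\T_{t+1}$ is, by definition and by Corollary~\ref{proofca}'s setup, exactly the statement $\rt(X+\lambda_t1_{\T_s})=\rt(X)-\lambda_t$ for all $X\in\R_t^\infty$ and $\lambda_t\in\Lts$; since $\rt(1_{\T_s})=-1$ by cash invariance (take $X=0$, $m=1$, noting $0+1\cdot1_{\T_s}$ and $\rt(1_{\T_t})=-1$ wait — one must be careful here, $\rt(1_{\T_s})$ is \emph{not} $-1$ a priori for $s>t$, only $\rt(1_{\T_t})=-1$; what is true is that condition (iii) of Lemma~\ref{proplin} with $N_s=1$ reads $\rt(X+\lambda_t1_{\T_s})=\rt(X)+\lambda_t\rt(1_{\T_s})$, and cash additivity at $s$ demands the right-hand side equal $\rt(X)-\lambda_t$, i.e. $\rt(1_{\T_s})=-1$). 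So the first step is to observe that cash additivity at $s$ is equivalent to the conjunction of $\rt(1_{\T_s})=-1$ and condition (iii) of Lemma~\ref{proplin} for $N_s=1$; by the lemma this is equivalent to $\rt(1_{\T_s})=-1$ together with condition (ii), namely $E_Q\!\left[\tfrac{D_s}{D_t}\mid\F_t\right]=1$ $Q$-a.s. on $\{D_t>0\}$ for all $\bar Q=Q\otimes D\in\bar\Q_t^\alpha$.

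Next I would show that $E_Q[D_s/D_t\mid\F_t]=1$ forces $D_t=D_{t+1}=\cdots=D_s$ $Q$-a.s. The point is that on $\{D_t>0\}$ we have $0\le D_s\le D_{t+1}\le D_t$ by monotonicity of $D$, hence $0\le D_s/D_t\le1$; a nonnegative random variable bounded above by $1$ whose $\F_t$-conditional expectation equals $1$ must equal $1$ $Q$-a.s.\ on $\{D_t>0\}$ (consider $E_Q[(1-D_s/D_t)\mathbf 1_{\{D_t>0\}}]=0$ with a nonnegative integrand). Therefore $D_s=D_t$, and since $D_s\le D_r\le D_t$ for every intermediate $r$ with $t\le r\le s$, squeezing gives $D_t=D_{t+1}=\cdots=D_s$ $Q$-a.s.\ on $\{D_t>0\}$. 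On $\{D_t=0\}$ all the $D_r$, $r\ge t$, already vanish by monotonicity, so the equalities hold trivially there; thus the chain $D_t=\cdots=D_s$ holds $Q$-a.s.\ outright. Conversely, if $D_t=\cdots=D_s$ $Q$-a.s.\ for every $\bar Q=Q\otimes D\in\bar\Q_t^\alpha$, then $E_Q[D_s/D_t\mid\F_t]=1$ on $\{D_t>0\}$ for all such $\bar Q$, and also $\rt(1_{\T_s})=-1$: indeed from the robust representation \eqref{newreprinf} (or Corollary~\ref{cororep}(2)), applied with $X=1_{\T_s}$, each term in the essential supremum equals $E_Q[-\sum_{u\in\T_t}\tfrac{\gamma_u}{D_t}(1_{\T_s})_u\mid\F_t]-\alpha_t(Q\otimes\gamma)=-E_Q[\tfrac{D_s}{D_t}\mid\F_t]-\alpha_t=-1-\alpha_t$, whose supremum over the penalized set is $-1$ by normalization. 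Hence condition (ii) of Lemma~\ref{proplin} holds and $\rt(1_{\T_s})=-1$, giving back cash additivity at $s$.

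The organisation of the write-up is therefore: (1) record that cash additivity at $s$ $\iff$ [$\rt(1_{\T_s})=-1$ and (iii) of Lemma~\ref{proplin} with $N_s=1$] $\iff$ [$\rt(1_{\T_s})=-1$ and (ii) of Lemma~\ref{proplin}]; (2) translate (ii) into $E_Q[D_s/D_t\mid\F_t]=1$ for all $\bar Q\in\bar\Q_t^\alpha$; (3) use $0\le D_s/D_t\le1$ on $\{D_t>0\}$ and the conditional-expectation-equals-sup argument to get the deterministic chain of equalities, and handle $\{D_t=0\}$ by monotonicity; (4) for the converse, read off $\rt(1_{\T_s})=-1$ from the robust representation when all relevant discounting factors are flat on $[t,s]$. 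I expect the only genuinely delicate point to be step (3)'s squeeze together with the correct handling of the set $\{D_t>0\}$ versus $\{D_t=0\}$ and the convention $0\cdot\infty=0$ for $\alpha_t$; everything else is a direct specialisation of Lemma~\ref{proplin}. In fact the proof in the paper is likely to be only one or two lines precisely because it just cites Lemma~\ref{proplin} for $N_s=1$ and invokes monotonicity of $D\in\D_t(Q)$, exactly as the sentence preceding the corollary announces, so I would keep the exposition correspondingly terse, spelling out only the monotonicity squeeze.
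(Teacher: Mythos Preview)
Your approach is correct and is precisely the one the paper intends: apply Lemma~\ref{proplin} with $N_s=1$, so that condition~(ii) becomes $E_Q[D_s/D_t\mid\F_t]=1$ for all $\bar Q\in\bar\Q_t^\alpha$, and then use that $D$ is non-increasing to squeeze $D_t=D_{t+1}=\cdots=D_s$. Two minor simplifications: first, for $\bar Q\in\bar\Q_t^\alpha\subseteq\bar\Q_t$ one has $D_t=\sum_{u\in\T_t}\mu_u>0$ (a positive constant, by Lemma~\ref{rmkcon}), so the discussion of the set $\{D_t=0\}$ is vacuous here; second, in the converse direction you need not verify $\rt(1_{\T_s})=-1$ separately, since once $E_Q[-D_s/D_t\mid\F_t]=-1$ for \emph{all} $\bar Q\in\bar\Q_t^\alpha$, the robust representation together with $\ei_{\bar Q}\alpha_t(\bar Q)=0$ (from normalization) already forces $\rt(1_{\T_s})=-1$, so condition~(ii) of Lemma~\ref{proplin} holds as stated.
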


In other words,  cash additivity at time $s>t$ means that there is no discounting between $t$ and $s$ in all the relevant models. In particular we have the following proposition.

\begin{proposition}\label{propcash2}
A conditional convex risk measure for processes $\rt$ is cash additive at time $s\in\T_{t+1}$ if and only if it admits the robust representation 
\begin{equation}\label{ca}
\rt(X)=\es_{Q\in\Q^{\text{loc}}_t}\es_{\gamma\in\Gamma_{s}(Q)}\left(E_Q\lp-\sum_{k\in\T_s}\gamma_kX_k\mk\F_t\rp-\alpha_t(Q\otimes\gamma)\right),   \quad X\in\R_t^\infty.
\end{equation}
In this case $\rt$ is cash additive up to $s$, i.e., at all times $t+1\pk s$.

In particular, if $T<\infty$ or if $\T=\nat_0\cup\{\infty\}$, a risk measure for processes $\rt$ is  cash additive if and only if it reduces to a risk measure on $\LT$:
\begin{equation}\label{reprrv}
\rt(X)=\es_{Q\in\Q_t}\left(E_Q[-X_T|\F_t]-\beta_t(Q)\right),
\end{equation}
where $\beta_t(Q):=\alpha_t(Q\otimes\delta_{\{T\}})$, and $\delta_{\{T\}}$ denotes the Dirac measure at $T$.
\end{proposition}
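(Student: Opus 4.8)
The plan is to reduce everything to the already-established machinery: the characterization of cash additivity at a single time via the discounting processes (Corollary~\ref{proofca}) and the robust representation \eqref{newreprinf} of $\rt$. The first implication to handle is that cash additivity at time $s\in\T_{t+1}$ forces the representation \eqref{ca}. By Corollary~\ref{proofca}, cash additivity at $s$ is equivalent to $D_t=D_{t+1}=\cdots=D_s$ $Q$-a.s.\ for all $\bar Q=Q\otimes D\in\bar\Q_t^\alpha$, i.e.\ for all relevant models. In terms of the optional random measure $\gamma$ related to $D$ via \eqref{disctogamma}, the condition $D_t=\cdots=D_s$ means precisely $\gamma_k=0$ for $k=t\pk s-1$; equivalently $\gamma\in\Gamma_s(Q)$ (after the normalization $D_t=1$ coming from $Q\in\Q_t^{\text{loc}}$). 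Since the essential supremum in \eqref{newreprinf} can be restricted to those $\bar Q$ with $\alpha_t(\bar Q)<\infty$ (the other terms contribute $-\infty$), replacing $\Gamma_t(Q)$ by $\Gamma_s(Q)$ leaves the supremum unchanged, and the term $\sum_{k\in\T_t}\gamma_kX_k$ collapses to $\sum_{k\in\T_s}\gamma_kX_k$ because $\gamma_k=0$ for $k<s$. This gives \eqref{ca}. Conversely, if \eqref{ca} holds, then every penalizing model has $\gamma_k=0$ for $k<s$, hence $D_t=\cdots=D_s$ on $\{\alpha_t(\bar Q)<\infty\}$, and Corollary~\ref{proofca} yields cash additivity at $s$.

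Next, the claim that cash additivity at $s$ implies cash additivity at every intermediate time $t+1\pk s$: this is immediate from Corollary~\ref{proofca}, since $D_t=\cdots=D_s$ $Q$-a.s.\ trivially gives $D_t=\cdots=D_r$ $Q$-a.s.\ for each $r$ with $t+1\le r\le s$, for all $\bar Q\in\bar\Q_t^\alpha$.

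For the final statement (the case $T<\infty$ or $\T=\nat_0\cup\{\infty\}$), cash additivity means cash additivity at every $s\in\T_{t+1}$, in particular at $s=T$ (which is a genuine time point in both these cases). Applying the first part with $s=T$ gives the representation \eqref{ca} with $\Gamma_T(Q)$. But a $\gamma\in\Gamma_T(Q)$ has all mass at $T$, i.e.\ $\gamma=\delta_{\{T\}}$, so the inner essential supremum over $\gamma$ is a singleton and $\sum_{k\in\T_T}\gamma_kX_k=X_T$; setting $\beta_t(Q):=\alpha_t(Q\otimes\delta_{\{T\}})$ turns \eqref{ca} into \eqref{reprrv}. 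Conversely, if $\rt$ has the form \eqref{reprrv}, then it only penalizes measures of the form $Q\otimes\delta_{\{T\}}$, whose discounting process satisfies $D_0=\cdots=D_T=1$, so by Corollary~\ref{proofca} it is cash additive at every $s$. The main obstacle, such as it is, lies in the bookkeeping of the first paragraph: one must check carefully that passing from $\Gamma_t(Q)$ to $\Gamma_s(Q)$ in the essential supremum is legitimate in both directions, i.e.\ that the models with infinite penalty (where $D_t$ may fail the normalization or vanish) are genuinely irrelevant and that the identification of $\{D_t=\cdots=D_s\}$ with $\{\gamma\in\Gamma_s(Q)\}$ via Lemma~\ref{intbypartslem} is exact $Q$-a.s.\ on $\{D_t>0\}$. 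Everything else is a direct appeal to Corollary~\ref{proofca} and Theorem~\ref{robdarpr}.
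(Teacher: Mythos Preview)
Your overall strategy matches the paper's: reduce to Corollary~\ref{proofca} and the robust representation restricted to measures with finite penalty (part 1 of Corollary~\ref{cororep}). There are, however, two points where your argument is incomplete.

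First, the converse direction ``\eqref{ca} $\Rightarrow$ cash additivity at $s$'' is not correctly routed through Corollary~\ref{proofca}. Having a representation over $\gamma\in\Gamma_s(Q)$ does \emph{not} tell you that every $\bar Q\in\bar\Q_t^\alpha$ satisfies $\gamma_k=0$ for $k<s$; it only says the supremum can be restricted to such $\gamma$. Corollary~\ref{proofca} therefore does not apply as you invoke it. The paper handles this direction by direct computation: for $m\in L^\infty_t$ and $\gamma\in\Gamma_s(Q)$ one has $\sum_{k\in\T_s}\gamma_k(X_k+m)=\sum_{k\in\T_s}\gamma_k X_k+m$, so the essential supremum shifts by $-m$. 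This is the ``obvious'' step in the paper's proof, and it gives cash additivity up to $s$ in one stroke.

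Second, and more substantially, in the final paragraph you pass silently from $\Q_t^{\text{loc}}$ to $\Q_t$ when writing down \eqref{reprrv}. For $T<\infty$ this is harmless since the two sets coincide, but for $\T=\nat_0\cup\{\infty\}$ it is not: a measure $Q\in\Q_t^{\text{loc}}$ need not be absolutely continuous with respect to $P$ on $\F_\infty$. The paper closes this gap by invoking Lemma~\ref{abscont}: if $\gamma=\delta_{\{T\}}=\delta_{\{\infty\}}$, then $\gamma_\infty=1>0$ $Q$-a.s., and the lemma forces $Q\in\ma$. Without this step the reduction to a risk measure on $L^\infty$ is not justified in the infinite-horizon case.
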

\begin{proof}
Obviously representations \eqref{ca} implies cash additivity up to time $s$. The converse  follows from  1) of Corollary~\ref{cororep} and Corollary~\ref{proofca}. To prove the last part of the assertion, note that $\Gamma_T(Q)=\{\delta_{\{T\}}\}$ if $T<\infty$ or $\T=\nat_0\cup\{\infty\}$. Moreover, we have $Q\ll P$ for any $Q\in\Q_t^{\text{loc}}$ such that $Q\otimes\delta_{\{T\}}\in\bar Q_t^\alpha$. This is obvious for $T<\infty$, and it follows from Lemma~\ref{abscont} if  $\T=\nat_0\cup\{\infty\}$,  since $\gamma_\infty=1$ $Q$-a.s.\ in this case. Thus the  representation \eqref{reprrv} follows from \eqref{ca}.
\end{proof}

\begin{remark}
In particular, in the cash additive case and for $T<\infty$ or $\T=\nat_0\cup\{\infty\}$, the results of Section~\ref{tc} reduce to the corresponding results for risk measures for random variables from \cite{fp6, ap9}.
\end{remark}

The following example extends \cite[Proposition 2.4]{er08} to our present framework.

\begin{example}\label{exzcb}
Let $\rt\,:\,\mathcal{R}_t^{\infty}\,\rightarrow\,\Lts$ be a conditional convex risk measure for processes. Assume that there is a money market account $(B_t)_{t\in\T\cap\nat_0}$ as in Example~\ref{cashflow}, and that zero coupon bonds for all maturities $k>t, k\in\T\cap\nat_0$ are available at prices $B_{t,k}$, respectively.

Suppose that $\rt$ satisfies the following \emph{calibration condition}: 
\begin{equation}\label{Hansw}
\rt(\lambda_t\frac{B_{t}}{B_k}1_{\T_k})=-\lambda_tB_{t,k}\quad \forall \lambda_t\in\Lts,\quad\forall  k\in\T_t\cap\nat_0.
\end{equation}
Lemma~\ref{proplin} applied to $N_k=\dfrac{B_t}{B_k}$ implies that the calibration condition \eqref{Hansw} is equivalent to
\begin{equation*}
\rt\left(X+\lambda_t\frac{B_t}{B_k}1_{\T_k}\right)=\rt(X)-\lambda_tB_{t,k}\quad \forall X\in\mathcal{R}_t^{\infty},\quad \forall \lambda_t\in\Lts,\quad \forall k\in\T_t\cap\nat_0,
\end{equation*}
and also to
\begin{equation}\label{na}
E_Q\left[\frac{B_t}{B_k}\frac{D_k}{D_t}\,\Big|\,\F_t\,\right]=B_{t,k}\quad\forall k\in\T_t\cap\nat_0,\quad\forall \bar Q=Q\otimes D\in\bar \Q_t^\alpha.
\end{equation}
Using \eqref{na}, the robust representation from part 1 of Corollary~\ref{cororep}, and monotone convergence for $T=\infty$, it can be seen that the calibration condition \eqref{Hansw} is equivalent to the following one, that may seem stronger at first sight:
\begin{equation*}
\rt\left(\sum_{k=t+1}^{T}\lambda_k\frac{B_t}{B_k}1_{\T_k}\right)=-\sum_{k=t+1}^{T}\lambda_kB_{t,k}\quad \forall \lambda_k\in\Lts.
\end{equation*}

Moreover, if the short rate process $(r_t)$, and hence also the money market account $(B_s)_{s\in\T\cap\nat_0}$ is predictable, then \eqref{na} implies
\[
\frac{B_t}{B_{t+1}}\frac{D_{t+1}}{D_{t}}=B_{t,t+1},
\]
and thus $D_{t+1}=D_t$ for all $\bar Q=Q\otimes D\in\bar \Q_t^\alpha$, since $B_{t,t+1}=(1+r_{t+1})^{-1}$ by a standard no arbitrage argument. Hence $\rt$ is cash additive at time $t+1$ by Corollary~\ref{proofca}.
In particular,  if a dynamic convex risk measure $(\rt)$ is time consistent, and if each $\rt$ satisfies the calibration condition \eqref{Hansw} with a predictable money market account, then each $\rt$ is cash additive by Proposition~\ref{infca}. In view of Remark~\ref{rmkca}, a time consistent dynamic convex risk measure that is continuous from above cannot satisfy condition \eqref{Hansw} for all $t\in\T$ if $\T=\nat_0$.
\end{example}

\section{Examples}\label{examplesec}
In this section we illustrate our analysis by discussing some examples, in particular analogues to classical risk measures for random variables such as the entropic risk measure and Average Value at Risk. Another class of examples is obtained by separating  model and discounting ambiguity in the robust representations of Subsection~\ref{robsec}.

\subsection{Entropic risk measures}
In this section we introduce entropic risk measures for processes. As a first variant we simply take  the usual conditional entropic risk measure on product space, that is the map $\bar{\rho}_t:\,\LTp\,\rightarrow\,\Ltp$ defined by
\begin{equation*}
\bar{\rho}_t(X)=\frac{1}{R_t}\log E_{\bar{P}}\lp e^{-R_tX}\mk\bar{\F}_t\rp
\end{equation*}
with risk aversion parameter $R_t=(r_0\pk r_{t-1},r_t,r_t,\ldots)\in\Ltp$, where $r_s>0$ and $r_s^{-1}\in L^{\infty}_s$ for all $s=0\pk t$, and $e^{-R_tX}=(e^{-r_sX_s})_{s\in\T}$. 

For an optional probability measure $\nu=(\nu_s)_{s\in\T}$ on $\T$, we denote by $\nu^t$ the normalized restriction to $\T_t$, i.e.
\begin{eqnarray*}
\nu_s^t = \left\{ \begin{array}{cl} \dfrac{\nu_s}{\sum_{j\in\T_t}\nu_j}, & \textrm{on}\,\, \lk\sum_{j\in\T_t}\nu_j>0\rk,\\
0, & \textrm{otherwise}
\end{array} \right.
\end{eqnarray*}
for $s\in\T_t$.
\begin{proposition}
The conditional entropic risk measure for processes $\rt\,:\,\mathcal{R}_t^{\infty}\,\rightarrow\,\Lt$ associated to $\bar{\rho}_t$ via \eqref{rrp} takes the form
\begin{equation}\label{entrp}
\rt(X)=\rho^{P,r_t}_t\left(-\rho^{\mu(\omega),r_t(\omega)}_t\left(X_.(\omega)\right)\right).
\end{equation}
Here $\rho^{P,r_t}_t:\,\LT\,\rightarrow\,\Lt$ denotes the usual conditional entropic risk measure for random variables with risk aversion parameter $r_t$:
\[
\rho^{P,r_t}_t(Y)=\frac{1}{r_t}\log E_P\lp e^{-r_tY}\mk \F_t\rp,\quad Y\in\LT.
\]
On the other hand,  $\rho^{\nu,r}_t:\,\real_b^{\T}\,\rightarrow\,\real$ is the entropic risk measure ``with respect to time", defined on the set of sequences $\real_b^{\T}=\{x=(x_s)_{s\in\T} |x_s\in\real\;\forall\,s,\; \sup_{s\in\T}x_s<\infty\}$ by
\begin{equation*}
\rho^{\nu,r}_t(x)=\frac{1}{r}\log\left(\sum_{s\in\T_t} e^{-rx_s}\nu^t_s\right)
\end{equation*}
for a given probability measure $\nu$ on $\T$ and a risk aversion parameter $r\in\real$, $r>0$.

The minimal penalty function $\alpha_t$ of $\rt$ is given for $Q\otimes\gamma\in\map$ by
\begin{equation}\label{penen}
\alpha_t(Q\otimes\gamma)=\frac{1}{r_t}E_Q\lp\sum_{s\in\T_t}\gamma_s^t\log\frac{M_s}{M_t}\mk\F_t \rp+\frac{1}{r_t}E_Q[H(\gamma^t(\cdot)|\mu^t(\cdot))|\F_t],
\end{equation}
where $H(\cdot|\cdot)$ is the usual relative entropy for probability measures on $\T_t$, $M_s=\frac{dQ}{dP}|_{\F_s}$, $s\in\T\cap\nat_0$, and $M_\infty=\lim_{t\to\infty}M_t$ $P$-a.s. if $\T=\nat_0\cup\{\infty\}$.
\end{proposition}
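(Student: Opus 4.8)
The plan is to pass to the product space: by Proposition~\ref{corresp} and \eqref{rrp}, $\rho_t$ is identified with the ``tail component'' of the conditional entropic risk measure $\bar\rho_t$ on $(\bar\Omega,\bar\F,\bar P)$, so it suffices to evaluate $\bar\rho_t(X)$ and its minimal penalty function $\bar\alpha_t$ on $\Omega\times\T_t$. For this I would use the formula for conditional expectations on the optional filtration from Corollary~\ref{cor:condexp} and the decomposition $\bar Q=Q\otimes\gamma$ from Theorem~\ref{propq}. One small fact will be used throughout: writing $m_t:=\sum_{s\in\T_t}\mu_s$, one has $m_t=1-\sum_{s<t}\mu_s$, which is $\F_{t-1}$-measurable because $\mu$ is adapted and $\sum_{s\in\T}\mu_s=1$.

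To prove \eqref{entrp}, I would apply Corollary~\ref{cor:condexp} to $\bar P=P\otimes\mu$ (i.e.\ $Q=P$, $\gamma=\mu$, $D_t=m_t$) and use that every component of $R_t$ on $\T_t$ equals $r_t$, obtaining
\begin{equation*}
\rho_t(X)=(\bar\rho_t(X))_t=\frac1{r_t}\log\Big(\frac1{m_t}E_P\lp\sum_{s\in\T_t}\mu_se^{-r_tX_s}\mk\F_t\rp\Big)=\frac1{r_t}\log E_P\lp\sum_{s\in\T_t}\mu_s^te^{-r_tX_s}\mk\F_t\rp,
\end{equation*}
where $m_t$ was moved inside the conditional expectation because it is $\F_t$-measurable. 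By the definition of $\rho_t^{\nu,r}$ the integrand equals $\exp\big(r_t\,\rho_t^{\mu(\cdot),r_t(\cdot)}(X_\cdot(\cdot))\big)$, and $\omega\mapsto\rho_t^{\mu(\omega),r_t(\omega)}(X_\cdot(\omega))$ is $\F_T$-measurable and bounded by $\|X\|_{\infty}$ (an entropic average lies between the infimum and supremum of its argument, and $r_t^{-1}\in\Lt$). Hence $\rho_t^{P,r_t}$ may legitimately be applied to $-\rho_t^{\mu(\cdot),r_t(\cdot)}(X_\cdot(\cdot))$, and rewriting the last display in this form gives \eqref{entrp}.

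For the penalty function, recall first that $\bar\rho_t$ is continuous from above, and that by the standard conditional Gibbs argument (the essential supremum in its robust representation is attained at the measure with $\bar\F_t$-conditional density proportional to $e^{-R_tX}$) its minimal penalty function is $\frac1{R_t}$ times the $\bar\F_t$-conditional relative entropy of $\bar Q$ with respect to $\bar P$:
\begin{equation*}
\bar\alpha_t(\bar Q)=\frac1{R_t}E_{\bar Q}\lp\log\frac{\bar Z}{\bar Z_t}\mk\bar\F_t\rp,\qquad \bar Z=\frac{d\bar Q}{d\bar P},\quad \bar Z_t=E_{\bar P}[\bar Z\mk\bar\F_t].
\end{equation*}
By Theorem~\ref{propq} and local absolute continuity, $\bar Z$ is the optional process with $\bar Z_s=M_s\gamma_s/\mu_s$, and by Corollary~\ref{cor:condexp} (together with the $\F_{t-1}$-measurability of $D_t$ and $m_t$) its $\bar\F_t$-projection equals $M_tD_t/m_t$ on $\Omega\times\T_t$. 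Evaluating the previous display on $\Omega\times\T_t$ with Corollary~\ref{cor:condexp} gives, $Q$-a.s.\ on $\{D_t>0\}$,
\begin{equation*}
\alpha_t(Q\otimes\gamma)=\frac1{r_t}\left(E_Q\lp\sum_{s\in\T_t}\gamma_s^t\log\frac{M_s\gamma_s}{\mu_s}\mk\F_t\rp-\log\frac{M_tD_t}{m_t}\right).
\end{equation*}
Writing $\gamma_s=D_t\gamma_s^t$ and $\mu_s=m_t\mu_s^t$ for $s\in\T_t$ and splitting the logarithm,
\begin{equation*}
\sum_{s\in\T_t}\gamma_s^t\log\frac{M_s\gamma_s}{\mu_s}=\sum_{s\in\T_t}\gamma_s^t\log\frac{M_s}{M_t}+H(\gamma^t|\mu^t)+\log\frac{M_tD_t}{m_t};
\end{equation*}
since $M_t,D_t,m_t$ are $\F_t$-measurable, $E_Q[\,\cdot\mk\F_t]$ leaves $\log\frac{M_tD_t}{m_t}$ unchanged, so it cancels the outer $-\log\frac{M_tD_t}{m_t}$ and one is left with exactly \eqref{penen}.

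The algebra above is routine; genuine care is needed only in the case $T=\infty$, namely for the values of $\bar Z$ and of the above sums at $s=\infty$, and the behaviour on $\{D_t=0\}$ (where $\alpha_t:=\infty$ by convention) and on $\{M_\infty\in\{0,\infty\}\}$. These are absorbed by the conventions $0\cdot\infty=0$ and $0\log 0=0$ and by the property $\gamma_\infty=0$ $Q$-a.s.\ on $\{M_\infty=\infty\}$ that is built into $\Gamma(Q)$, which make all the identities valid $Q$-a.s.\ on $\{D_t>0\}$; one also checks, using $H\ge0$ and Jensen's inequality, that the right-hand side of \eqref{penen} is well defined with values in $[-\infty,+\infty]$.
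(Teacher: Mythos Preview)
Your proof is correct and follows essentially the same route as the paper's: compute $\bar\rho_t(X)$ via Corollary~\ref{cor:condexp} applied to $\bar P=P\otimes\mu$ to obtain \eqref{entrp}, then identify $\bar\alpha_t(\bar Q)$ as $\tfrac{1}{R_t}$ times the conditional relative entropy, plug in the density formula \eqref{dens} and the $\bar\F_t$-projection \eqref{cexpd}, and split the logarithm to arrive at \eqref{penen}. Your write-up is in fact a bit more explicit than the paper's about the cancellation of the $\log\tfrac{M_tD_t}{m_t}$ term and about the $T=\infty$ technicalities, but the strategy is identical.
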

\begin{proof}
Using Corollary~\ref{cor:condexp} we obtain
\begin{align*}
\bar{\rho}_t(X)&=-X1_{\{0\pk t-1\}}+\frac{1}{r_t}\log E\lp\sum_{s\in\T_t} e^{-r_tX_s}\mu^t_s\mk \F_t\rp1_{\T_t}\\
&=-X1_{\{0\pk t-1\}}+\rho^{P,r_t}_t\left(-\frac{1}{r_t}\log\left(\sum_{s\in\T_t} e^{-r_tX_s}\mu^t_s\right)\right)1_{\T_t}\\
&=-X1_{\{0\pk t-1\}}+\rho^{P,r_t}_t\left(-\rho^{\mu(\omega),r_t(\omega)}_t\left(X_.(\omega)\right)\right)1_{\T_t}.
\end{align*} 
To prove the second part of the claim, note that the minimal penalty function $\bar{\alpha}_t$ of $\bar{\rho}_t$ on $\map$ takes the form
\[
\bar{\alpha}_t(\bar{Q})=\frac{1}{R_t}H_t(\bar{Q}|\bar{P}),
\]
where $H_t(\bar{Q}|\bar{P})=E_{\bar{Q}}[\log \frac{{Z}_T}{{Z}_t}|\bar{\F}_t]$ is the conditional relative entropy of $\bar{Q}$ with respect to  $\bar{P}$, and $Z_s$ denotes the density of $\bar{Q}$ with respect to  $\bar{P}$ on $\bar\F_s$; see, e.g., \cite[Proposition 4]{dt5}. Using Theorem \ref{propq}, \eqref{dens}, Corollary~\ref{cor:condexp}, and \eqref{cexpd} we obtain for each $\bar{Q}=Q\otimes\gamma\in\map$,
\begin{equation*}
\bar{\alpha}_t(Q\otimes\gamma)=\frac{1}{r_t}E_Q\lp\sum_{s\in\T_t}\gamma^t_s\log\left(\frac{\gamma^t_sM_s}{\mu_s^tM_t}\right)\mk\F_t \rp1_{\T_t}.
\end{equation*}
Hence the minimal penalty function $\alpha_t$ of $\rt$ on $\map$ is given by
\begin{align*}
\alpha_t(Q\otimes\gamma)&=\frac{1}{r_t}E_Q\lp\sum_{s\in\T_t}\gamma_s^t\log \frac{M_s}{M_t}\mk\F_t \rp+\frac{1}{r_t}E_Q\lp\sum_{s\in\T_t}\gamma_s^t\log\left(\frac{\gamma_s^t}{\mu_s^t}\right)\mk\F_t \rp \nonumber\\
&=\frac{1}{r_t}E_Q\lp\sum_{s\in\T_t}\gamma_s^t\log\frac{M_s}{M_t}\mk\F_t \rp+\frac{1}{r_t}E_Q[H(\gamma^t(\cdot)|\mu^t(\cdot))|\F_t].
\end{align*}
\end{proof}

One can characterize time consistency properties of the dynamic entropic risk measure for processes $(\rt)\zt$, where each $\rt$ is given by \eqref{entrp}, using the corresponding results for $(\bar\rt)_{t\in\T\cap\nat_0}$. In particular, by \cite[Proposition 37]{ap9} (cf.\ also \cite[Proposition 4.1.4]{ipen7}), the entropic risk measure \rtf is time consistent if the risk aversion parameter is constant, i.e., $r_t=r_0$ for all $t\in\T\cap\nat_0$, and \rtf is rejection (resp.\ acceptance) consistent if $r_t\ge r_{t+1}$ (resp.\ $r_t\le r_{t+1}$) for all $t\in\T\cap\nat_0$. 

\begin{remark}
A time consistent dynamic entropic risk measure \rtf is asymptotically precise under the reference measure $\bar P$, and hence under each $\bar{Q}\in\map$, due to Proposition~\ref{asymptoticpre}. Indeed, for each $X\in\Ri$ the supremum in the robust representation \eqref{newreprinf} of $\rho_0(X)$ is attained by a ``worst case'' measure $\bar Q^X\approx\bar P$ for each $X\in\Ri$; cf., e.g., \cite[Example 4.33]{fs4}.
\end{remark}

Formula~\eqref{penen} for the entropic penalty suggests to introduce a simplified version of the entropic risk measure, where the interaction between $Q$ and $\gamma$ in the penalty is reduced as follows:
For $u_t,v_t>0$ such that $u_t,v_t,u_t^{-1},v_t^{-1}\in\Lts$, define
\begin{equation}\label{penent}
\hat{\alpha}_t(Q\otimes\gamma):= \left\{ \begin{array}{c@{\quad\quad}l} \dfrac{1}{u_t}H_t(Q|P)+\dfrac{1}{v_t}E_Q[H(\gamma(\cdot)|\mu^t(\cdot))|\F_t], & \textrm{if}\,\,Q\in\Q_t,\gamma\in\Gamma_t(P),\\
\infty, & \textrm{otherwise}.
\end{array} \right.
\end{equation}
This induces a new conditional convex risk measure $\hat{\rho}_t:\,\mathcal{R}_t^{\infty}\,\rightarrow\,\Lt$ via
\begin{align*}
\hat{\rho}_t(X)&:=\es_{Q\in\Q_t,\gamma\in\Gamma_t(P)}\left(E_Q\lp-\sum_{s\in\T_t}\gamma_sX_s\mk\F_t\rp-\hat{\alpha}_t(Q\otimes\gamma)\right)\\
&=\es_{\gamma\in\Gamma_t(P)}\rho_t^{P,u_t}\left(\sum_{s\in\T_t}\gamma_sX_s+\frac{1}{v_t}H(\gamma(\cdot)|\mu^t(\cdot))\right).
\end{align*}
\begin{proposition}
The conditional convex risk measure $\hat{\rho}_t$ satisfies
\begin{equation}\label{e}
\hat{\rho}_t(X)\leq\rho^{P,u_t}_t\left(-\rho^{\mu(\omega),v_t(\omega)}_t(X_.(\omega))\right).
\end{equation}
In particular, for $u_t=v_t=r_t$ we have 
\begin{equation}\label{i}
\hat{\rho}_t(X)\leq\rt(X)\quad\text{for all $X\in\mathcal{R}_t^{\infty}$},
\end{equation}
i.e., $\hat{\rho}_t$ is less conservative than the entropic risk measure $\rt$ in \eqref{entrp}.
\end{proposition}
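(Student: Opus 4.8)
The plan is to prove the inequality \eqref{e}; the bound \eqref{i} then follows by setting $u_t=v_t=r_t$ and comparing with \eqref{entrp}. Using the second line of the definition of $\hat\rho_t$, namely
\[
\hat{\rho}_t(X)=\es_{\gamma\in\Gamma_t(P)}\rho_t^{P,u_t}\Big(\sum_{s\in\T_t}\gamma_sX_s+\frac{1}{v_t}H(\gamma(\cdot)|\mu^t(\cdot))\Big),
\]
it is enough to show that for each fixed $\gamma\in\Gamma_t(P)$ one has
\[
\rho_t^{P,u_t}\Big(\sum_{s\in\T_t}\gamma_sX_s+\frac{1}{v_t}H(\gamma(\cdot)|\mu^t(\cdot))\Big)\ \le\ \rho^{P,u_t}_t\big(-\rho^{\mu(\cdot),v_t(\cdot)}_t(X_.(\cdot))\big)\qquad \f,
\]
and then take the essential supremum over $\gamma$; since the right-hand side is a fixed element of $\Lt$, it dominates the $\es$ as well.

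The heart of the matter is the $\omega$-wise estimate
\[
\sum_{s\in\T_t}\gamma_s(\omega)X_s(\omega)+\frac{1}{v_t(\omega)}H\big(\gamma(\omega)\,|\,\mu^t(\omega)\big)\ \ge\ -\rho^{\mu(\omega),v_t(\omega)}_t\big(X_.(\omega)\big),
\]
valid for $P$-a.e.\ $\omega$. This is just Jensen's inequality: fixing $\omega$ and writing $v=v_t(\omega)>0$, $x_s=X_s(\omega)$, $\nu_s=\mu^t_s(\omega)$, $g_s=\gamma_s(\omega)$, the families $(g_s)_{s\in\T_t}$ and $(\nu_s)_{s\in\T_t}$ are probability weights on $\T_t$ with $\nu_s>0$, and by concavity of $\log$,
\[
\sum_{s\in\T_t}g_s\log\frac{e^{-vx_s}\nu_s}{g_s}\ \le\ \log\sum_{s\in\T_t}g_s\cdot\frac{e^{-vx_s}\nu_s}{g_s}\ =\ \log\sum_{s\in\T_t}e^{-vx_s}\nu_s,
\]
where summands with $g_s=0$ contribute $0$. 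The left-hand side equals $-v\sum_{s}g_sx_s-H(g|\nu)$, so dividing by $v>0$ and using $\rho^{\nu,v}_t(x)=\frac{1}{v}\log\sum_{s\in\T_t}e^{-vx_s}\nu_s$ gives the claimed pointwise bound. Note that both sides of it are bounded (as $X$ is bounded and $H\ge0$) and $\F_T$-measurable, hence belong to $\LT$.

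It then remains to apply $\rho^{P,u_t}_t$ to both sides. Being a conditional convex risk measure, $\rho^{P,u_t}_t$ is monotone, so $Y\le Y'$ implies $\rho^{P,u_t}_t(Y)\ge\rho^{P,u_t}_t(Y')$; taking $Y=-\rho^{\mu(\cdot),v_t(\cdot)}_t(X_.(\cdot))$ and $Y'=\sum_{s\in\T_t}\gamma_sX_s+\frac{1}{v_t}H(\gamma(\cdot)|\mu^t(\cdot))$ yields exactly the per-$\gamma$ inequality displayed above. Taking the essential supremum over $\gamma\in\Gamma_t(P)$ proves \eqref{e}, and \eqref{i} follows by reading off the right-hand side as $\rt(X)$ from \eqref{entrp}. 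I do not expect a genuine obstacle: the proof is one Jensen estimate together with monotonicity, and the passage from the pointwise bound to the conditional level is routine since everything in sight is bounded and $\F_T$-measurable. The one conceptual point worth recording is why \eqref{e} is an inequality and not an identity: the pointwise minimiser $\gamma^{*}_s\propto e^{-v_tX_s}\mu^t_s$ of the underlying Gibbs variational problem is in general not adapted, since its normalising constant is $\F_T$- rather than $\F_s$-measurable, so restricting the essential supremum to $\gamma\in\Gamma_t(P)$ really does lose something.
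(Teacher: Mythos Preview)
Your proof is correct and follows essentially the same route as the paper. The paper's argument simply invokes the Gibbs variational formula
\[
\rho^{\nu,v_t}_t(x)=\sup\Big\{-\sum_{s\in\T_t}y_sx_s-\tfrac{1}{v_t}H(y|\nu^t)\ \Big|\ y\ \text{probability measure on}\ \T_t\Big\},
\]
whereas you prove the inequality direction of that formula explicitly via Jensen and then apply monotonicity of $\rho^{P,u_t}_t$; this is the same idea, just spelled out in more detail.
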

\begin{proof}
Inequality \eqref{e} holds since 
 \[
\rho^{\nu,v_t}_t(x)=\sup\left\{-\sum_{s\in\T_t}y_sx_s-\frac{1}{v_t}H(y|\nu^t)\mk y=(y_s)_{s\in\T_t}\;\text{probability measure on}\;\T_t\right\}
\]
for any probability measure $\nu$ on $\T$.
\end{proof}
\begin{remark}
Inequality \eqref{i} implies the converse relation for the respective minimal penalty functions of $\hat{\rho}_t$ and $\rt$, and thus \eqref{penent} and \eqref{penen} yield
\[
H_t(Q|P)\geq E_Q\lp\sum_{s\in\T_t}\gamma_s\log M_s\mk\F_t \rp
\]
for all $Q\in\Q_t$ and $\gamma\in\Gamma_t(P)$.
\end{remark}

\subsection{Average Value at Risk}
For a given level $\Lambda_t=(\lambda_0\pk \lambda_{t-1},\lambda_t,\lambda_t,\ldots)\in\Ltp$ such that $\lambda_s\in(0,1]$ for all $s=0\pk t$ we define the conditional Average Value at Risk   $\bar{\rho}_t:\,\LTp\,\rightarrow\,\Ltp$ on the product space in the usual way as
\begin{equation*}
\bar{\rt}(X)=\es\{E_{\bar{Q}}[-X|\bar{\F}_t]\mk \bar{Q}\in\bar{\Q}_t, d\bar{Q}/d\bar{P}\leq \Lambda_t^{-1}\}.
\end{equation*}
\begin{proposition}
The conditional coherent risk measure for processes associated to $\bar{\rho}_t$ via \eqref{rrp} depends only on $\lambda_t$, and is given by
\begin{equation}\label{avarp2}
\rt^{\lambda_t}(X)=\es\lk E_Q\lp-\sum_{s\in\T_t}X_s\gamma_s\mk\F_t\rp\mk Q\in\Q_t^{\text{loc}}, \gamma\in\Gamma_t(Q), \frac{\gamma_sM_s}{\mu^t_s}\leq\lambda_t^{-1},\;s\in\T_t\rk,
\end{equation}
where $M_s=\frac{dQ}{dP}|_{\F_s}$, $s\in\T\cap\nat_0$, and $M_\infty=\lim_{t\to\infty}M_t$ $P$-a.s.\ if $\T=\nat_0\cup\{\infty\}$.
\end{proposition}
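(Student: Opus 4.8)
The plan is to recognise $\bar\rho_t$ as the conditional Average Value at Risk on the product space and to carry its robust representation back through the correspondence \eqref{rrp}.

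\emph{Step 1: the penalty of $\bar\rho_t$.} First I would note that $\bar\rho_t$ is, by construction, the conditional Average Value at Risk at level $\Lambda_t$ on $(\bar\Omega,\bar\F,\bar P)$. Its generating set $\{\bar Q\in\bar\Q_t\mk d\bar Q/d\bar P\le\Lambda_t^{-1}\}$ consists of measures with uniformly bounded, hence uniformly integrable, densities, so it is $\sigma(L^1,L^\infty)$-compact; therefore $\bar\rho_t$ is a conditional coherent risk measure which is continuous from above, and — being coherent — its minimal penalty $\bar\alpha_t$ is $\{0,\infty\}$-valued, equal to $0$ precisely on that generating set (this is the conditional version of the classical computation; it can also be read off from $\bar\A_t$ directly, since $\bar\A_t$ is a cone). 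By Proposition~\ref{corresp} the associated $\rt$ is then a conditional coherent risk measure for processes which is continuous from above, so by the coherent case (Subsection~\ref{cohsec}) it admits the representation \eqref{repcoh} with $\Q^0_t=\{\bar Q\in\bar\Q_t\mk\alpha_t(\bar Q)=0\}$; by \eqref{corpen} this equals $\{\bar Q\in\bar\Q_t\mk\bar\alpha_t(\bar Q)=0\}=\{\bar Q\in\bar\Q_t\mk d\bar Q/d\bar P\le\Lambda_t^{-1}\}$.

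\emph{Step 2: rewriting the density constraint.} Next I would decompose each such $\bar Q$ as $\bar Q=Q\otimes\gamma$ via Theorem~\ref{propq} and compute that the $s$-th coordinate of $d\bar Q/d\bar P=d(Q\otimes\gamma)/d(P\otimes\mu)$ is $\gamma_sM_s/\mu_s$ (with $M_\infty=\lim_tM_t$ and $\gamma_\infty=0$ on $\{M_\infty=\infty\}$ in the case $\T=\nat_0\cup\{\infty\}$; cf.\ \eqref{dens}). For $\bar Q\in\bar\Q_t$ one has, by Lemma~\ref{rmkcon}, $Q\in\Q^{\text{loc}}_t$ and $\gamma_s=\mu_s$ for $s<t$; since $Q=P$ on $\F_s$ for $s\le t$ this forces $M_s=1$ there, so the coordinate at $s<t$ equals $1$ and the constraint $1\le\lambda_s^{-1}$ holds automatically because $\lambda_s\le1$. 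This is exactly why only $\lambda_t$ survives. Normalising $\gamma$ on $\T_t$ by setting $\gamma^t_s:=\gamma_s/D_t$ with $D_t:=\sum_{s\in\T_t}\mu_s>0$ — which, as in the proof of Theorem~\ref{robdarpr}, identifies $\{(\gamma_s/D_t)_{s\in\T_t}\mk Q\otimes\gamma\in\bar\Q_t\}$ with $\Gamma_t(Q)$ — the remaining constraints $\gamma_sM_s/\mu_s\le\lambda_t^{-1}$ ($s\in\T_t$) turn into $\gamma^t_sM_s/\mu^t_s\le\lambda_t^{-1}$, since $\mu^t_s=\mu_s/D_t$.

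\emph{Step 3: conclusion.} Finally, for $X\in\R_t^\infty$ and $\bar Q=Q\otimes\gamma\in\bar\Q_t$, Corollary~\ref{cor:condexp} gives $E_{\bar Q}[-X\mk\bar\F_t]=E_Q[-\sum_{s\in\T_t}\gamma^t_sX_s\mk\F_t]$ on $\T_t$. Plugging this and the reformulated constraint into \eqref{repcoh} and renaming $\gamma^t$ as $\gamma$ produces exactly \eqref{avarp2}, whose generating set no longer mentions $\lambda_s$ for $s<t$; hence $\rt^{\lambda_t}$ depends only on $\lambda_t$. I expect the main difficulty to be purely bookkeeping — keeping the normalisations straight ($\mu$ versus $\mu^t$, $\gamma$ versus $\gamma^t$, and the factor $D_t$), treating the coordinate at $\infty$ of the density when $\T=\nat_0\cup\{\infty\}$, and checking that the essential supremum over $\bar\Q_t$ passes coordinatewise to an essential supremum over $\F_t$ — all of which is already encapsulated in Theorem~\ref{propq}, Corollary~\ref{cor:condexp} and Proposition~\ref{corresp}.
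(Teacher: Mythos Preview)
Your proof is correct and follows essentially the same route as the paper, which simply states that the result ``is an immediate consequence of \eqref{dens} and Corollary~\ref{cor:condexp}''. Your Step~1 is more elaborate than necessary: since $\bar\rho_t$ is \emph{defined} as an essential supremum over $\{\bar Q\in\bar\Q_t\mid d\bar Q/d\bar P\le\Lambda_t^{-1}\}$, the robust representation is already given and there is no need to argue continuity from above or to identify the minimal penalty --- one can pass directly to Steps~2 and~3, which are exactly the density computation \eqref{dens} and the conditional-expectation formula of Corollary~\ref{cor:condexp} that the paper invokes.
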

\begin{proof}
This is an immediate consequence of \eqref{dens} and Corollary~\ref{cor:condexp}.
\end{proof}

Note that a probability measure $Q$ and an optional measure $\gamma$ in the robust representation of $\rt$ are penalized simultaneously. As a simpler alternative, we can consider a ``decoupled" version of conditional Average Value at Risk, defined by
\[
\rt^{\lambda_1,\lambda_2}(X):=\es_{\gamma\in\Gamma_t^{\lambda_1}}AV@R_t^{\lambda_2}\left(\sum_{s\in\T_t}X_s\gamma_s\right),\quad X\in\mathcal{R}_t^{\infty}.
\]
Here, $\lambda_1$ and $\lambda_2$ are $\F_t$-measurable random variables with values in $(0,1]$,
\[
AV@R_t^{\lambda_2}(X)=\es\left\{E_Q[-X|\F_t]\mk Q\in\Q_t, \frac{dQ}{dP}\leq \frac{1}{\lambda_2}\right\}, \quad X\in\LT,
\]
is the usual Average Value at Risk for random variables, and
\[
\Gamma_t^{\lambda_1}=\left\{\gamma\in\Gamma_t(P) \mk \frac{\gamma_s}{\mu_s}\leq \frac{1}{\lambda_1},\; s\in\T_t\right\}.
\]
Note that $\rt^{\lambda_1,\lambda_2}$ is an example of a ``decoupled'' risk measure of the form \eqref{sepcoh}, which will be discussed in Subsection \ref{sep}.
\begin{proposition}
The conditional coherent risk measure $\rt^{\lambda_1,\lambda_2}$ satisfies
\[
\rt^{\lambda_1,\lambda_2}(X)\leq \rho_t^{\lambda_1\lambda_2}(X)\quad \forall\; X\in\R^\infty.
\]
In other words, the decoupled version is less conservative than the  conditional Average Value at Risk  defined in \eqref{avarp2} with $\lambda_t=\lambda_1\lambda_2$.
\end{proposition}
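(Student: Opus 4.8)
The plan is to prove the inequality $\rho_t^{\lambda_1,\lambda_2}(X)\le\rho_t^{\lambda_1\lambda_2}(X)$ by showing that the feasible set underlying the decoupled functional embeds into the one underlying the robust representation \eqref{avarp2} of $\rho_t^{\lambda_1\lambda_2}$, so that $\rho_t^{\lambda_1,\lambda_2}(X)$ is an essential supremum over a subfamily of the pairs $(Q,\gamma)$ used for $\rho_t^{\lambda_1\lambda_2}(X)$. Concretely, I would fix an arbitrary $\gamma\in\Gamma_t^{\lambda_1}$ and an arbitrary $Q\in\Q_t$ with $dQ/dP\le\lambda_2^{-1}$, and check that the pair $(Q,\gamma)$ is admissible in \eqref{avarp2} at level $\lambda_1\lambda_2$. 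Since $dQ/dP$ is bounded, $Q\ll P$ on $\F_T$, hence $Q\in\Q_t\subseteq\Q_t^{\text{loc}}$; moreover the $P$-normalisation $\sum_{s\in\T}\gamma_s=1$ of $\gamma$ persists $Q$-a.s., and the auxiliary condition ``$\gamma_\infty=0$ on $\{M_\infty=\infty\}$'' in the definition of $\Gamma(Q)$ (relevant only for $\T=\nat_0\cup\{\infty\}$) is vacuous because $M_\infty=dQ/dP<\infty$. Thus $\gamma\in\Gamma_t(P)\subseteq\Gamma_t(Q)$.

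It then remains to verify the pointwise constraint $\gamma_sM_s/\mu_s^t\le(\lambda_1\lambda_2)^{-1}$ for $s\in\T_t$, where $M_s=E_P[dQ/dP\mid\F_s]$. From $dQ/dP\le\lambda_2^{-1}$ and the martingale property we get $M_s\le\lambda_2^{-1}$; from $\gamma\in\Gamma_t^{\lambda_1}$ we have $\gamma_s/\mu_s\le\lambda_1^{-1}$; and since $\mu_s^t=\mu_s/\sum_{j\in\T_t}\mu_j$ with $\sum_{j\in\T_t}\mu_j\le\sum_{j\in\T}\mu_j=1$, we have $\mu_s^t\ge\mu_s>0$. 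Hence
\[
\frac{\gamma_sM_s}{\mu_s^t}\;\le\;\frac{\gamma_sM_s}{\mu_s}\;=\;\frac{\gamma_s}{\mu_s}\,M_s\;\le\;\frac{1}{\lambda_1\lambda_2}\qquad\qf,
\]
so $(Q,\gamma)$ is indeed admissible in \eqref{avarp2} and therefore $E_Q[-\sum_{s\in\T_t}X_s\gamma_s\mid\F_t]\le\rho_t^{\lambda_1\lambda_2}(X)$ for every such $Q$. Taking the essential supremum over all $Q\in\Q_t$ with $dQ/dP\le\lambda_2^{-1}$ gives $AV@R_t^{\lambda_2}(\sum_{s\in\T_t}X_s\gamma_s)\le\rho_t^{\lambda_1\lambda_2}(X)$; here one first notes that $\sum_{s\in\T_t}X_s\gamma_s\in\LT$ (since $\gamma\ge0$ and $\sum_s\gamma_s=1$ force $|\sum_{s\in\T_t}X_s\gamma_s|\le\|X\|_\infty$), so that $AV@R_t^{\lambda_2}$ is applicable. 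Finally, taking the essential supremum over $\gamma\in\Gamma_t^{\lambda_1}$ yields $\rho_t^{\lambda_1,\lambda_2}(X)\le\rho_t^{\lambda_1\lambda_2}(X)$.

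I do not expect a genuine obstacle here; the argument is an elementary comparison of feasible sets. The two points that call for a little care are: (a) checking that a $\gamma$ normalised with respect to $P$ still qualifies as an element of $\Gamma_t(Q)$ for the measures $Q$ used in $AV@R_t^{\lambda_2}$, which is precisely where one uses that such $Q$ are absolutely --- not merely locally --- continuous with respect to $P$, so that no singular part can occur; and (b) the constraint translation, which relies both on the martingale bound $M_s\le\lambda_2^{-1}$ and on the inequality $\mu_s^t\ge\mu_s$ coming from $\sum_{j\in\T_t}\mu_j\le1$.
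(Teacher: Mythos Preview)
Your argument is correct and is precisely the natural expansion of the paper's one-line proof (``Follows immediately from the definition of $\rt^{\lambda_1,\lambda_2}$''): you show that every pair $(Q,\gamma)$ arising in the decoupled representation is admissible in \eqref{avarp2} at level $\lambda_1\lambda_2$, which is exactly the feasible-set inclusion the paper takes for granted. The only substantive observations are the ones you already make---that $Q\ll P$ renders the $\Gamma_t(Q)$-membership automatic and that $\mu_s^t\ge\mu_s$ together with $M_s\le\lambda_2^{-1}$ yields the required density bound---so the approaches coincide.
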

\begin{proof}
 Follows immediately from the definition of $\rt^{\lambda_1,\lambda_2}$.
\end{proof}

\begin{remark}
Recall that the dynamic Average Value at Risk for random variables is not time consistent; cf.\,e.g.\ \cite{adehk7}. Thus neither the dynamic Average Value at Risk for processes $(\rt^{\lambda_t})\zt$ defined in \eqref{avarp2}, nor its decoupled version $(\rt^{\lambda_1, \lambda_2})\zt$ will be time consistent in general. However, if the time horizon is finite, backward recursive construction of time consistent dynamic risk measures introduced in \cite[Section 4.2]{cdk6} (see also \cite[Sections 3.1, 4.1]{ck6}, \cite[Section 4.4]{ap9}) can be applied in order to obtain time consistent versions of  Average Value at Risk for processes and of its decoupled version. This can be done either on the product space using the construction from \cite[Sections 3.1]{ck6} or directly for risk measures for processes as in \cite[Sections 4.1]{ck6}. Indeed, it can be easily seen that if $(\bar{\rho}_t)\zt$ and $(\rho_t)\zt$ are associated to each other via \eqref{rrp}, the corresponding time consistent dynamic risk measures obtained by recursive construction will be also associated to each other via \eqref{rrp}.
\end{remark}

\subsection{Separation of model and discounting uncertainty}\label{sep}
If the time horizon $T$ is finite, we can replace  $\Gamma_t(Q)$ by $\Gamma_t(P)$ due to Remark~\ref{finite}, and the robust representation~\eqref{newreprinf} in Theorem~\ref{robdarpr} 
can be rewritten in the following form:
\[
\rt(X)=\es_{\gamma\in \Gamma_t(P)}\psi^{\gamma}_t\Big(\sum_{s=t}^TX_s\gamma_s\Big),\quad X\in\R^\infty_t.
\]
Here
\[
\psi^{\gamma}_t(Y)=\es_{Q\in\Q_t}\left(E_Q[-Y|\F_t]-\alpha_t(Q\otimes\gamma)\right),\quad Y\in \LT
\]
is a conditional convex risk measure for random variables (see, e.g., \cite[Theorem 1]{dt5}), that depends on the discounting factor $\gamma$ through its penalty function $\beta_t^\gamma(Q):=\alpha_t(Q\otimes\gamma)$. This formulation suggests a procedure to construct a simple class of conditional convex risk measures for processes, both for $T<\infty$ and $T=\infty$, where the dependence of $Q$ and $\gamma$ is separated in the following manner: One begins with some conditional convex risk measure for random variables $\psi_t\,:\,\LT\,\rightarrow\,\Lt$, specifies some set of discounting factors  $G_t\subseteq\Gamma_t(P)$, and defines 
\begin{equation}\label{sepcoh}
\rt(X)=\es_{\gamma\in G_t} \psi_t\left(\sum_{s\in\T_t}X_s\gamma_s\right),\quad X\in\R^{\infty}.
\end{equation}
It is easy to see that \eqref{sepcoh} defines a conditional convex risk measure $\rt$ for processes, and that $\rt$ is continuous from above if and only if $\psi_t$ is  continuous from above.

For example, for $G_t=\{\delta_{\{s\}}\}$ for some $s\in\T_t$, formula \eqref{sepcoh} reduces to
\[
\rt(X)=\psi_t(X_s),\quad X\in\R^{\infty},
\]
i.e., $\rt$ is a conditional convex risk measure on $L^{\infty}_s$. More generally, one can fix, as in \cite[Example 4.3.2]{ck6}, 
an optional measure $\gamma\in\Gamma_t(P)$, and define $G_t=\{\gamma\}$. In this case there is no ambiguity regarding the discounting process. For $T<\infty$ and $X\in\R^{\infty}_t$, or for $\T=\nat_0\cup\{\infty\}$ and $X\in\X^{\infty}_t$, we can switch to discounted terms by associating to $X$ a process $Y$ defined via
\[
Y_0:=X_0,\quad  \Delta Y_s:=D_s\Delta X_s,\quad s\in\T\cap\nat_0,\quad \textrm{and}\quad Y_\infty:=\lim_{t\to\infty}Y_t\quad \textrm{for}\quad\T=\nat_0\cup\{\infty\},
\]
where $D$ is related to $\gamma$ via \eqref{gammatodisc}. Then the risk measure $\rt$ defined by \eqref{sepcoh} reduces to a risk measure for random variables:
\begin{equation*}
\rt(X)=\psi_t\Big(\sum_{s=t}^TD_s\Delta X_s\Big)=\psi_t\Big(\sum_{s=t}^T\Delta Y_s\Big)=\psi_t(Y_T).
\end{equation*}

A further example of a risk measure of the form \eqref{sepcoh} is given in \cite[Example 4.3.3]{ck6}; cf.\ also \cite[Example 4.2]{jr8}. Here we take
$G_t=\lk (1_{\{\tau=s\}})_{s\in\T_t}\mk \tau\in\Theta_t\rk$, where $\Theta_t$ denotes the set of all stopping times with values in $\T_t$. In this case 
\[
\rt(X)=\es_{\tau\in\Theta_t}\psi_t(X_{\tau}),
\]
is the maximal risk which arises by stopping the process  $(\psi_t(X_s))_{s\in\T_t}$ in the least favorable way.

\appendix
\section{Robust representations of risk measures for random variables}
The following definition of a conditional convex risk measure for random variables was given in \cite{dt5}: 

\begin{definition}\label{defrm}
A map $\rt\,:\,\LTs\,\rightarrow\,\Lt$ is called a \emph{conditional convex risk measure for random variables} if it satisfies the 
following properties for all $X,Y\in\LTs$:
\begin{itemize}
\item[(i)]
Conditional cash invariance: For all $m_t\in\Lt$,
\[\rt(X+m_t)=\rt(X)-m_t\]
\item[(ii)]
Monotonicity: $X\le Y\;\,\Rightarrow\;\,\rt(X)\ge\rt(Y) $
\item[(iii)]
Conditional convexity: For all $\lambda\in\Lt$ with $0\le \lambda\le 1$,
\[
\rt(\lambda X+(1-\lambda)Y)\le\lambda\rt(X)+(1-\lambda)\rt(Y)
\]
\item[(iv)]
{Normalization}: $\rt(0)=0$.
\end{itemize}
\end{definition}

The following theorem summarizes some robust representation results from  \cite[Theorem 1]{dt5}, \cite[Corollary 2.4]{fp6}, and \cite[Corollary 7]{ap9} that are often used in this paper.

\begin{theorem}\label{robdar}
Let $\rho_t\,:\,\LTs\,\rightarrow\,\Lt$ be a conditional convex risk measures for random variables. Then the following properties are equivalent:
\begin{enumerate}
\item
$\rho_t$ is continuous from above, i.e. 
\[
X^n\searrow X\;\,{P}\text{-a.s}\quad\Longrightarrow\quad \rho_t(X^n)\nearrow\rho_t(X)\;\,{P}\text{-a.s}
\]
for any sequence $(X^n)_n\subseteq\LTs$ and $X\in L^\infty$;
\item
for all ${Q}\in\ma$, $\rt$ has the robust representation
\begin{equation}\label{rd:ap}
\rho_t(X)=\qes_{{R}\in\Q_t({Q})}(E_{{R}}[-X\,|\,\F_t\,]-\alpha_t({R}))\quad {Q}\text{-a.s.},\qquad X\in L^\infty,
\end{equation}
where 
\[
 {\alpha}_t(Q)=\qes_{X\in\LTs}\left(E_Q[-X|\F_t]-\rt(X)\right)
\]
and
\begin{equation*}
\Q_t({Q}):=\lk {R}\in\ma\mk {R}={Q}|_{\F_t}\rk;
\end{equation*}
\item
$\rho_t$ has the robust representation
\begin{equation}\label{rd:dt}
\rho_t(X)=\es_{{Q}\in\Q_t}(E_{{Q}}[-X\,|\,\F_t\,]-\alpha_t({Q}))\quad {P}\text{-a.s.},\qquad X\in L^\infty,
\end{equation}
with $\Q_t:=\Q_t({P})$;
\item
$\rho_t$ has the robust representation
\begin{equation}\label{rd:fp}
\rho_t(X)=\es_{{Q}\in\Q^f_t}(E_{{Q}}[-X\,|\,\F_t\,]-\alpha_t({Q}))\quad {P}\text{-a.s.},\qquad X\in\ L^\infty,
\end{equation}
with
\[
\Q^f_t:=\lk {Q}\in\Q_t\mk E_{{Q}}[{\alpha}_t({Q})]<\infty\rk.
\]
\end{enumerate}
\end{theorem}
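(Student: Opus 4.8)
The statement collects the robust representations of \cite[Theorem~1]{dt5}, \cite[Corollary~2.4]{fp6} and \cite[Corollary~7]{ap9}, so the plan is to show how these fit together. Start from the elementary observation that
\[
\rho_t(X)\ \ge\ E_R[-X\,|\,\F_t]-\alpha_t(R)\qquad\f
\]
for \emph{every} $R\in\ma$, which is just the definition of $\alpha_t$ tested on the single position $X$. Hence $\rho_t(X)$ dominates each essential supremum in (2), (3), (4), so in each case only the reverse inequality is at stake; moreover (3) is the special case $Q=P$ of (2), and (4) implies (3) because $\Q_t^f\subseteq\Q_t$ together with the lower bound above. Furthermore each of (2), (3), (4) implies (1): if $X^n\searrow X$ in $\LTs$, then $(\rho_t(X^n))_n$ increases (monotonicity) to a limit $\le\rho_t(X)$, while for each admissible $R$ conditional monotone convergence gives $E_R[-X^n\,|\,\F_t]\uparrow E_R[-X\,|\,\F_t]$, so $\lim_n\rho_t(X^n)\ge E_R[-X\,|\,\F_t]-\alpha_t(R)$; passing to the essential supremum over $R$ yields $\lim_n\rho_t(X^n)\ge\rho_t(X)$, i.e.\ equality. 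So the real content is (1)$\Rightarrow$(3) together with the refinements (3)$\Rightarrow$(4) and (3)$\Rightarrow$(2).

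For (1)$\Rightarrow$(3) my plan is to reduce to the unconditional Föllmer--Schied representation (cf.\ \cite{fs4}). The auxiliary functional $\rho(X):=E_P[\rho_t(X)]$ is an unconditional convex risk measure on $\LTs$ which is continuous from above, since $\rho_t(X^n)\nearrow\rho_t(X)$ propagates through the expectation by monotone convergence; hence $\rho(X)=\sup_{Q\in\ma}(E_Q[-X]-\alpha(Q))$ with $\alpha(Q)=\sup_X(E_Q[-X]-\rho(X))$. Using the local property and cash invariance of $\rho_t$ one checks that $\alpha(Q)=+\infty$ unless $Q\in\Q_t$, and $\alpha(Q)=E_Q[\alpha_t(Q)]$ for $Q\in\Q_t$ (the directedness needed to pull the expectation inside the essential supremum defining $\alpha_t$ again comes from locality). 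It then remains to pass from the unconditional identity to the $\F_t$-conditional one. Put $Y:=\es_{Q\in\Q_t}(E_Q[-X\,|\,\F_t]-\alpha_t(Q))$ and suppose $P[\rho_t(X)>Y]>0$; choose $A\in\F_t$ and $\delta>0$ with $\rho_t(X)\ge Y+\delta$ on $A$, and apply the unconditional representation of $\rho$ to $X_A:=X1_A+c1_{A^c}$ for large $c$. On one hand $\rho(X_A)=E_P[\rho_t(X)1_A]-cP[A^c]\ge E_P[Y1_A]+\delta P[A]-cP[A^c]$; on the other hand, splitting $E_Q[-X_A\,|\,\F_t]-\alpha_t(Q)$ over $A$ and $A^c$ and using $\alpha_t(Q)\ge0$ and $Q=P$ on $\F_t$, one gets $\rho(X_A)\le E_P[Y1_A]-cP[A^c]$ — a contradiction. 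Hence $\rho_t(X)\le Y$, which gives (3).

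The two refinements rest on routine arguments at the level of $\F_t$-measurable functions. For (3)$\Rightarrow$(4) one discards the measures with $E_Q[\alpha_t(Q)]=\infty$: given $X$ and $\varepsilon>0$, a near-optimal $Q$ in (3) is modified on the set in $\F_t$ where $\alpha_t(Q)$ is large by mixing with a penalty-integrable measure, so that it lands in $\Q_t^f$ while $E_Q[-X\,|\,\F_t]-\alpha_t(Q)$ drops by less than $\varepsilon$; this is the conditional version of \cite[Corollary~2.4]{fp6}. For (3)$\Rightarrow$(2) one uses the local property of $\rho_t$ together with the fact that, for a fixed $Q\in\ma$, the $Q$-essential supremum over $\Q_t(Q)$ coincides $Q$-a.s.\ with the $P$-essential supremum over $\Q_t$ on the support of $Q$; this is \cite[Corollary~7]{ap9}. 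The main obstacle is the middle paragraph — manufacturing the genuinely random penalty $\alpha_t(\cdot)$ out of an unconditional representation — where the patching with $X_A$, which exploits the local property of $\rho_t$, is the decisive step; the two refinements are then comparatively mechanical.
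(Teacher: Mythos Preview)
The paper does not give its own proof of this theorem: it is stated in the appendix as a summary of results from \cite{dt5}, \cite{fp6}, and \cite{ap9}, and the proof is entirely delegated to those references. Your proposal does the same thing at the level of attribution, but then goes further and sketches how the pieces fit together; the sketch is correct and follows the standard route of \cite{dt5} (reduce to the unconditional F\"ollmer--Schied representation via $\rho(X)=E_P[\rho_t(X)]$, then localise), with the refinements (3)$\Rightarrow$(4) and (3)$\Rightarrow$(2) correctly attributed to \cite{fp6} and \cite{ap9}. One cosmetic point: in your contradiction argument the parameter $c$ is irrelevant since it cancels from both sides, so ``for large $c$'' is unnecessary; otherwise the argument is sound.
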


\section{It\^o-Watanabe decomposition}

The following is the discrete time version of the It\^o-Watanabe factorization of a nonnegative supermartingale; cf.\ \cite{iw65}.

\begin{proposition}\label{lemiw}
Let $U=(U_t)\zt$ be a nonnegative $P$-supermartingale on $(\Omega, \F, (\F_t)\zt, P)$ with $U_0=1$. Then there exist a nonnegative $P$-martingale $M=(M_t)\zt$ and a predictable non-increasing process $D=(D_t)\zt$ such that $M_0=D_0=1$ and
\begin{equation}\label{iwdec}
U_t=M_tD_t,\qquad t\in\T\cap\nat_0.
\end{equation}
Moreover such a decomposition is unique on $\{t<\tau_0\}$, where $\tau_0:=\inf\{t>0\,|\,U_t=0\}$.
\end{proposition}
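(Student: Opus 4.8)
**The plan is to prove the discrete-time It\^o--Watanabe factorization directly by an inductive construction, followed by a uniqueness argument.**

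First I would handle the existence of the decomposition. The idea is to build the predictable decreasing process $D$ and the martingale $M$ step by step. At each time $t$, write the Doob decomposition of the nonnegative supermartingale $U$ as $U_t = \tilde M_t - \tilde A_t$, where $\tilde M$ is a martingale and $\tilde A$ is predictable non-decreasing with $\tilde A_0 = 0$; in multiplicative form one instead wants $U_t = M_t D_t$ with $M$ a nonnegative martingale and $D$ predictable non-increasing. The natural recipe is to set $D_0 = M_0 = 1$ and then, given $M_t$ and $D_t$ on $\{t < \tau_0\}$, to define
\[
 D_{t+1} := D_t \cdot \frac{E_P[U_{t+1}\mid\F_t]}{U_t}\quad\text{on }\{U_t>0\},\qquad
 M_{t+1} := M_t\cdot\frac{U_{t+1}}{E_P[U_{t+1}\mid\F_t]}\quad\text{on }\{E_P[U_{t+1}\mid\F_t]>0\}.
\]
One checks immediately that $M_tD_t = U_t$ by telescoping, that $D_{t+1}$ is $\F_t$-measurable hence $D$ is predictable, that $D_{t+1}\le D_t$ because $E_P[U_{t+1}\mid\F_t]\le U_t$ by the supermartingale property, and that $E_P[M_{t+1}\mid\F_t] = M_t$ so $M$ is a martingale; nonnegativity of both factors is clear since $U\ge 0$. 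On the complement, i.e.\ once $U$ has hit $0$, one simply freezes $D$ (sets $D_{t+1}=D_t$) and lets $M_{t+1}$ absorb the value $U_{t+1}$; since $U_t=0$ forces $E_P[U_{t+1}\mid\F_t]=0$ a.s.\ and hence $U_{t+1}=0$ a.s.\ (the nonnegative supermartingale stays at $0$ once it reaches it), there is nothing to reconstruct there and we may, e.g., keep $M_{t+1}=M_t$. This requires a small amount of care in splitting the sample space according to whether $U_t>0$ or $U_t=0$, and in checking that the martingale and predictability properties survive the gluing; the convention $0\cdot\infty:=0$ already in force in the paper is convenient here.

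For uniqueness, suppose $U_t = M_tD_t = M_t'D_t'$ are two such factorizations on $\{t<\tau_0\}$. On the event $\{t<\tau_0\}$ we have $U_t>0$, so $M_t,M_t',D_t,D_t'$ are all strictly positive there. Taking conditional expectations given $\F_t$ and using predictability of $D_{t+1}$ and the martingale property of $M$,
\[
 E_P[U_{t+1}\mid\F_t] = D_{t+1}E_P[M_{t+1}\mid\F_t] = D_{t+1}M_t,
\]
and likewise with primes, so $D_{t+1}M_t = D_{t+1}'M_t'$ on $\{t+1\le\tau_0\}$; combined with $M_tD_t=M_t'D_t'$ this forces $D_{t+1}/D_t = D_{t+1}'/D_t'$, hence by induction from $D_0=D_0'=1$ that $D=D'$ and then $M=M'$ on $\{t<\tau_0\}$. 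The stopping time $\tau_0$ enters exactly because past that time $U$ is $0$ and the factors are no longer determined: one can move mass freely between $M$ and $D$.

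The main obstacle is not any single deep estimate but rather the bookkeeping at the boundary between $\{U_t>0\}$ and $\{U_t=0\}$: one must define the recursion on a set-by-set basis, verify that the resulting $D$ is genuinely predictable (not merely adapted) and non-increasing everywhere, and confirm that $M$ remains a bona fide martingale across the transition, all while respecting the convention $0\cdot\infty:=0$. Once that case analysis is set up cleanly, both existence and uniqueness are short. I would also remark that one could alternatively deduce the statement from the additive Doob decomposition by an exponential/stochastic-logarithm change of variables, but the direct multiplicative recursion above is the most transparent in discrete time and is what I would write out.
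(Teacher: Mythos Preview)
Your proposal is correct and follows essentially the same approach as the paper: both build $D$ via the multiplicative recursion $D_{t+1}/D_t = E_P[U_{t+1}\mid\F_t]/U_t$ on $\{U_t>0\}$, set $M_t = U_t/D_t$ where this makes sense, and obtain uniqueness on $\{t<\tau_0\}$ from the identity $E_P[U_{t+1}\mid\F_t] = D_{t+1}M_t$. The only cosmetic differences are that the paper proves uniqueness first and then uses the resulting product formula for existence, and that after $\tau_0$ the paper sets $D=0$ rather than freezing it---both conventions yield a valid factorization since nothing is unique there anyway.
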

\begin{proof}
We first assume that there exists a decomposition of $U$ as in \eqref{iwdec} and prove its uniqueness on $\{t<\tau_0\}$. Indeed, on $\{t<\tau_0\}=\{U_t>0\}=\{M_t>0\}\cap\{D_t>0\}$ we have
\[
\frac{E_P[U_{t+1}|\F_t]}{U_t}=\frac{D_{t+1}}{D_t}\frac{E_P[M_{t+1}|\F_t]}{M_t}=\frac{D_{t+1}}{D_t},
\]
and hence the process $D$ in the decomposition \eqref{iwdec} is uniquely determined on $\{t\leq \tau_0\}$ by 
\[
D_t=\prod_{s=0}^{t-1}\frac{E_P[U_{s+1}|\F_s]}{U_s},\quad 0\le t\leq \tau_0.
\]
Moreover, on $\{D_t>0\}$,
\[
M_t=\frac{U_t}{D_t},
\]
and thus also the process $M$ in the decomposition \eqref{iwdec} is uniquely determined on $\{D_t>0\}\supseteq\{t<\tau_0\}$.

To prove the existence of a decomposition as in \eqref{iwdec}, define the processes $D$ and $M$ via
\begin{equation*}
D_t = \left\{ \begin{array}{ll} \displaystyle{\prod_{s=0}^{t-1}\dfrac{E_P[U_{s+1}|\F_s]}{U_s}}, & \textrm{for}\; 0\leq t\leq \tau_0,\\
0, & \textrm{otherwise}
\end{array} \right.
\end{equation*}
and
\begin{equation*}
M_t = \left\{ \begin{array}{ll} \dfrac{U_t}{D_t}, & \textrm{on}\; \{D_t>0\},\\
M_{t-1}, & \textrm{on}\; \{D_t=0\}.
\end{array} \right.
\end{equation*}
Clearly, $D$ is predictable and non-increasing with $D_0=1$ and $D_t\geq 0$ for all $t$, and $M$ is adapted with $M_0=1$ and $M_t\geq 0$ for all $t$. It remains to show that $M$ is a martingale. Indeed,
\begin{align*}
E_P[M_{t+1}|\F_t]&=E_P[M_{t+1}1_{\{D_{t+1}=0\}}|\F_t]+E_P[M_{t+1}1_{\{D_{t+1}>0\}}|\F_t]\\
&=M_t1_{\{D_{t+1}=0\}}+\frac{1}{D_{t+1}}E_P[M_{t+1}D_{t+1}|\F_t]1_{\{D_{t+1}>0\}}\\
&=M_t1_{\{D_{t+1}=0\}}+\frac{1}{D_{t+1}}\frac{E_P[U_{t+1}|\F_t]}{U_t}U_t1_{\{D_{t+1}>0\}}\\
&=M_t1_{\{D_{t+1}=0\}}+\frac{1}{D_{t+1}}\frac{D_{t+1}}{D_t}U_t1_{\{D_{t+1}>0\}}\\
&=M_t,
\end{align*}
where we have used that $U_t>0$ on $\{D_{t+1}>0\}$.
\end{proof}

\begin{remark}
 Since $U$ is a nonnegative supermartingale, the following equivalence holds on $\{\tau_0=t\}$:
\[
D_t=0\, \Longleftrightarrow\, E_P[U_t|\F_{t-1}]=0\, \Longleftrightarrow\, P[U_t=0|\F_{t-1}]=1.
\]
Thus $D_t=0$ on the event $\{\tau_0=t\}$ if this event is sure at time $t-1$. On the other hand, we have $M_t=0$ on $\{D_t>0\}\cap\{\tau_0=t\}=\{E_P[U_t|\F_{t-1}]>0, U_t=0\}=\{P[U_t=0|\F_{t-1}]<1, U_t=0\}$, i.e., $M$ is uniquely determined also at time $\tau_0$ if $\tau_0$ is is not predicted one step ahead.
\end{remark}

\section{Disintegration of measures on the optional $\sigma$-field}\label{proof}
In this section we prove Theorem~\ref{propq}. Recall that we use Assumption~\ref{assfil}.
It guarantees that any consistent sequence of probability measures $Q_t$ on $\F_t$, $t\in\T\cap\nat_0$, admits a unique extension to a probability measure on $\F_\infty=\sigma(\cup_{t\in\T\cap\nat_0}\F_t)$, cf.\ \cite[Theorem 4.1]{par67}. In particular, any martingale $(M_t)\zt$ with $M_0=1$ induces a unique probability measure $Q$ on $(\Omega, F)$ such that
\begin{equation}\label{d}
 M_t=\frac{dQ}{dP}\Big|_{\F_t},\qquad t\in\T.
\end{equation}

\begin{proof}[Proof of Theorem \ref{propq}] Let $\bar Q\in\map$ with the density $\dfrac{d\bar{Q}}{d\bar{P}}=:\bar{Z}=(Z_t)\zte$. We first prove \eqref{expqg}  for $\T=\nat_0\cup\{\infty\}$.
To this end, consider the supermartingale $U=(U_t)\zte$ defined by
\begin{equation}\label{u}
U_t:=E_P\lp\sum_{s\in\T_t}\mu_sZ_s | \F_t\rp\ge 0,\quad t\in\T.
\end{equation}
By Proposition~\ref{lemiw}, $U$ admits a decomposition
\begin{equation*}
U_t=M_t D_t,\quad t\in\nat_0,
\end{equation*}
where $M=(M_t)_{t\in\nat_0}$ is a nonnegative $P$-martingale with $M_0=1$, and $D=(D_t)_{t\in\nat_0}$ is a nonnegative predictable non-increasing process with $D_0=1$. The martingale $M$ induces a unique probability measure $Q$ on $(\Omega,\F_\infty)$ via \eqref{d}, with $Q\in\mal$. Let $M_\infty:=\lim_{t\to\infty}M_t$ $P$-a.s., $D_\infty:=\lim_{t\to\infty}D_t$ $P$- and $Q$-a.s., and note that $Z_\infty\mu_\infty=U_\infty=\lim_{t\to\infty}U_t=M_\infty D_\infty$ $P$-a.s.. We define the process $\gamma=(\gamma_t)_{t\in\T}$ via \eqref{disctogamma}. Then for $X\in\Ri$ with $X\geq 0$ we have by monotone convergence and \eqref{u}
\begin{align*}
E_{\bar{Q}}[X]&=E_P\lp\sum_{t\in\T}X_t\mu_tZ_t\rp
=\sum_{t=0}^\infty E_P\lp X_tE_P[U_t-U_{t+1}|\F_t]\rp+E_P[M_\infty D_\infty X_\infty]\\
&=\sum_{t=0}^\infty E_P\lp X_t(M_tD_t-M_{t+1}D_{t+1})\rp+E_P[M_\infty D_\infty X_\infty]
=\sum_{t=0}^\infty E_Q[X_t\gamma_t]+E_P[M_\infty D_\infty X_\infty]\\
&=E_Q\lp\sum_{t=0}^\infty X_t\gamma_t\rp+E_P[M_\infty D_\infty X_\infty].
\end{align*}
Using \eqref{leb} this takes the form
\begin{equation}\label{leb1}
E_{\bar{Q}}[X]=E_Q\lp\sum_{t=0}^\infty X_t\gamma_t\rp+E_Q[X_\infty\gamma_\infty]-E_Q[\gamma_\infty X_\infty I_{\{M_\infty=\infty\}}].
\end{equation}
Plugging $X=1$ into \eqref{leb1} yields
\begin{align*}
1&=E_{\bar{Q}}[1]=E_Q\lp\sum_{t=0}^\infty\gamma_t+\gamma_\infty\rp-E_Q[\gamma_\infty I_{\{M_\infty=\infty\}}]\\
&=1-E_Q[\gamma_\infty I_{\{M_\infty=\infty\}}].
\end{align*}
Thus $\gamma_\infty=0$ $Q$-a.s.\ on $\{M_\infty=\infty\}$, i.e., $\gamma\in\Gamma(Q)$, and \eqref{leb1} reduces to \eqref{expqg}. 

To prove \eqref{expqg}  for $\T=\nat_0$, note that every measure $\bar Q$ on $(\Omega\times\nat_0, \bar\F)$ can be extended to a measure $\tilde Q$ on $(\Omega\times(\nat_0\cup\{\infty\}), \bar\F)$ by setting $\tilde Q[\Omega\times\{\infty\}]=0$. Thus  \eqref{expqg} yields 
\[
 E_{\tilde Q}[X]=E_Q\lp\sum_{t=0}^\infty X_t\gamma_t\rp+E_Q[X_\infty\gamma_\infty]
\]
with some probability measure $Q\in\mal$ and some optional measure $\gamma$ such that $\sum_{t=0}^\infty\gamma_t+\gamma_\infty=1$ $Q$-a.s.. Moreover, since 
\[
E_Q[\gamma_\infty]=E_{\tilde Q}[I_{\{\infty\}}]=0,
\]
we have $\gamma_\infty=0$ $Q$-a.s., i.e., $\gamma\in\Gamma(Q)$ for $\T=\nat_0$, and \eqref{expqg} holds.

Similarly, we can embed the case $\T=\{0\pk T\}$ into the setting of $\T=\nat_0\cup\{\infty\}$, by setting $\F_t:=\F_T$ for all $t>T$, and extending any measure $\bar Q$ on $(\Omega\times\{1\pk T\}, \bar\F)$ to a measure $\tilde Q$ on $(\Omega\times(\nat_0\cup\{\infty\}), \bar\F)$ by setting $\tilde Q[\Omega\times\T_{T+1}]=0$. The same reasoning as above yields a probability measure $Q\in\mal$, in particular $Q\ll P$ on $\F_T$, and an optional measure $\gamma$ such that $\gamma_s=0$ $Q$-a.s.\ for all $s>t$, i.e., $\gamma\in\Gamma(Q)$ for $\T=\{0\pk T\}$, and \eqref{expqg} holds.

The equality \eqref{expqd} follows from \eqref{expqg} due to integration by parts formula \eqref{intbyparts}.

To prove the converse implication of the theorem, note that each pair $(Q,\gamma)$, with $Q\in\mal$ and  $\gamma\in\Gamma(Q)$, defines a density $\bar Z=(Z_t)\zte$ of a probability measure $\bar Q\in\map$ via 
\begin{equation}\label{dens}
Z_t=\frac{M_t\gamma_t}{\mu_t},\qquad t\in\T,
\end{equation}
where $M_t$ denotes the density of $Q$ with respect to  $P$ on $\F_t$ for each  $t\in\T\cap\nat_0$, and, if $\T=\nat_0\cup\{\infty\}$, $M_\infty=\lim_{t\to\infty}M_t$ $P$-a.s.. Clearly, \eqref{expqg} and \eqref{expqd} hold for $\bar Q$.
\end{proof}

\begin{remark}\label{finite}
For $T<\infty$, and for $\T=\nat_0$, one can also prove Theorem~\ref{propq} directly, defining the supermartingale $U$ via \eqref{u} and using the It\^o-Watanabe decomposition of $U$ as above. For $T<\infty$, one obtains  in this way the additional property $\gamma\in\Gamma(P)$ in the decomposition $\bar Q=Q\otimes\gamma$ of any $\bar Q\in\map$, and so we can replace the set $\Gamma(Q)$ by $\Gamma(P)$ in the representation \eqref{newreprinf} and in all further results.
\end{remark}

\begin{remark}\label{rmku}
Let $\bar{Q}\in\map$ with decomposition $Q\otimes\gamma=Q\otimes D$ in the sense of \eqref{expqg} and \eqref{expqd}, let $\bar{Z}=(Z_t)\zte$ denote the density of $\bar{Q}$ with respect to $\bar{P}$, $M=(M_t)_{t\in\T\cap\nat_0}$ the density process of $Q$ with respect to $P$, and  $M_\infty=\lim_{t\to\infty}M_t$ $P$-a.s.\ for $\T=\nat_0\cup\{\infty\}$. 
\begin{enumerate}
\item The density $\bar Z$ takes the form \eqref{dens}. Indeed, for all $X\in\Ri$, $X\ge0$ we have
\[
E_{\bar{Q}}[X]=E_Q\lp\sum_{t\in\T}X_t\gamma_t\rp=E_P\lp\sum_{t\in\T}X_t\gamma_tM_t\rp,
\]
where, for $T=\infty$, the last equality holds due to monotone convergence, and, for $\T=\nat_0\cup\{\infty\}$, we use \eqref{leb} and $\gamma_\infty=0$ $Q$-a.s.\ on $\{M_\infty=\infty\}$. 
\item In order to clarify to which extent the decomposition \eqref{expqg} is unique, we note that the It\^o-Watanabe decomposition of the supermartingale $U$ defined in \eqref{u} is determined by the density process $M$ and the discounting process $D$. Indeed, 
\begin{align*}
U_t&=\sum_{s\in\T_t}E_P\lp \gamma_sM_s|\F_t\rp=\sum_{s\in\T_t}E_Q\lp \gamma_s|\F_t\rp M_t1_{\{M_t>0\}}\\
&=M_tE_Q\lp\sum_{s\in\T_t}\gamma_s\mk\F_t\rp1_{\{M_t>0\}}=M_tD_t,\qquad t\in\T,
\end{align*}
where we have used \eqref{dens}, \eqref{gammatodisc2}, and monotone convergence for $T=\infty$.
In particular, if $\bar{Q}\in\map$ admits two decompositions $\bar{Q}=Q^1\otimes D^1=Q^2\otimes D^2$, the uniqueness stated in Proposition~\ref{lemiw} yields
\[
M^1_t=M^2_t\quad \textrm{and}\quad D^1_t=D^2_t\quad \textrm{on}\;\; \{t<\tau_0\},
\]
where $\tau_0=\inf\{t>0\,|\,U_t=0\}$. Moreover, since $\bar{Z}>0$ $\bar{Q}$-a.s., we have $\bar{Q}[\{(\omega, t)| t\ge\tau_0(\omega)\}]=0$, and hence the processes $M$ and $D$ are uniquely determined and strictly positive $\bar{Q}$-a.s..
\item Equality $\bar Q$-almost surely between two processes $X,Y\in\Ri$ can be characterized as follows in terms of $Q$ and $\gamma$:
\begin{eqnarray*}
X=Y\;\, \bar{Q}\textrm{-a.s.} &\Longleftrightarrow& 1=E_{\bar{Q}}[1_{\{X=Y\}}]=E_Q\lp \sum_{t\in\T}\gamma_t1_{\{X_t=Y_t\}}\rp\nonumber\\
&\Longleftrightarrow& X_t=Y_t\;\, \qf\;\, \textrm{on}\;\,\{\gamma_t>0\}\qquad \forall\, t\in\T,
\end{eqnarray*} 
where the last equivalence follows since $\sum_{t\in\T}\gamma_t=1\,Q$-a.s.. In particular, an $\bar\F_t$-measurable random variable $X=(X_t)\zte$ is well defined $\bar{Q}$-a.s.\ if and only if $X_i$ is well defined $Q$-a.s.\ on $\{\gamma_i>0\}$ for $i=0\pk t-1$, and $X_t$ is well defined $Q$-a.s.\ on $\{ \sum_{s\in\T_t}\gamma_s>0\}=\{D_t>0\}$. 
\end{enumerate}
\end{remark}

\begin{corollary}\label{cor:condexp}
For $\bar Q\in\map$ with decomposition $\bar Q=Q\otimes\gamma=Q\otimes D$, the conditional expectation given $\bar\F_t$ takes the form
\begin{equation*}
E_{\bar{Q}}[X\,|\,\bar\F_t\,]=X_01_{\{0\}}+\ldots+X_{t-1}1_{\{t-1\}}+E_Q\lp\sum_{s\in\T_t}\frac{\gamma_s}{D_t}X_s\mk\F_t\rp1_{\T_t},\qquad X\in\Ri,
\end{equation*}
where the last term on the right-hand-side is well defined $Q$-a.s.\ on $\{D_t>0\}$.
\end{corollary}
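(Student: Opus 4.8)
The plan is to verify directly that the process appearing on the right-hand side, which we denote by $Y=(Y_s)_{s\in\T}$ with $Y_s=X_s$ for $s<t$ and $Y_s=E_Q\big[\sum_{r\in\T_t}\frac{\gamma_r}{D_t}X_r\,\big|\,\F_t\big]$ for $s\ge t$, has the two defining properties of the conditional expectation $E_{\bar Q}[X\,|\,\bar\F_t]$: it is $\bar\F_t$-measurable and $\bar Q$-integrable, and $E_{\bar Q}[WY]=E_{\bar Q}[WX]$ for every bounded $\bar\F_t$-measurable random variable $W$. Since $X\in\Ri$ is bounded, $\bar Q$-integrability of $Y$ and of all the products below is automatic, so the real content is measurability and the integral identity.

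First I would dispose of well-definedness and measurability. Applying \eqref{expqg} to the bounded adapted process $(1_{\{s\ge t\}}1_{\{D_t=0\}})_{s\in\T}$ (note that $1_{\{D_t=0\}}$ is $\F_{t-1}$-measurable because $D$ is predictable) and then invoking \eqref{gammatodisc2} gives
\[
\bar Q\big[\{(\omega,s):s\ge t,\;D_t(\omega)=0\}\big]=E_Q\Big[1_{\{D_t=0\}}\sum_{r\in\T_t}\gamma_r\Big]=E_Q[D_t1_{\{D_t=0\}}]=0,
\]
so the set on which the ratios $\gamma_r/D_t$ are ambiguous is $\bar Q$-null; on $\{D_t>0\}\in\F_t$ the conditional expectation defining $Y_t$ makes sense and is bounded in absolute value by $\|X\|_\infty$. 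Since $X$ is adapted, $X_s$ is $\F_s$-measurable for $s<t$ and $Y_t$ is $\F_t$-measurable, so by the characterisation of $\bar\F_t$-measurable random variables recalled in Subsection~\ref{prodspace} (see also 3 of Remark~\ref{rmku}) the process $Y$ is $\bar\F_t$-measurable.

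The core is a short computation. Let $W=(W_s)_{s\in\T}$ be bounded and $\bar\F_t$-measurable, so $W_s$ is $\F_s$-measurable for $s\le t$ and $W_s=W_t$ for $s>t$; in particular $WX\in\Ri$. Applying \eqref{expqg} to $WX$ and splitting the sum over $\T$ at $t$,
\[
E_{\bar Q}[WX]=E_Q\Big[\sum_{s=0}^{t-1}\gamma_sW_sX_s\Big]+E_Q\Big[W_t\sum_{r\in\T_t}\gamma_rX_r\Big].
\]
For the second term I would note that $\sum_{r\in\T_t}\gamma_rX_r=0$ $Q$-a.s.\ on $\{D_t=0\}$, since all the nonnegative terms $\gamma_r$, $r\in\T_t$, vanish there by \eqref{gammatodisc2}, while on $\{D_t>0\}$ the $\F_t$-measurable factor $D_t$ can be taken out of the conditional expectation; hence $E_Q[\sum_{r\in\T_t}\gamma_rX_r\,|\,\F_t]=D_tY_t$ $Q$-a.s.\ with the convention $0\cdot\infty=0$, and the second term equals $E_Q[W_tD_tY_t]$. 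Performing the analogous expansion of $E_{\bar Q}[WY]$ and using $\sum_{r\in\T_t}\gamma_r=D_t$ from \eqref{gammatodisc2}, one obtains exactly $E_Q[\sum_{s=0}^{t-1}\gamma_sW_sX_s]+E_Q[W_tD_tY_t]$ again, whence $E_{\bar Q}[WY]=E_{\bar Q}[WX]$. This identifies $Y$ with $E_{\bar Q}[X\,|\,\bar\F_t]$ and proves the claim.

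I do not expect a genuine obstacle here; the only points demanding care are the bookkeeping on $\{D_t=0\}$, handled through \eqref{gammatodisc2} together with the paper's convention $0\cdot\infty=0$, and, in the case $\T=\nat_0\cup\{\infty\}$, the fact that the sums $\sum_{r\in\T_t}\gamma_rX_r$ converge absolutely, which is immediate since $X$ is bounded and $\sum_{r\in\T}\gamma_r=1$ $Q$-a.s.
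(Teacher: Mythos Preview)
Your proof is correct and is precisely the natural direct verification the paper has in mind; the corollary is stated without proof in the paper, immediately after Theorem~\ref{propq} and Remark~\ref{rmku}, and your argument unpacks exactly why it follows from \eqref{expqg} and \eqref{gammatodisc2}. The only cosmetic remark is that the predictability of $D$ is not actually needed for your first display---$\F_t$-measurability of $D_t$ already suffices to make the process adapted and to pull $D_t$ through the $\F_t$-conditional expectation.
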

\begin{lemma}\label{abscont}
Let $\T=\nat_0\cup\{\infty\}$. For $\bar Q=\Q\otimes\gamma\in\map$ with the density process $(M_t)_{t\in\nat_0}$ of $Q$ with respect to $P$, and $M_\infty=\lim_{t\to\infty}M_t$ $P$-a.s.,  we have
 \[
  \gamma_\infty>0\quad Q\text{-a.s.}\quad\Leftrightarrow\quad Q\in\ma\quad\text{and}\quad \gamma_\infty>0\quad P\text{-a.s.}\;\,\text{on}\;\,\{M_\infty>0\}.
 \]
\end{lemma}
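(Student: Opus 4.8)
The plan is to read off both implications directly from the Lebesgue decomposition \eqref{leb} of $Q$ with respect to $P$ on $\F_\infty$, applied to the $\F_\infty$-measurable event $A:=\{\gamma_\infty=0\}$ (note that $\gamma_\infty$ is $\F_\infty$-measurable since $\gamma$ is adapted, and that $M_\infty$ exists both $P$- and $Q$-a.s.). Formula \eqref{leb} reads $Q[A]=E_P[I_AM_\infty]+Q[A\cap\{M_\infty=\infty\}]$, and since the two summands are nonnegative, $Q[A]=0$ holds if and only if $I_AM_\infty=0$ $P$-a.s.\ and $Q[A\cap\{M_\infty=\infty\}]=0$. This single observation will carry the whole argument.

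For the implication ``$\Rightarrow$'' I would start from $\gamma_\infty>0$ $Q$-a.s. The normalization built into the definition of $\Gamma(Q)$ for $\T=\nat_0\cup\{\infty\}$ says $\gamma_\infty=0$ $Q$-a.s.\ on $\{M_\infty=\infty\}$, so $Q[M_\infty=\infty]\le Q[\gamma_\infty=0]=0$; by \eqref{leb} this forces the singular part of $Q$ to vanish, i.e.\ $Q\ll P$ on $\F_\infty$, which is $Q\in\ma$. Feeding $A=\{\gamma_\infty=0\}$ into \eqref{leb} then yields $0=Q[A]=E_P[I_AM_\infty]$, hence $M_\infty=0$ $P$-a.s.\ on $\{\gamma_\infty=0\}$, which is exactly $\gamma_\infty>0$ $P$-a.s.\ on $\{M_\infty>0\}$.

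For the implication ``$\Leftarrow$'' I would assume $Q\in\ma$ and $\gamma_\infty>0$ $P$-a.s.\ on $\{M_\infty>0\}$. Then $Q[M_\infty=\infty]=0$, so the singular term in \eqref{leb} drops and $Q[\{\gamma_\infty=0\}]=E_P[I_{\{\gamma_\infty=0\}}M_\infty]$. The integrand vanishes $P$-a.s.: on $\{M_\infty=0\}$ trivially, and on $\{M_\infty>0\}$ because $P[\{\gamma_\infty=0\}\cap\{M_\infty>0\}]=0$ by hypothesis. Hence $Q[\gamma_\infty=0]=0$, i.e.\ $\gamma_\infty>0$ $Q$-a.s.

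I do not expect a genuine obstacle here; the only point requiring care is the bookkeeping of which reference measure governs each almost-sure statement, together with the structural observation that on $\{\gamma_\infty=0\}$ the singular part of $Q$ on $\F_\infty$ is annihilated — by the $\Gamma(Q)$-normalization at time $\infty$ in the first implication, and by the standing hypothesis $Q\in\ma$ in the second.
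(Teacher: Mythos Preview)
Your argument is correct. Both implications are cleanly read off from the Lebesgue decomposition \eqref{leb} applied to $A=\{\gamma_\infty=0\}$, together with the built-in normalization $\gamma_\infty=0$ $Q$-a.s.\ on $\{M_\infty=\infty\}$ from the definition of $\Gamma(Q)$.

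The paper proceeds differently: instead of invoking \eqref{leb} for the set $A$, it routes the computation through the product space. Using \eqref{expqg} and the density formula \eqref{dens} it derives the single identity
\[
Q[\gamma_\infty>0]=E_P\bigl[M_\infty I_{\{\gamma_\infty>0\}}\bigr],
\]
and then both directions follow at once from the fact that $E_P[M_\infty]\le 1$ with equality iff $Q\ll P$. Your approach is more elementary in that it stays entirely on $(\Omega,\F_\infty)$ and never touches $\bar P$ or $\bar Q$; it also makes the role of the $\Gamma(Q)$-normalization at $t=\infty$ explicit in the forward direction. The paper's approach is slicker in that one identity carries both implications, and it exploits the product-space machinery already in place. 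Either way the content is the same: your Lebesgue-based equation $Q[A]=E_P[I_AM_\infty]+Q[A\cap\{M_\infty=\infty\}]$ with the second summand killed by the $\Gamma(Q)$-normalization is exactly the complement of the paper's identity.
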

\begin{proof}
 We have
 \begin{align*}
 Q[\gamma_\infty>0]&=E_Q\lp\frac{\gamma_\infty}{\gamma_\infty}I_{\{\gamma_\infty>0\}}\rp=E_{\bar Q}\lp \frac{1}{\gamma_\infty}I_{\{\gamma_\infty>0\}}I_{\{\infty\}}\rp=E_{\bar P}\lp \frac{Z_\infty}{\gamma_\infty}I_{\{\gamma_\infty>0\}}I_{\{\infty\}}\rp\\
 &=E_P\lp \frac{Z_\infty\mu_\infty}{\gamma_\infty}I_{\{\gamma_\infty>0\}}\rp=E_P\lp M_\infty I_{\{\gamma_\infty>0\}}\rp,
 \end{align*}
where we have used \eqref{expqg} and \eqref{dens}. The claim follows by noting that $Q\ll P$ if and only if $E_P[M_\infty]=1$  for $Q\in\mal$ due to \eqref{leb}.
\end{proof}

Our robust representation of a conditional convex risk measure $\rt$ involves probability measures $\bar Q=Q\otimes\gamma$ which coincide on the $\sigma$-field $\bar\F_t$. This can be characterized as follows in terms of $Q$ and $\gamma$.
\begin{lemma}\label{rmkcon}
Let $\bar{Q}^1,\bar{Q}^2\in\map$ with the decompositions  $\bar Q^i=Q^i\otimes \gamma^i=Q^i\otimes D^i$, $i=1,2$. Then the following relation holds for all $t\in\T$
\[
\bar{Q}^1=\bar{Q}^2\;\,\text{on}\;\,{\bar\F_t}\quad \Longleftrightarrow\quad Q^1=Q^2\;\,\text{on}\;\,{\F_t}\cap\{D^1_t>0\}\;\, \textrm{and}\;\, \gamma^1_s=\gamma^2_s\;\, Q^1\text{-a.s.}\;\,\forall\, s<t.
\]
\end{lemma}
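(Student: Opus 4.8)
The plan is to rewrite ``$\bar Q^1=\bar Q^2$ on $\bar\F_t$'' as agreement of the two measures on a generating $\pi$-system of $\bar\F_t$, disintegrate each side via Theorem~\ref{propq}, and then pass back to $(Q^i,\gamma^i)$ using the (only $\bar Q$-a.s.) uniqueness of the It\^o--Watanabe decomposition recorded in Remark~\ref{rmku}. First I would note that the sets $\Omega\times\{j\}$ ($j<t$) and $\Omega\times\T_t$ form an $\bar\F_t$-measurable partition of $\bar\Omega$, and that $\{A_j\times\{j\}:j<t,\,A_j\in\F_j\}\cup\{A_t\times\T_t:A_t\in\F_t\}\cup\{\emptyset\}$ is a $\pi$-system generating $\bar\F_t$; so it suffices to compare $\bar Q^1,\bar Q^2$ on this family. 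Applying the disintegration \eqref{expqg} to $1_{A_j\times\{j\}}$ and to $1_{A_t\times\T_t}$ and using \eqref{gammatodisc2} yields $\bar Q^i(A_j\times\{j\})=E_{Q^i}[\gamma_j^i 1_{A_j}]$ and $\bar Q^i(A_t\times\T_t)=E_{Q^i}[D_t^i 1_{A_t}]$. Writing $M^i$ for the $P$-density process of $Q^i$ and recalling that $\gamma_j^i$, $D_t^i$ are $\F_j$- resp.\ $\F_{t-1}$-measurable, this identifies ``$\bar Q^1=\bar Q^2$ on $\bar\F_t$'' with the two relations
\[
M_j^1\gamma_j^1=M_j^2\gamma_j^2\ \ P\text{-a.s.}\ \ (j<t)\qquad\text{and}\qquad M_t^1D_t^1=M_t^2D_t^2\ \ P\text{-a.s.}
\]
By Remark~\ref{rmku}(2) the second one reads $U_t^1=U_t^2$, where $U^i$ is the supermartingale of Remark~\ref{rmku} attached to $\bar Q^i$ and $(M^i,D^i)$ is an It\^o--Watanabe decomposition of $U^i$.

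Next I would observe that conditioning the identity $U_s^i=E_P[\sum_{r=s}^{t-1}M_r^i\gamma_r^i\mid\F_s]+E_P[U_t^i\mid\F_s]$ (valid for $s\le t$) on $\F_s$ turns the two relations above into: $M_j^1\gamma_j^1=M_j^2\gamma_j^2$ for $j<t$, together with $U_s^1=U_s^2=:U_s$ $P$-a.s.\ for all $s\le t$. On $\{U_s>0\}$ the uniqueness part of Proposition~\ref{lemiw} (cf.\ Remark~\ref{rmku}(2)) then forces $M_s^1=M_s^2$ and $D_s^1=D_s^2$ for $s\le t$, hence $\gamma_s^1=\gamma_s^2$ on $\{U_s>0\}$ for $s<t$ by \eqref{disctogamma}. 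Since $\{M_t^1>0\}\cap\{D_t^1>0\}=\{U_t>0\}$ and $\{M_t^1=0\}$ is $Q^1$-null, this already gives ``$Q^1=Q^2$ on $\F_t\cap\{D_t^1>0\}$'' once one checks that the residual set $\{D_t^1>0,\,M_t^1=0\}$ is $Q^2$-negligible; likewise, for $s<t$ one has to check that $\{\gamma_s^1\ne\gamma_s^2\}\cap\{M_s^1>0\}$ is $P$-null, which yields ``$\gamma_s^1=\gamma_s^2$ $Q^1$-a.s.''. The converse implication is then the same null-set analysis run backwards: from the two stated conditions one gets $D_s^1=D_s^2$ and $\gamma_s^1=\gamma_s^2$ $Q^1$-a.s.\ (by \eqref{gammatodisc}) and $M_t^1=M_t^2$ on $\{D_t^1>0\}$, and re-derives the two relations of the first paragraph.

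The genuinely delicate part is exactly this null-set bookkeeping, and it is precisely why the localizations ``on $\F_t\cap\{D_t^1>0\}$'' and ``$Q^1$-a.s.'' appear in the statement: since the decomposition $\bar Q=Q\otimes\gamma=Q\otimes D$ is unique only $\bar Q$-a.s., the equalities $M^1=M^2$ and $D^1=D^2$ are available only where $U>0$, and one must argue that the exceptional events on which $U$ vanishes but $D$ does not are harmless for the localized equalities. The way I would handle this is to replace such an event by its $\F_{t-1}$- (resp.\ $\F_{s-1}$-) measurable hull --- e.g.\ $\{D_t^1>0,\,D_t^2=0\}\in\F_{t-1}$, whose products with all fibers $\{j\}$, $j<t$, and with $\T_t$ again lie in $\bar\F_t$ --- and then invoke the normalization $\sum_{r\in\T}\gamma_r=1$ $Q$-a.s.\ together with the monotonicity of $D$ (so that $D_s=0$ forces $\gamma_s=0$) to conclude that this hull is null. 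I expect everything else to be routine, so this is the step I would allow most room for.
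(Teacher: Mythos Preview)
Your forward direction is sound and takes a route that differs from the paper's. Both arguments first extract the same identities $M_j^1\gamma_j^1=M_j^2\gamma_j^2$ ($j<t$) and $M_t^1D_t^1=M_t^2D_t^2$; the paper does this by comparing the conditional densities $E_{\bar P}[\bar Z^i\mid\bar\F_t]$ via \eqref{cexpd}, you via the $\pi$-system. From there the paper proceeds by a direct computation: for $A\in\F_{t-1}$ it writes $Q^i(A)-\bar Q^i(A\times\T_t)=E_P[M_{t-1}^i(1-D_t^i)1_A]$ and, using $\bar Q^1=\bar Q^2$ on $\bar\F_t$, concludes $Q^1=Q^2$ on $\F_{t-1}$; then $\gamma_s^1=\gamma_s^2$ $Q^1$-a.s.\ follows from the first identity, hence $D_t^1=D_t^2$ $Q^i$-a.s., and finally $M_t^1=M_t^2$ on $\{D_t^1>0\}$. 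Your route---showing $U_s^1=U_s^2$ for $s\le t$ and invoking the uniqueness part of Proposition~\ref{lemiw}---is more structural; it makes the localization $\{D_t^1>0\}$ emerge naturally as the set $\{U_t>0\}$ where the decomposition is unique, and the null-set analysis you sketch can be completed using the Remark after Proposition~\ref{lemiw} (on $\{\tau_0=s\}$, whether $D_s$ vanishes depends only on $U$).

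The converse, however, has a genuine gap which your sketch does not close---and in fact cannot close, because the implication ``$\Leftarrow$'' fails as stated. From ``$\gamma_s^1=\gamma_s^2$ $Q^1$-a.s.'' you control $\gamma_s^2$ only on $\{M_s^1>0\}$, and from ``$Q^1=Q^2$ on $\F_t\cap\{D_t^1>0\}$'' you get $M_t^1=M_t^2$ only on $\{D_t^1>0\}$; this is not enough to re-derive $M_s^1\gamma_s^1=M_s^2\gamma_s^2$ $P$-a.s. Concretely, take $\Omega=\{a,b,c\}$ with $P$ uniform, $\F_0$ trivial, $\F_1=\F_2=2^\Omega$, $T=2$; set $Q^1=\tfrac12(\delta_a+\delta_b)$, $Q^2=\tfrac12(\delta_a+\delta_c)$, $\gamma_0^i=0$, $\gamma_1^1=1_{\{b,c\}}$, $\gamma_1^2=1_{\{b\}}$. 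Then $\{D_2^1>0\}=\{a\}$ and $Q^1=Q^2$ there, and $\gamma_1^1=\gamma_1^2$ on $\{a,b\}$ (i.e.\ $Q^1$-a.s.), so the right-hand side holds; yet $\bar Q^1(\{b\}\times\{1\})=\tfrac12\ne 0=\bar Q^2(\{b\}\times\{1\})$. The paper's ``the inverse implication works in the same way'' suffers from the same defect. What survives---and what the sole application in Theorem~\ref{robdarpr} actually uses---is the special case $\bar Q^1=\bar P$, where $D_t^1>0$ everywhere and $Q^1$-a.s.\ coincides with $P$-a.s.; there your backwards argument goes through without difficulty.
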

\begin{proof}
We denote by $\bar Z^i=(Z^i_t)\zte$ the density of $\bar Q^i$ with respect to  $\bar P$, by $(M^i_t)\zt$ the density process of $Q^i$ with respect to  $P$, and $M^i_\infty=\lim_{t\to\infty}M_t^i$ $P$-a.s.\ if $\T=\nat_0\cup\{\infty\}$, $i=1,2$. Assume that $\bar{Q}^1=\bar{Q}^2$ on ${\bar\F_t}$ for some $t\in\T$, i.e., $E_{\bar P}[\bar{Z}^1|\bar{\F}_t]=E_{\bar P}[\bar{Z}^2|\bar{\F}_t]$, where 
\begin{eqnarray}\label{cexpd}
E_{\bar P}[\bar{Z^i}|\bar{\F}_t]&=&Z^i_01_{\{0\}}+\ldots+Z^i_{t-1}1_{\{t-1\}}+\frac{1}{\sum_{s\in\T_t}\mu_s}E_P\lp\sum_{s\in\T_t}Z^i_s\mu_s\mk\F_t\rp1_{\T_t} \nonumber\\
&=&\frac{\gamma_0^iM_0^i}{\mu_0}1_{\{0\}}+\ldots+\frac{\gamma_{t-1}^iM_{t-1}^i}{\mu_{t-1}}1_{\{t-1\}}+\frac{D_t^iM_t^i}{\sum_{s\in\T_t}\mu_s}1_{\T_t},\qquad i=1,2
\end{eqnarray}
by \eqref{dens} and Corollary~\ref{cor:condexp}. This implies
\begin{equation}\label{coneq}
M^1_s\gamma^1_s=M^2_s\gamma^2_s\quad \forall\; s<t\quad \textrm{and}\quad M^1_tD^1_t=M^2_tD^2_t.
\end{equation}
Hence for any $A\in\F_{t-1}$ we obtain
\begin{align*}
E_P\lp\sum_{s=0}^{t-1}\gamma^1_sM^1_s1_A\rp&=E_P\lp M^1_{t-1}\sum_{s=0}^{t-1}\gamma^1_s1_A\rp=E_P\lp M^1_{t-1}(1-D^1_t)1_A\rp\\
&=Q^1(A)-\bar{Q}^1(A\times\T_t)\\
&=Q^1(A)-\bar{Q}^2(A\times\T_t),
\end{align*}
where the last equality follows since $A\times\T_t\in\bar{\F}_t$ and $\bar{Q}^1=\bar{Q}^2$ on ${\bar\F_t}$. In the same way we get
\[
E\lp\sum_{s=0}^{t-1}\gamma^2_sM^2_s1_A\rp=Q^2(A)-\bar{Q}^2(A\times\T_t).
\]
Therefore $Q^1=Q^2$ on ${\F_{t-1}}$, and by (\ref{coneq}) $\gamma^1_s=\gamma^2_s\;Q^1$-a.s. for all $s<t$. In particular $D^1_t=D^2_t\;Q^1$- and $Q^2$-a.s., which in turn implies $Q^1=Q^2$ on ${\F_t}\cap\{D^1_t>0\}$ due to (\ref{coneq}).\\
 The proof of the inverse implication works in the same way.
\end{proof}

\bibliographystyle{plain}

\end{document}